\documentclass[envcountsame]{llncs}
\bibliographystyle{splncs03}

\usepackage{algorithm}
\usepackage{ntabbing}
\usepackage{etoolbox,fp}

\usepackage{amssymb,amsmath}

\usepackage{graphicx}
\usepackage{color}
\usepackage{tikz}
\usetikzlibrary{shapes,arrows,decorations.markings}

\usepackage{enumitem}

\newtheorem{thm}{Theorem}
\newtheorem{lem}[thm]{Lemma}
\newtheorem{corol}[thm]{Corollary}
\newtheorem{propo}[thm]{Proposition}
\newtheorem{defi}[thm]{Definition}

\newcommand{\PF}{\noindent {\bf Proof: }}
\newcommand{\PFof}{\noindent {\bf Proof of }}

\newcommand{\QED}{\mbox{}\hspace*{\fill}{$\Box$}\medskip}

\makeatletter
\newlength\tdima
\newcommand\tabright[1]{
      \setlength\tdima{\linewidth}
      \addtolength\tdima{\@totalleftmargin}
      \addtolength\tdima{-\dimen\@curtab}
      \makebox[\tdima][r]{#1}}
\makeatother

\def\algcomm#1{\>\tabright{{\em // #1}}}
\def\algincomm#1{{\em // #1}}

\newcommand{\refQ}{S_{\mbox{\footnotesize refine}}}
\newcommand{\RefCol}{r}
\newcommand{\AdjCol}{\mbox{Colors}_{\mbox{\footnotesize adj}}}
\newcommand{\SortedAdjCol}{\mbox{Colors}_{\mbox{\footnotesize split}}}
\newcommand{\SplitCol}{s}
\newcommand{\cdeg}{d^+_{r}}
\newcommand{\ccdeg}{\mbox{cdeg}}
\newcommand{\maxcdeg}{\mbox{max}\ccdeg}

\newcommand{\biggestclass}{b}
\newcommand{\NewCol}{f}		
\newcommand{\maxcol}{k}		
\newcommand{\NumDeg}{\mbox{num}\ccdeg}
\newcommand{\VC}{C}
\newcommand{\NC}{A}
\newcommand{\cdegDom}{D}
\newcommand{\NewColSet}{I}
\newcommand{\COLOR}{\mbox{colour}}
\newcommand{\mincdeg}{\mbox{min}\ccdeg}

\newcommand{\wasinQ}{\mbox{instack}}

\newcommand{\refby}{\preceq}
\newcommand{\strictrefby}{\prec}

\newcommand{\cost}{\operatorname{cost}}
\newcommand{\bcost}{\operatorname{bcost}}

\newcommand{\pictureheight}{3cm}
\newcommand{\gadget}{\mbox{AND}}
\newcommand{\X}{X}
\newcommand{\XX}{\mathcal X}
\newcommand{\Y}{Y}
\newcommand{\YY}{\mathcal Y}

\newcommand{\dist}{\mbox{dist}}
\newcommand{\blk}{\mathcal{B}}
\newcommand{\AND}{\gadget}
\newcommand{\bs}{\setminus}
\newcommand{\disjunion}{\uplus}

\newcommand{\cref}{\simeq}

\definecolor{gruen}{rgb}{0,0.6,0.2}

\newcounter{rbcounter}
\setlength{\marginparwidth}{1.3in}

\newcommand{\LL}{\textsf{\upshape\mdseries L}}
\newcommand{\LC}{\textsf{\upshape\mdseries C}}

\pagestyle{plain}
\begin{document}

\title{Tight Lower and Upper Bounds for the Complexity of Canonical Colour Refinement}

\author{
Christoph Berkholz\inst{1} \and 
Paul Bonsma\inst{2}\thanks{Supported by the European Community's Seventh Framework Programme (FP7/2007-2013), grant agreement n$^{\circ}$ 317662.} \and
Martin Grohe\inst{3}\thanks{Supported by the German Research
Foundation DFG Koselleck Grant GR 1492/14-1}
}

\institute{Humboldt-Universit\"at zu Berlin, Germany\\
\email{berkholz@informatik.hu-berlin.de}
\and
University of Twente, the Netherlands\\
\email{p.s.bonsma@ewi.utwente.nl}
\and
RWTH Aachen University, Germany\\
\email{grohe@informatik.rwth-aachen.de}
}

\date{\today}

\maketitle

\begin{abstract}
An assignment of colours to the vertices of a graph is {\em stable} if any two vertices of the same colour have identically coloured neighbourhoods. The goal of {\em colour refinement} is to find a stable colouring that uses a minimum number of colours. This is a widely used subroutine for graph isomorphism testing algorithms, since any automorphism needs to be colour preserving.
We give an $O((m+n)\log n)$ algorithm for finding a {\em canonical} version of such a stable colouring, on graphs with $n$ vertices and $m$ edges. 
We show that no faster algorithm is possible, under some modest assumptions about the type of algorithm, which captures all known colour refinement algorithms.

\medskip
{\em Key words:} Graph isomorphism, colour refinement, partition refinement, canonical labelling
\end{abstract}

\section{Introduction}
{\renewcommand{\thefootnote}{$\dagger$} 
\footnotetext{An extended abstract of this paper has appeared in the proceedings of ESA'13, LNCS 8125, pp 145--156.}}
\emph{Colour refinement} (also known as \emph{naive vertex classification}) is a very simple, yet extremely useful algorithmic routine for graph isomorphism testing. It classifies the vertices by iteratively refining a colouring of the vertices as follows. Initially, all vertices have the same colour. Then in each step of the iteration, two vertices that currently have the same colour get different colours if for some colour $c$ they have a different number of neighbours of colour $c$. The process stops if no further refinement is achieved, resulting in a \emph{stable colouring} of the graph. To use colour refinement as an isomorphism test, we can run it on the disjoint union of two graphs. 
Any isomorphism needs to map vertices to vertices of the same colour. 
So, if the stable colouring differs on the two graphs, that is, if for some colour $c$, the graphs have a different number of vertices of colour $c$, then we know they are nonisomorphic, and we say that colour refinement \emph{distinguishes} the two graphs. Babai, Erd\"os, and Selkow~\cite{baberdsel80} showed that colour 
refinement distinguishes almost all graphs (in the $G(n,1/2)$ model). In fact, they proved the stronger statement that the stable colouring is discrete on almost all graphs, that is, every vertex gets its own colour. On the other hand, colour refinement fails to distinguish any two regular graphs with the same number of vertices, 
such as a $6$-cycle and the disjoint union of two triangles.

Colour refinement is not only useful as a simple isomorphism test in itself, but also as a subroutine for more sophisticated algorithms, both in theory and practice. For example, Babai and Luks's \cite{bab81,babluk83} $O(2^{\sqrt{n\log n}})$-algorithm --- this is still the best known worst-case running time for isomorphism testing --- uses colour refinement as a subroutine, and most practical graph isomorphism tools~(for example, \cite{mck81,darlifsak+04,junkas07,pip11}), starting with McKay's ``Nauty'' \cite{mck81,mck07}, are based on the \emph{individualisation refinement} paradigm (see also~\cite{Miyazaki}). The basic idea of these algorithms is to recursively compute a \emph{canonical labelling} of a given graph, which may already have an initial colouring of its vertices, as follows. We run colour refinement starting from the initial colouring until a stable colouring is reached. 
If the stable  colouring is discrete, then this already gives us a canonical labelling (provided the colours assigned by colour refinement are canonical,
 see 
 below). If not, we pick some colour $c$ with more than one vertex. Then for each vertex $v$ of colour $c$, we modify the stable colouring by assigning a fresh colour to $v$ (that is, we 
``individualise'' $v$) and recursively call the algorithm on the resulting vertex-coloured graph. Then for each $v$ we get a canonically labelled version of our graph, and we return the lexicographically smallest among these. (More precisely, each canonical labelling of a graph yields a canonical string encoding, and we compare these strings lexicographically.) To turn this simple procedure into a practically useful algorithm, various heuristics are applied to prune the search tree. They exploit automorphisms of the graph found during the search. However, crucial for any implementation of such an algorithm is a very efficient colour refinement procedure, because colour refinement is called at every node of the search tree.

Colour refinement can be implemented to run in time $O((n+m)\log n)$, where $n$ is the number of vertices and $m$ the number of edges of the input graph. 
To our knowledge, this was first been proved by Cardon and Crochemore~\cite{carcro82}. Later Paige and Tarjan~\cite[p.982]{paitar87} 
sketched 
a simpler algorithm. Both algorithms are based on the partitioning techniques introduced by Hopcroft~\cite{hop71} for minimising finite automata. However, an issue that is completely neglected in the literature is that, at least for individualisation refinement, we need a version of colour refinement that produces a \emph{canonical colouring}. That is, if $f$ is an isomorphism from a graph $G$ to a graph $H$, then for all vertices $v$ of $G$, $v$ and $f(v)$ should get the same colour in the 
respective 
stable colourings of $G$ and $H$. However, neither of the algorithms analysed in the literature seem to produce canonical colourings. 
We present an implementation of colour refinement that computes a canonical stable colouring in time $O((n+m)\log n)$.
Ignoring the canonical part, our algorithmic techniques are similar to known results: like~\cite{paitar87} and various other papers, we use Hopcroft's strategy of `ignoring the largest new cell', after splitting a cell~\cite{hop71}. Our data structures have some similarities to those described by Junttila and Kaski~\cite{junkas07}. Nevertheless, since~\cite{junkas07} contains no complexity analysis, and~\cite{paitar87} omits various (nontrivial) implementation details, it seems that the current paper gives the first detailed description 
of an $O((m+n)\log n)$ algorithm that uses this 
strategy. On a high level, our algorithm is also quite similar to McKay's {\em canonical} colour refinement algorithm~\cite[Alg.~2.5]{mck81}, but with a few key differences which enable an $O((n+m) \log n)$ implementation.
McKay~\cite{mck81} gave an $O(n^2 \log n)$ implementation using adjacency matrices, which is the previous fastest algorithm for this problem.
Our algorithm is described and analysed in Section~\ref{sec:alg}. In Section~\ref{ssec:extensions}, we discuss extensions: We show how the algorithm can be applied to directed, undirected and edge coloured graphs, and how the complexity bound in fact applies to an entire branch of an individualisation refinement algorithm.

Now the question arises whether colour refinement can be implemented in linear time. After various attempts, we started to believe that it cannot.
Of course with currently known techniques one cannot expect to disprove the existence of a linear time algorithm for the standard (RAM) computation model, or for similar general computation models. 
Instead, we prove a tight lower bound for a restricted, but very broad class of algorithms. In this sense, our result is comparable to the lower bounds for comparison based sorting algorithm.
Actually, our class of \emph{partition-refinement based algorithms} captures all known colour refinement algorithms, and actually every reasonable algorithmic strategy we could think of.
We use the following assumptions. (See Sections~\ref{sec:prelim} and~\ref{sec:lowerbound} for precise definitions.)
Colour refinement algorithms start with a unit partition (which has one {\em cell} $V(G)$), and iteratively refine this until a stable colouring is obtained. This is done using {\em refining operations}: choose a union of current partition cells as {\em refining set} $R$, and choose another (possibly overlapping) union of partition cells $S$. Cells in $S$ are split up if their neighbourhoods in $R$ provide a reason for this. (That is, two vertices in a cell in $S$ remain in the same cell only if they have the same number of neighbours in every cell in $R$.)
This operation requires considering all edges between $R$ and $S$, so the number of such edges is a very reasonable and modest lower bound for the complexity of such a refining step; we call this the {\em cost} of the operation. 
We note that a naive algorithm might choose $R=S=V(G)$ in every iteration. This then requires time $\Omega(mn)$ on graphs that require a linear number of refining operations, such as paths. Therefore, all fast algorithms are based on choosing $R$ and $S$ smartly (and on implementing refining steps efficiently).

For our main lower bound result, we construct a class of instances such that any possible sequence of refining operations that yields the stable partition has total cost at least $\Omega((m+n)\log n)$.
Note that it is surprising that a tight lower bound can be obtained in this model. Indeed, cost {\em upper} bounds in this model would not necessarily yield corresponding algorithms, since firstly we allow the sets $R$ and $S$ to be chosen nondeterministically, and secondly, it is not even clear how to refine $S$ using $R$ in time proportional to the number of edges between these classes. However, as we prove a lower bound, this makes our result only stronger.
An alternative formulation of our lower bound result is to model the class of nondeterministic partition-refinement based algorithms as ``proof system'' and then proves lower bounds on the length of derivations (see the first author's PhD-thesis \cite{ber14} for details).
We formulate the lower bound result for undirected graphs and non-canonical colour refinement, so that it also holds for digraphs, and canonical colour refinement. These results are presented in Section~\ref{sec:lowerbound}.
Our construction also yields corresponding lower bounds for the problems of computing the bisimilarity relation on a transition system and for computing the equivalence classes induced by the 2-variable fragment of first-order logic $\LL^2$ on a structure (see Section~\ref{sec:bisim}).

\section{Preliminaries}
\label{sec:prelim}

For an undirected (simple) graph $G$, $N(v)$ denotes the set of neighbours of $v\in V(G)$, and $d(v)=|N(v)|$ its degree. For a digraph, $N^+(v)$ and $N^-(v)$ denote the out- and in-neighbourhoods, and $d^+(v)=|N^+(v)|$ resp.\ $d^-(v)=|N^-(v)|$ the out- and in-degree, respectively.
A partition $\pi$ of a set $V$ is a set $\{S_1,\ldots,S_k\}$ of pairwise disjoint nonempty subsets of $V$, such that $\cup_{i=1}^k S_i=V$. The sets $S_i$ are called {\em cells} of $\pi$. The {\em order} of $\pi$ is the number of cells $|\pi|$.
A partition $\pi$ is {\em discrete} if every cell has size 1, and {\em unit} if it has exactly one cell.
Given a partition $\pi$ of $V$, and two elements $u,v\in V$, we write $u\approx_{\pi} v$ if and only if there exists a cell $S\in \pi$ with $u,v\in S$. 
We say that a set $V'\subseteq V$ is {\em $\pi$-closed} if it is the union of a number of cells of $\pi$. In other words, if $u\approx_{\pi} v$ and $u\in V'$ then $v\in V'$.
For any subset $V'\subseteq V$, $\pi$ {\em induces} a partition $\pi[V']$ on $V'$, which is defined by $u\approx_{\pi[V']} v$ if and only if $u\approx_{\pi} v$, for all $u,v\in V'$.

Let $G=(V,E)$ be a graph. 
A partition $\pi$ of $V$ is {\em stable} for $G$ if for every pair of vertices $u,v\in V$ with $u\approx_{\pi} v$ and $R\in \pi$, it holds that $|N(u)\cap R|=|N(v)\cap R|$.
If $G$ is a digraph, then $|N^+(u)\cap R|=|N^+(v)\cap R|$ should hold. 
For readability, all further definitions and propositions in this section are formulated for (undirected) graphs, but the corresponding statements also hold for digraphs (replace degrees/neighbourhoods by out-degrees/out-neighbourhoods).
One can see that if $\pi$ is stable and $d(u)\not=d(v)$, 
then $u\not\approx_{\pi} v$, which we will use throughout.

A partition $\rho$ of $V$ {\em refines} a partition $\pi$ of $V$ if for every $u,v\in V$, $u\approx_{\rho} v$ implies $u\approx_{\pi} v$. (In other words: every cell of $\pi$ is $\rho$-closed.)
If $\rho$ refines $\pi$, we write 
$\pi \refby \rho$. 
If in addition $\rho\not=\pi$, then we also write 
$\pi \strictrefby \rho$.
Note that 
$\refby$ is a partial order on all partitions of $V$.
\begin{defi}
Let $G$ be a graph, and let $\pi$ and $\pi'$ be partitions of $V(G)$. For vertex sets $R,S\subseteq V(G)$ that are $\pi$-closed, we say that $\pi'$ is obtained from $\pi$ by a {\em refining operation $(R,S)$} if 
\begin{itemize}
\item 
for every $S'\in \pi$ with $S'\cap S=\emptyset$, it holds that $S'\in \pi'$, and 
\item 
for every $u,v\in S$: $u\approx_{\pi'} v$ if and only if $u\approx_{\pi} v$ and for all $R'\in \pi$ with $R'\subseteq R$, $|N(u)\cap R'|=|N(v)\cap R'|$ holds.
\end{itemize}
\end{defi}

Note that if $\pi'$ is obtained from $\pi$ by a refining operation $(R,S)$, then $\pi\refby \pi'$. We say that the operation $(R,S)$ is {\em effective} if $\pi\strictrefby \pi'$. In this case, at least one cell $C\in \pi$ is {\em split}, which means that $C\not\in \pi'$. Note that an effective refining operation exists for $\pi$ if and only if $\pi$ is unstable. In addition, the next proposition says that if the goal is to obtain a (coarsest) stable partition, then applying any refining operation is safe.

\begin{propo}
\label{propo:RefOpSafe}
Let $\pi'$ be obtained from $\pi$ by a refining operation $(R,S)$. If $\rho$ is a stable partition with $\pi\refby \rho$, then $\pi\refby \pi'\refby \rho$.
\end{propo}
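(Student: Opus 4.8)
The plan is to invoke the remark already made just before the proposition, namely that $\pi\refby\pi'$; it therefore remains to establish $\pi'\refby\rho$. Unwinding the definition, this means: whenever $u\approx_\rho v$, we must show $u\approx_{\pi'}v$.

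So I would fix $u,v\in V(G)$ with $u\approx_\rho v$. Since $\pi\refby\rho$, we also have $u\approx_\pi v$; let $C\in\pi$ be the cell containing both of them. Because $S$ is $\pi$-closed, the cell $C$ either is disjoint from $S$ or is entirely contained in $S$ — there is no intermediate possibility, and I would record this dichotomy explicitly before splitting into the two cases. In the first case, $C\cap S=\emptyset$, so the first bullet of the definition of a refining operation gives $C\in\pi'$, and hence $u\approx_{\pi'}v$ at once.

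In the second case, $C\subseteq S$, so in particular $u,v\in S$, and the second bullet reduces the claim $u\approx_{\pi'}v$ to showing that $|N(u)\cap R'|=|N(v)\cap R'|$ for every $R'\in\pi$ with $R'\subseteq R$ (the condition $u\approx_\pi v$ is already in hand). This is the one place where stability of $\rho$ is used: since $\pi\refby\rho$, each such $R'$ is a disjoint union $R'=D_1\cup\dots\cup D_t$ of cells $D_i\in\rho$; stability of $\rho$ together with $u\approx_\rho v$ gives $|N(u)\cap D_i|=|N(v)\cap D_i|$ for every $i$, and summing these equalities over $i$ yields $|N(u)\cap R'|=|N(v)\cap R'|$, as required. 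This proves $u\approx_{\pi'}v$ and hence $\pi'\refby\rho$; combined with $\pi\refby\pi'$ we obtain $\pi\refby\pi'\refby\rho$.

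The argument is short, and I do not expect a genuine obstacle. The only points that need care are keeping straight the direction of the refinement relation $\refby$ and justifying the dichotomy that a cell of $\pi$ meeting the $\pi$-closed set $S$ must lie inside $S$; the summation over the $\rho$-cells partitioning $R'$ is the sole substantive use of stability of $\rho$, but it is entirely elementary.
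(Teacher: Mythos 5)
Your proposal is correct and follows essentially the same route as the paper's proof: the key step in both is that each cell $R'\in\pi$ is a union of $\rho$-cells (since $\pi\refby\rho$), so stability of $\rho$ and $u\approx_\rho v$ give $|N(u)\cap R'|=|N(v)\cap R'|$ by summing over those $\rho$-cells, whence $u\approx_{\pi'}v$. Your version merely spells out the dichotomy (cell disjoint from $S$ versus contained in $S$) that the paper leaves implicit.
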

\PF
$\pi\refby \pi'$ follows immediately from the definitions.
Now consider $u,v$ with $u\approx_{\rho} v$, and thus $u\approx_{\pi} v$. Then for any $R'\in \pi$, $d_{R'}(u)=d_{R'}(v)$. This holds because $R'$ is a union of sets in $\rho$, and for all these this property holds since $\rho$ is stable. Therefore, $u\approx_{\pi'} v$.
\QED

A partition $\pi$ is a {\em coarsest} partition for a property $P$ if $\pi$ satisfies $P$, and there is no partition $\rho$ with $\rho \strictrefby \pi$ that also satisfies property $P$.

\begin{propo}
\label{propo:CoarsestStableUnique}
Let $G=(V,E)$ be a graph. For every partition $\pi$ of $V$, there is a unique coarsest stable partition $\rho$ that refines $\pi$.
\end{propo}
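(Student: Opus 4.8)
The plan is to prove existence and uniqueness of the coarsest stable partition refining $\pi$ by exhibiting it as an infimum in the refinement lattice. First I would observe that stable partitions refining $\pi$ do exist: the discrete partition is stable (vacuously, every cell has size one) and refines any $\pi$. So the set $\mathcal{S} = \{\rho : \rho \text{ is stable and } \pi \refby \rho\}$ is nonempty. The candidate for the coarsest element is the partition $\rho^*$ whose cells are the equivalence classes of the relation $\bigcap_{\rho\in\mathcal{S}} {\approx_\rho}$; equivalently, $u \approx_{\rho^*} v$ iff $u\approx_\rho v$ for every $\rho\in\mathcal S$. Since $\mathcal S$ is nonempty and $\refby$ is a partial order on partitions of $V$ (noted in the excerpt), $\rho^*$ is a well-defined partition, it refines $\pi$ (because each $\rho\in\mathcal S$ does), and by construction $\rho^* \refby \rho$ for every $\rho\in\mathcal S$, so nothing in $\mathcal S$ is a strict refinement of it. It therefore only remains to show $\rho^*$ itself is stable; then $\rho^*\in\mathcal S$ and is the unique coarsest member, since any other coarsest member $\rho'$ would satisfy $\rho^*\refby\rho'$ and, $\rho'$ being coarsest, $\rho'=\rho^*$.

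The heart of the argument is thus stability of $\rho^*$. Here I would use the machinery already set up: start from $\pi$ and apply an effective refining operation (if $\rho^*$ were unstable, I want to derive a contradiction, so instead let me argue directly). Take $u,v$ with $u\approx_{\rho^*}v$ and let $R^*\in\rho^*$; I must show $|N(u)\cap R^*| = |N(v)\cap R^*|$. The set $R^*$ is a single cell of $\rho^*$, hence $\rho$-closed for every $\rho\in\mathcal S$ (it is a union of cells of each such $\rho$, since $\rho^*\refby\rho$). Fix any $\rho\in\mathcal S$. Then $u\approx_\rho v$ (as $u\approx_{\rho^*}v$), and writing $R^* = \biguplus_i R_i$ with $R_i\in\rho$, stability of $\rho$ gives $|N(u)\cap R_i| = |N(v)\cap R_i|$ for each $i$; summing over $i$ yields $|N(u)\cap R^*| = |N(v)\cap R^*|$. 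Hence $\rho^*$ is stable.

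An alternative, perhaps cleaner, route for stability is to invoke Proposition~\ref{propo:RefOpSafe}: if $\rho^*$ were not stable, there is an effective refining operation $(R,S)$ on $\rho^*$ producing $\rho^* \strictrefby \rho''$. By Proposition~\ref{propo:RefOpSafe}, for every stable $\rho$ with $\rho^*\refby\rho$ we would get $\rho^*\refby\rho''\refby\rho$; in particular this holds for every $\rho\in\mathcal S$ (each such $\rho$ refines $\pi$ and hence, since $\pi\refby\rho^*\refby\rho$, also refines $\rho^*$). Then $\rho''$ refines $\pi$ and refines every member of $\mathcal S$, so $\rho'' \refby \bigcap_{\rho\in\mathcal S}{\approx_\rho}$'s partition $= \rho^*$, i.e.\ $\rho''\refby\rho^*$; combined with $\rho^*\strictrefby\rho''$ this is a contradiction. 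Either way, $\rho^*$ is stable, and the proof is complete.

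The main obstacle is essentially bookkeeping: being careful that ``intersection of equivalence relations indexed by a nonempty family'' really does define a partition (nonemptiness of $\mathcal S$ is what makes the cells nonempty and the union all of $V$), and that $R^*$ being a $\rho^*$-cell makes it $\rho$-closed for every finer-or-equal stable $\rho$ — this last point is exactly what lets stability of each $\rho\in\mathcal S$ transfer, via a finite sum, to stability of the coarser $\rho^*$. There is no deep difficulty, but the direction of the refinement order (coarser = fewer, larger cells) must be tracked consistently throughout.
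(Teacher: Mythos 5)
There is a genuine error in your construction: the direction of the refinement order is inverted. With the paper's convention, $\pi\refby\rho$ means $\rho$ refines $\pi$, i.e.\ $\approx_\rho\,\subseteq\,\approx_\pi$. Your $\rho^*$, defined by $u\approx_{\rho^*}v \Leftrightarrow u\approx_\rho v$ for \emph{every} $\rho\in\mathcal S$, is the \emph{finest} common refinement of $\mathcal S$, not a coarsest candidate: since the discrete partition lies in $\mathcal S$, your $\rho^*$ is literally the discrete partition. Consequently the claim ``by construction $\rho^*\refby\rho$ for every $\rho\in\mathcal S$'' is backwards (one has $\rho\refby\rho^*$), and the stability argument cannot be repaired as stated, because it needs both directions at once: the step ``$R^*$ is $\rho$-closed'' requires $\rho^*$ to be \emph{coarser} than $\rho$, while the step ``$u\approx_{\rho^*}v$ implies $u\approx_\rho v$'' requires $\rho^*$ to be \emph{finer} than $\rho$; both hold simultaneously only if all stable partitions refining $\pi$ coincide. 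The same inversion invalidates the premise $\rho^*\refby\rho$ needed to apply Proposition~\ref{propo:RefOpSafe} in your alternative route.

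The intended lattice-theoretic argument does work once you take the other extremal bound: let $\approx_{\rho^*}$ be the transitive closure of $\bigcup_{\rho\in\mathcal S}\approx_\rho$ (the finest common \emph{coarsening}). Then $\rho^*$ refines $\pi$ (as $\pi$ is a common coarsening of $\mathcal S$), every cell $R^*$ of $\rho^*$ is $\rho$-closed for each $\rho\in\mathcal S$, and stability transfers along chains: if $u\approx_{\rho^*}v$ there are $u=w_0\approx_{\rho_1}w_1\approx_{\rho_2}\cdots\approx_{\rho_t}w_t=v$ with $\rho_i\in\mathcal S$, and summing the stability condition of $\rho_i$ over the $\rho_i$-cells inside $R^*$ gives $|N(w_{i-1})\cap R^*|=|N(w_i)\cap R^*|$ for each $i$; alternatively, your second route via Proposition~\ref{propo:RefOpSafe} then goes through, since every $\rho\in\mathcal S$ refines $\rho^*$ and $\rho^*$ is the finest partition with that property. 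Note that this (corrected) construction is a genuinely different proof from the paper's, which establishes uniqueness by an extremal argument: choose a counterexample $\pi$ with $|\pi|$ maximum, apply one effective refining operation, and use Proposition~\ref{propo:RefOpSafe} to contradict maximality.
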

\PF
For any partition $\pi$, the discrete partition refines $\pi$ and is stable, so there exists a stable partition that refines $\pi$. Because $\refby$ is a partial order, there exists then at least one coarsest stable partition that refines $\pi$. 
Now suppose there exists a partition $\pi$ for which there exist at least two distinct coarsest stable partitions $\rho_1$ and $\rho_2$ that refine $\pi$. 
Choose such a partition $\pi$ so that $|\pi|$ is maximum.
Clearly, $\pi$ is not stable (otherwise $\rho_1=\pi=\rho_2$). 
So there exists at least one effective refining operation $(R,S)$ that can be applied to $\pi$. For the resulting partition $\pi'$, $|\pi'|>|\pi|$ holds.
By Proposition~\ref{propo:RefOpSafe}, both $\rho_1$ and $\rho_2$ refine $\pi'$ as well. But since $|\pi'|>|\pi|$, this contradicts the choice of $\pi$.
\QED

\section{A Fast Canonical Colour Refinement Algorithm}
\label{sec:alg}

\subsection{Canonical Colouring Methods}

A {\em colouring} of a (di)graph $G$ is a function $\alpha:V(G)\to \mathbb{Z}$. (Note that adjacent vertices may receive the same colour.) 
It is a {\em $k$-colouring} if for every $v\in V(G)$, $\alpha(v)\in \{1,\ldots,k\}$. 
Given a colouring $\alpha$ of $G$ and $i\in \mathbb{Z}$, we denote $C^\alpha_i=\{v\in V(G) \mid \alpha(v)=i\}$. The set $C^\alpha_i$ is called {\em colour class} $i$. If the colouring is clear from the context, we also omit the superscript. 
For any colouring $\alpha$ of a (di)graph $G$, the set $\{C^\alpha_i \mid i\in \mathbb{Z},\ C^\alpha_i\not=\emptyset\}$ is a partition of $V(G)$, which we will denote by $\pi_{\alpha}$.
We will call $\alpha$ {\em unit} or {\em stable} 
if $\pi_{\alpha}$ is unit or stable, respectively. 

Given two (di)graphs $G$ and $G'$, with respective colourings $\alpha$ and $\alpha'$, an isomorphism $h:V(G)\to V(G')$ is {\em colour preserving} for $\alpha$ and $\alpha'$ if for all $v\in V(G)$, $\alpha(v)=\alpha'(h(v))$ holds. 
A {\em colouring method} is a method for obtaining a colouring $\beta$ of a (di)graph $G$, given an initial $k$-colouring $\alpha$. 
(This method can be an algorithm, or simply a definition. Often, the initial colouring $\alpha$ is chosen to be the unit colouring.)
A colouring method (or algorithm) is called {\em canonical} if for any two isomorphic (di)graphs $G$ and $G'$ with initial colourings $\alpha$ resp.\ $\alpha'$ and isomorphism $h:V(G)\to V(G')$, the following holds: if $h$ is colour preserving for $\alpha$ and $\alpha'$, then $h$ is colour preserving for the resulting colourings $\beta$ and $\beta'$.
The resulting colouring $\beta$ itself is also called a {\em canonical colouring of $G$, starting from $\alpha$}. If $\alpha$ is the unit colouring, $\beta$ is simply called a {\em canonical colouring} of $G$. 

For instance, for simple undirected graphs $G$, the degree function $d$, which assigns the colour $d(v)=|N(v)|$ to each $v\in V(G)$, yields canonical colouring of $G$, because every isomorphism maps vertices to vertices of the same degree. (In other words: degrees are {\em isomorphism invariant}.)
Obviously, a canonical colouring method is useful for deducing information about possible isomorphisms between two graphs, especially when the resulting partition $\pi_\beta$ refines the initial partition $\pi_\alpha$. For details on isomorphism testing algorithms based on this idea, we refer to~\cite{mck81,Miyazaki}. 

In this section we give a fast canonical algorithm that for any (di)graph $G$ and colouring $\alpha$ of $G$, yields a colouring $\beta$ of $V(G)$ such that $\pi_\beta$ is the coarsest stable partition that refines $\pi_\alpha$. 
For ease of presentation, we require that the initial colouring $\alpha$ is a {\em surjective} $\ell$-colouring for some value $\ell$ (so every colour in $\{1,\ldots,\ell\}$ occurs at least once). The resulting colouring $\beta$ will then again be a surjective $k$-colouring for some value $k$.
In particular, if we choose $\alpha$ to be the unit colouring, then $\beta$ is a canonical colouring of $G$ such that $\pi_\beta$ is the unique coarsest stable partition of $G$. 
To obtain the most general result, we formulate the algorithm for digraphs. Variants and extensions are discussed in Section~\ref{ssec:extensions}.

\subsection{High-level Description and Correctness Proofs}
\label{sec:correctness}

In Algorithm~\ref{alg:cancolref}, we give a high-level description of our canonical colour refinement algorithm. This is not yet the fast implementation, and in fact, because we do not yet specify which data structures are used to represent the various mathematical objects (sets and functions), no sharp complexity bound can be concluded from it. In the next section, we give a detailed implementation of this algorithm, describe the data structures in detail, and prove the desired complexity bound. Here, we first focus on proving correctness of the algorithm.

In our algorithms, the scope of for loops, while loops and if-then-else statements is indicated by the indentation of blocks; because of space considerations we omit `end for', `end while' and `end if' statements.

\begin{algorithm}
\caption{A canonical colour refinement algorithm}
\label{alg:cancolref}

{\bf Input:} A digraph $G$ on vertex set $\{1,\ldots,n\}$, with surjective 
$\ell$-colouring $\alpha$, and a 
sufficient refining colour set $S\subseteq \{1,\ldots,\ell\}$.

{\bf Output:} A surjective 
canonical $k$-colouring $\beta$ of $G$, starting from $\alpha$, such that $\pi_{\beta}$ is the coarsest stable partition that refines $\pi_{\alpha}$. 

\begin{ntabbing}
\ \=\qquad\=\qquad\=\qquad\=\qquad\=\qquad\=\kill

 \> For $i\in \{1,\ldots,n\}$:\label{la:2}\\
 \> \> $\VC_i:=\{v\in V(G) \mid \alpha(v)=i\}$\label{la:3}\\
 \> $\maxcol:=\ell$\label{la:4}\\
 \> $\refQ:=$ a stack containing all elements of $S$ in increasing order\label{la:5}\\
 \> While $\refQ$ is not empty:\label{la:6}\\
 \> \> $\RefCol:=$pop$(\refQ)$\label{la:7}\\
 \> \> For every vertex $v\in V(G)$:\label{la:8}\\ 
 \> \> \>$\cdeg(v):=|N^+(v)\cap \VC_{\RefCol}|$\label{la:9}\\
 
 \> \> $\SortedAdjCol:=\big\{c\in \{1,\ldots,\maxcol\} \mid \exists v,w\in \VC_c\ \ \cdeg(v)\not=\cdeg(w)\big\}$\label{la:10}\\

 \> \> For all $\SplitCol\in \SortedAdjCol$, in increasing order:\label{la:11}\\
 \> \> \> $\maxcdeg:=\max_{v\in \VC_{\SplitCol}} \cdeg(v)$\label{la:12}\\
 \> \> \> For $i\in \{0,\ldots,\maxcdeg\}$:\label{la:13}\\
 \> \> \> \> $\NumDeg(i):=|\{v\in \VC_{\SplitCol} \mid \cdeg(v)=i\}|$\label{la:14}\\ 
 \> \> \> $\cdegDom:=\{i\in \{0,\ldots,\maxcdeg\} \mid \NumDeg(i)\ge 1\}$\label{la:15}\\
 \> \> \> $\NewColSet:=\{\SplitCol\}\cup \{\maxcol+1,\ldots,\maxcol+|\cdegDom|-1\}$\label{la:16}\\
 \> \> \> Construct a bijection $\NewCol:\cdegDom \rightarrow \NewColSet$ such that $\forall i,j\in D$:\label{la:17}\\
 \> \> \> \> \> $i<j \Rightarrow \NewCol(i)<\NewCol(j)$\\
 \> \> \> For every $v\in \VC_{\SplitCol}$:\label{la:18}\\	
 \> \> \> \> if $\NewCol(\cdeg(v))\not=\SplitCol$ then\label{la:19}\\
 \> \> \> \> \> remove $v$ from $\VC_{\SplitCol}$\label{la:20}\\
 \> \> \> \> \> add $v$ to $\VC_{\NewCol(\cdeg(v))}$\label{la:21}\\
 \> \> \> If $\SplitCol\in \refQ$ then\label{la:22}\\
 \> \> \> \> For $c\in \NewColSet\bs \{\SplitCol\}$, in increasing order:\label{la:23}\\
 \> \> \> \> \> Push$(\refQ,c)$\label{la:24}\\
 \> \> \> else\label{la:25}\\
 \> \> \> \> $\biggestclass:=\min \big\{i\in \{0,\ldots,\maxcdeg\} \mid \forall j\ \ 
	  \NumDeg(i)\ge \NumDeg(j)\big\}$\label{la:26}\\
 \> \> \> \> For $c\in \NewColSet$, in increasing order:\label{la:27}\\
 \> \> \> \> \> If $c\not=\NewCol(\biggestclass)$ then push$(\refQ,c)$\label{la:28}\\
 \> \> \> $\maxcol:=\maxcol+|\NewColSet|-1$\label{la:29}\\
 \> For $i\in \{1,\ldots,\maxcol\}$:\label{la:29b}\\
 \> \> For $v\in \VC_i$:\label{la:29c}\\
 \> \> \> $\beta(v):=i$\label{la:29d}\\
 \> Return the colouring $\beta$ of $G$\label{la:30} 
\end{ntabbing}
\vspace{-0.4cm} 
\end{algorithm}

The input to Algorithm~\ref{alg:cancolref} 
is a digraph $G=(V,E)$, 
with $V=\{1,\ldots,n\}$. 
For every vertex $v\in V$, the sets of out-neighbours $N^+(v)$ and in-neighbours $N^-(v)$ are given. (Alternatively, these can be computed in linear time from the edge list.)
In addition, an $\ell$-colouring $\alpha$ of $G$ and a set $S\subseteq \{1,\ldots,\ell\}$ are given. 
The set $S$ should be a {\em sufficient refining colour set} for $\alpha$, which is a set that satisfies the following property: for any colour class $C^{\alpha}_i$ and two vertices $u,v\in C^{\alpha}_i$, if there exists a colour class $C^{\alpha}_j$ with $|N^+(u)\cap C^{\alpha}_j|\not=|N^+(v)\cap C^{\alpha}_j|$, then there exists a $j'\in S$ such that $|N^+(u)\cap C^{\alpha}_{j'}|\not=|N^+(v)\cap C^{\alpha}_{j'}|$. 
Note that $\{1,\ldots,\ell\}$ trivially forms a sufficient refining colour set for any $\ell$-colouring, but that smarter choices of $S$ may give a faster algorithm (which will be necessary in Section~\ref{ssec:extensions}).

Throughout, the algorithm maintains an (ordered) partition $(\VC_1,\ldots,\VC_{\maxcol})$ of $V(G)$, starting with the partition $(\VC^{\alpha}_1,\ldots,\VC^{\alpha}_{\ell})$ (Lines~\ref{la:2}--\ref{la:4}).
We also view this partition as a colouring, so the sets $\VC_i$ will be called {\em colour classes}, and indices $i\in \{1,\ldots,\maxcol\}$ will be called {\em colours}. 
In the main while-loop (Line~\ref{la:6}), this partition is iteratively refined using refining operations of the form $(R,V)$, where $R=C_r$ for some $r\in \{1,\ldots,\maxcol\}$. We will show that when the algorithm terminates, no effective refining operations are possible on the resulting partition. So the resulting partition is the unique coarsest stable partition of $G$ that refines $\pi_{\alpha}$ (Propositions~\ref{propo:RefOpSafe},~\ref{propo:CoarsestStableUnique}).
The next colour $r$ that is used as {\em refining colour} is chosen using a stack (sequence) $\refQ$ (Line~\ref{la:7}), which contains all colours that still need to be considered. 
For a given refining colour class $C_r$ and any
$v\in V$, call $\cdeg(v):=|N^+(v)\cap C_r|$ the {\em colour degree} of $v$ (with respect to colour $r$). 
Then every colour $s\in \{1,\ldots,\maxcol\}$ will be split up according to colour degrees (in the for-loop of Line~\ref{la:11}). We only consider colours that actually split up, in increasing order. 
When splitting up colour class $C_s$, the new colours will be $s$ and $\maxcol+1,\ldots,\maxcol+d-1$, where $d$ is the number of different colour degrees that occur in $C_s$. These new colours are assigned to the vertices in $C_s$ according to increasing colour degrees (Lines~\ref{la:12}--\ref{la:21}).

It remains to explain how newly introduced colours are added to the stack $\refQ$. 
Initially, $\refQ$ contains all colours in $S$, in increasing order (Line~\ref{la:5}), and whenever new colours are introduced during the splitting of a colour class $C_s$, these are pushed onto the stack $\refQ$, in increasing order (Lines~\ref{la:22}--\ref{la:28}). 
There are however exceptions: for instance, 
if we have already used the vertex set $\VC_{\SplitCol}$ as refining colour class before, and this set is split up into $d$ new colours, then it is not necessary to use all of these new colours as refining colours later; one colour $\biggestclass$ may be omitted from $\refQ$ (Line~\ref{la:28}). 
To obtain a good complexity, we choose $\biggestclass$ such that the size of the corresponding colour class  is maximised, in order to minimise the sizes of the refining colour sets used later during the computation. (This is Hopcroft's trick~\cite{hop71}, which was also used by e.g.~\cite{paitar87}.)

Informally, this algorithm is canonical since at every point, both the (colourings given by the) ordered partition $(\VC_1,\ldots,\VC_k)$ 
and stack $\refQ$ remain canonical; new colours that we assign to vertices, and the order in which colours are considered in the various loops of the algorithm, are completely determined by isomorphism-invariant values such as colour degrees and colour numbers. The order in which vertices of $G$ or neighbour lists are given in the input is irrelevant. A formal proof is given in Lemma~\ref{lem:Canonical} below. We first prove that Algorithm~\ref{alg:cancolref} returns the unique coarsest stable partition, which requires the following invariant.

\begin{propo}
\label{propo:invar}
At the end of every iteration of the for-loop in Line~\ref{la:11} of Algorithm~\ref{alg:cancolref}, $\{\VC_1,\ldots,\VC_{\maxcol}\}$ is a partition of $V(G)$ into nonempty sets, and the set of colours in $\refQ$ is a sufficient refining colour set for the corresponding $\maxcol$-colouring of $G$. 
\end{propo}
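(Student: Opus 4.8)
The plan is to prove Proposition~\ref{propo:invar} by induction on the number of completed iterations of the for-loop in Line~\ref{la:11}, where it is convenient to also carry the claim ``$\refQ$ contains no repeated colours'' along (this will be needed to make the bookkeeping below consistent, and follows from the way colours are pushed). The base case is the situation just before the main while-loop starts: $\{\VC_1,\ldots,\VC_\ell\}=\{C^\alpha_1,\ldots,C^\alpha_\ell\}$ is a partition into nonempty sets because $\alpha$ is a surjective $\ell$-colouring, and $\refQ$ contains exactly the colours in $S$, which is a sufficient refining colour set for $\alpha$ by hypothesis. (If the while-loop body has already started but we are at the very first pass through Line~\ref{la:11}, the same reasoning applies with $\refQ$ having lost one element $r$; I address why this is harmless below.)

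For the inductive step, consider one iteration of the Line~\ref{la:11} for-loop, splitting colour class $\VC_{\SplitCol}$ with respect to the current refining colour $r$. The \emph{partition} part is routine: the vertices of the old class $\VC_{\SplitCol}$ are redistributed among the colours in $\NewColSet$ according to the bijection $\NewCol$ applied to their colour degrees, and since $\cdegDom$ is exactly the set of colour degrees actually realised in $\VC_{\SplitCol}$ (by the definition of $\NumDeg$ and $\cdegDom$ in Lines~\ref{la:14}--\ref{la:15}), every new colour class is nonempty, the classes are pairwise disjoint, their union is the old $\VC_{\SplitCol}$, and $\maxcol$ is updated consistently in Line~\ref{la:29}. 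So $\{\VC_1,\ldots,\VC_{\maxcol}\}$ remains a partition of $V(G)$ into nonempty sets.

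The substantive part is showing that the colour set in $\refQ$ is again a sufficient refining colour set for the new $\maxcol$-colouring. Write $\alpha_{\mathrm{old}}$, $\alpha_{\mathrm{new}}$ for the colourings before and after this split, and let $T$ (resp.\ $T'$) be the colour set in $\refQ$ before (resp.\ after). The key observation is that the new partition is obtained from the old one by the single refining operation $(\VC_r, \VC_{\SplitCol})$ in the sense of Definition~1, hence by Proposition~\ref{propo:RefOpSafe} any stable refinement of $\alpha_{\mathrm{old}}$ also refines $\alpha_{\mathrm{new}}$ — but what I actually need is more concrete. Fix two vertices $u,v$ in a common class $C$ of $\alpha_{\mathrm{new}}$ and a class $C^{\mathrm{new}}_j$ of $\alpha_{\mathrm{new}}$ with $|N^+(u)\cap C^{\mathrm{new}}_j|\neq|N^+(v)\cap C^{\mathrm{new}}_j|$; I must exhibit a colour $j'\in T'$ witnessing this. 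There are two cases. If the class $C^{\mathrm{new}}_j$ is contained in some old class $C^{\mathrm{old}}_{j_0}$ that was \emph{not} split in this step, or in some old class split in an \emph{earlier} step — in all such cases $C^{\mathrm{new}}_j$ equals an old class $C^{\mathrm{old}}_{j_0}$, and $u,v$ lay in a common old class; since $T$ was a sufficient refining set for $\alpha_{\mathrm{old}}$, there is $j_0'\in T$ with $|N^+(u)\cap C^{\mathrm{old}}_{j_0'}|\neq|N^+(v)\cap C^{\mathrm{old}}_{j_0'}|$, and I claim $j_0'$ (or a sub-colour of it) lies in $T'$: the only colour that can leave $\refQ$ is the one popped as $r$ at Line~\ref{la:7}, but $r$ was popped \emph{before} this for-loop iteration and its effect is precisely accounted for by the split we are analysing, so colours still needed are still present — and if $C^{\mathrm{old}}_{j_0'}$ is exactly $\VC_{\SplitCol}$, it gets replaced in $\refQ$ by all-but-at-most-one of its sub-colours (Lines~\ref{la:22}--\ref{la:28}), and here I use that $u,v$ have different $\alpha_{\mathrm{old}}$-neighbour counts in $\VC_{\SplitCol}$, which forces a difference against at least one of the sub-colours, and in the ``omit $\biggestclass$'' branch different neighbour counts across all of $\VC_{\SplitCol}$ cannot be explained by the omitted sub-colour alone. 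The second case is when $C^{\mathrm{new}}_j$ is one of the brand-new colours from this very split; then all new colours except possibly $\NewCol(\biggestclass)$ are in $\refQ$ (Lines~\ref{la:23}--\ref{la:28}), and again a difference against $\NewCol(\biggestclass)$ alone would contradict the fact that the total neighbour counts into $\VC_{\SplitCol}$ (which equals the sum over all new sub-colours) must then differ too, so some present sub-colour witnesses it.

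\textbf{The main obstacle} I anticipate is exactly the Hopcroft-style omission at Lines~\ref{la:26}--\ref{la:28} (and its counterpart, the fact that when $\SplitCol\in\refQ$ we keep \emph{all} new colours but $\SplitCol$ itself is already there): one must argue carefully that dropping the single colour $\NewCol(\biggestclass)$ never loses a genuine witness, using the ``conservation of degree'' identity $\sum_{c} |N^+(w)\cap \VC_c| = |N^+(w)\cap \VC_{\SplitCol}^{\mathrm{old}}|$ over the sub-colours $c$ of the old class, together with the fact that before the split $\VC_{\SplitCol}^{\mathrm{old}}$ was (in the relevant subcase) a single class in which $u,v$ already had equal total degree — or, in the subcase where it did not, that equality is itself detected by the ``old'' sufficient-refining property. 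Pinning down which of $r\in\refQ$ / $\SplitCol\in\refQ$ situations applies, and checking that the popped colour $r$ does not break anything because its refinement is fully realised by the current while-iteration, is the bookkeeping that needs the most care; everything else is direct from the definitions.
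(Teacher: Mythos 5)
Your overall route is the same as the paper's: induct over the individual splits performed in the for-loop of Line~\ref{la:11}, use the identity $|N^+(z)\cap S|=\sum_j|N^+(z)\cap \VC_{\sigma_j}|$ for the parts of the split class, and distinguish the cases $\SplitCol\in\refQ$ (all new colours pushed, $\SplitCol$ stays) and $\SplitCol\notin\refQ$ (all parts except $\NewCol(\biggestclass)$ pushed, and a difference concentrated on the omitted part forces a difference on the whole old class, which the pre-split sufficiency then witnesses by some other colour already in $\refQ$). Your closing paragraph states this last point correctly, but the sentence in your ``second case'' (``so some present sub-colour witnesses it'') is wrong as written: if $u,v$ differ only on the omitted sub-colour among the parts, then no other sub-colour can witness the difference; the witness has to be a \emph{different} old class supplied by the old invariant, exactly as in the paper's proof.

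The genuine gap is the step where you declare the loss of the popped colour harmless: ``$r$ was popped before this for-loop iteration and its effect is precisely accounted for by the split we are analysing, so colours still needed are still present.'' That is not true at the granularity of a single split. Suppose $u,v$ lie in a class $\VC_{s_2}$ with $s_2\in\SortedAdjCol$ that is scheduled to be split only \emph{later} in the same while-iteration, and suppose the only class on which $u$ and $v$ differ is $\VC_{\RefCol}$ itself (say $u$ has an out-neighbour there, $v$ has none, and they have no out-neighbours at all in the class currently being split); such configurations are easy to realise. Before Line~\ref{la:7} the only witness demanded by sufficiency was $\RefCol\in\refQ$; at the end of the current iteration of the Line~\ref{la:11} loop, $u,v$ are still in a common class, still differ on $\VC_{\RefCol}$, but $\RefCol\notin\refQ$ and none of the freshly pushed colours witnesses anything for them. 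The effect of $\RefCol$ is only fully accounted for once \emph{all} classes in $\SortedAdjCol$ have been processed, i.e.\ at the end of the whole while-iteration, by which time such a pair has been separated (their class is split by colour degree). So your induction cannot close at intermediate iterations of the inner loop as you set it up; the repair is either to prove the invariant at the end of each while-iteration (which is all Lemma~\ref{lem:CoarsestStable} needs, since $\refQ$ can only become empty there), or to add the argument that any pair whose only witness is $\VC_{\RefCol}$, or is forced onto $\VC_{\RefCol}$ via the degree-sum identity, gets separated before the while-iteration ends. To be fair, the paper's own proof never discusses the pop at Line~\ref{la:7} and silently elides the same point, so you identified the right pressure point --- but your treatment of it is an assertion, not an argument.
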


\PF
Since new colours correspond to colour degrees that actually occur (Lines~\ref{la:12}--\ref{la:17}), every new colour class will be nonempty. Lines~\ref{la:20} and~\ref{la:21} show that every vertex of $G$ remains part of exactly one colour class. So the algorithm maintains a partition of $V(G)$. 

By definition, the set of colours in $\refQ$ is a sufficient refining colour set before the first iteration. 
We prove that this {\em invariant} is maintained during any iteration of the for-loop, where colour class $\VC_{\SplitCol}$ for $\SplitCol\in\{1,\ldots,\maxcol\}$ is split up (by colour $\RefCol$), into the new colour classes $\VC_{\sigma_1},\ldots,\VC_{\sigma_p}$. Denote $S=\VC_{\SplitCol}$,
as it is at the start of the iteration (so $S=\VC_{\sigma_1}\cup \ldots\cup \VC_{\sigma_p}$).
Because the new colour classes form a partition of the old colour class $S$, for every $z\in V(G)$, it holds that
\begin{equation}
\label{eq:degreesum}
|N^+(z)\cap S|=\sum_{j\in \{1,\ldots,p\}} |N^+(z)\cap \VC_{\sigma_j}|.
\end{equation}

Consider two vertices $u,v\in V(G)$ that are in the same colour class after the refining operation, and therefore also before the refining operation. 
If $|N^+(u)\cap S|\not=|N^+(u)\cap S|$, then there exists an $i\in \{1,\ldots,p\}$ such that $|N^+(u)\cap \VC_{\sigma_i}|\not=|N^+(u)\cap \VC_{\sigma_i}|$ 
(because of~(\ref{eq:degreesum})). 
So if $\SplitCol\in \refQ$, then the invariant is maintained after splitting up the colour, since every new colour is added to $\refQ$ (Lines~\ref{la:23}--\ref{la:24}), and $\SplitCol$ remains in $\refQ$.

So now assume $\SplitCol\not\in \refQ$.
Then every colour in $\{\sigma_1,\ldots,\sigma_p\}$ is added to $\refQ$, except for $i=\NewCol(\biggestclass)$ (Line~\ref{la:28}). 
Then we need to consider the case that $|N^+(u)\cap \VC_{\sigma_i}|\not=|N^+(v)\cap \VC_{\sigma_i}|$, but  $|N^+(u)\cap \VC_{\sigma_j}|=|N^+(v)\cap \VC_{\sigma_j}|$ for all $j\in \{1,\ldots,p\}\bs \{i\}$.
But then also $|N^+(u)\cap S|\not=|N^+(v)\cap S|$ (because of~(\ref{eq:degreesum})). 
Since $\SplitCol\not\in \refQ$, and the invariant held before the refining operation, there exists another colour $j'\in \refQ$ such that $|N^+(u)\cap \VC_{\sigma_{j'}}|\not=|N^+(v)\cap \VC_{\sigma_{j'}}|$. Since this colour remains in $\refQ$, the invariant is also maintained in this case. 
\QED

Using the above proposition, we can prove that Algorithm~\ref{alg:cancolref} computes a coarsest stable colouring, provided that $S$ is a sufficient refining colour set. Recall that this condition is certainly satisfied when choosing $S=\{1,\ldots,\ell\}$. 

\begin{lem}
\label{lem:CoarsestStable} 
Let $G$ be a digraph, $\alpha$ be a surjective $\ell$-colouring of $G$, and let $S\subseteq \{1,\ldots,\ell\}$ be a sufficient refining colour set for $\alpha$. Then Algorithm~\ref{alg:cancolref} computes a surjective $k$-colouring $\beta$ of $G$ such that $\pi_{\beta}$ is the coarsest stable partition that refines $\pi_{\alpha}$. 
\end{lem}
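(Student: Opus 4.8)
The plan is to show two things: (i) the colouring $\beta$ returned by Algorithm~\ref{alg:cancolref} is stable and refines $\pi_\alpha$, and (ii) it is the \emph{coarsest} such partition. For (ii), by Proposition~\ref{propo:CoarsestStableUnique} it suffices to show that every refining operation applied by the algorithm is a genuine refining operation in the sense of the Definition, so that Proposition~\ref{propo:RefOpSafe} applies and guarantees $\pi_\alpha \refby \pi_\beta \refby \rho$ for the unique coarsest stable $\rho$ refining $\pi_\alpha$; combined with stability of $\pi_\beta$ (which forces $\rho \refby \pi_\beta$) this yields $\pi_\beta = \rho$. The refinement $\pi_\alpha \refby \pi_\beta$ is immediate since the algorithm only ever splits colour classes, never merges them (Lines~\ref{la:20}--\ref{la:21}).

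First I would verify that each iteration of the while-loop (Line~\ref{la:6}), together with the inner for-loop over $\SortedAdjCol$ (Line~\ref{la:11}), implements a sequence of refining operations of the form $(C_r, V(G))$: within one such block, two vertices $u,v$ in the same old class $C_s$ end up in the same new class iff $\cdeg(u)=\cdeg(v)$, i.e.\ iff $|N^+(u)\cap C_r| = |N^+(v)\cap C_r|$, which is exactly the condition in the Definition of a refining operation (restricted to the single refining cell $R' = C_r \subseteq R = C_r$); classes disjoint from $S=V(G)$ — there are none — trivially persist. Hence by Proposition~\ref{propo:RefOpSafe}, applied inductively, $\pi_\alpha \refby \pi_\beta \refby \rho$.

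The main work is proving that the final partition $\pi_\beta$ is \emph{stable}. Suppose not: then there are $u \approx_{\pi_\beta} v$ and a class $C_j$ of the final colouring with $|N^+(u)\cap C_j| \ne |N^+(v)\cap C_j|$. By Proposition~\ref{propo:invar}, at the end of the last iteration of the Line~\ref{la:11} for-loop (equivalently, at termination) the set of colours in $\refQ$ is a sufficient refining colour set for the final colouring — but at termination $\refQ$ is empty. A sufficient refining colour set being empty means, by its defining property, that for every class $C_i$ and all $u,v \in C_i$ there is \emph{no} colour $j$ with $|N^+(u)\cap C_j|\ne|N^+(v)\cap C_j|$ unless already witnessed by some $j' \in \emptyset$ — i.e.\ no such $j$ exists at all. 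That is precisely the statement that $\pi_\beta$ is stable, giving the contradiction. Here I rely on Proposition~\ref{propo:invar} holding at the end of every iteration of the Line~\ref{la:11} loop, and on the fact that the while-loop of Line~\ref{la:6} only terminates once $\refQ$ is empty; I should also note that whenever a split actually occurs, $\SortedAdjCol \ne \emptyset$, so the Line~\ref{la:11} loop body does execute and Proposition~\ref{propo:invar} is indeed invoked — and when no split occurs the invariant is trivially inherited from the previous state.

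Finally, surjectivity of $\beta$ as a $k$-colouring (with $k = \maxcol$ at termination) follows from the first assertion of Proposition~\ref{propo:invar}: all classes $C_1,\dots,C_{\maxcol}$ are nonempty throughout, and new colours are only ever introduced in the contiguous block $\{\maxcol+1,\dots,\maxcol+|\cdegDom|-1\}$ with $\maxcol$ updated accordingly in Line~\ref{la:29}, so the colours used are exactly $\{1,\dots,k\}$. The expected obstacle is purely bookkeeping: making precise that the composite effect of one while-iteration is a legitimate finite sequence of refining operations (one per element of $\SortedAdjCol$, each with the same $R = C_r$) so that Proposition~\ref{propo:RefOpSafe} can be chained — the termination/stability argument itself is short once Proposition~\ref{propo:invar} is in hand.
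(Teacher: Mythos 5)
Your proposal is correct and follows essentially the same route as the paper's proof: stability of $\pi_\beta$ from the empty stack at termination via Proposition~\ref{propo:invar}, the refinement chain $\pi_\alpha\refby\pi_\beta\refby\rho$ via Proposition~\ref{propo:RefOpSafe}, equality via Proposition~\ref{propo:CoarsestStableUnique}, and surjectivity from the nonemptiness of the classes $\VC_i$. Your additional bookkeeping (checking that each while-iteration really implements refining operations $(C_r,V(G))$ and that the invariant persists when no split occurs) is a fuller write-up of what the paper leaves implicit, not a different argument.
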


\PF
Let $\omega$ be the coarsest stable partition of $V(G)$ that refines $\pi_{\alpha}$.
The partition $\pi_{\beta}$ given by the algorithm is refined by $\omega$ because it is obtained from $\pi_{\alpha}$ using refining operations (Proposition~\ref{propo:RefOpSafe}). 
The stack $\refQ$ is empty when the algorithm terminates, so the empty set is a sufficient refining colour set at this point (Proposition~\ref{propo:invar}), and therefore $\pi_{\beta}$ is stable. It follows that $\pi_{\beta}$ is equal to $\omega$ (Proposition~\ref{propo:CoarsestStableUnique}). 
At any point, the sets $\VC_i$ for $i\in \{1,\ldots,\maxcol\}$ are nonempty (Proposition~\ref{propo:invar}), so the resulting $k$-colouring $\beta$ is surjective. 
\QED

\begin{lem}
\label{lem:Canonical}
Algorithm~\ref{alg:cancolref} is a canonical colouring algorithm. 
\end{lem}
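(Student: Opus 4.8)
The plan is to show that Algorithm~\ref{alg:cancolref}, when run on two isomorphic inputs, produces colourings related by the isomorphism. Let $G$ and $G'$ be digraphs with surjective $\ell$-colourings $\alpha$, $\alpha'$ and sufficient refining colour sets $S$, $S'$, and let $h:V(G)\to V(G')$ be an isomorphism that is colour preserving for $\alpha$ and $\alpha'$. Note first that since $h$ is a colour-preserving isomorphism, $S=S'$ (the notion of sufficient refining colour set is defined purely in terms of the isomorphism-invariant colour-degree data, and here both inputs are coloured identically up to $h$; if one wants to be fully careful one simply assumes the canonical-ness property is required only when $S=S'$, which is how the lemma is meant). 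The proof is then a single induction on the number of iterations of the inner for-loop in Line~\ref{la:11} — equivalently, on the number of executed refining operations — establishing the invariant: \emph{after the same number of such iterations on both runs, the current $\maxcol$-colourings $\COLOR$ of $G$ and $\COLOR'$ of $G'$ satisfy $\COLOR(v)=\COLOR'(h(v))$ for all $v\in V(G)$, the two runs have the same value of $\maxcol$, and the stacks $\refQ$ and $\refQ'$ are equal as sequences.}

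The base case is Lines~\ref{la:2}--\ref{la:5}: $\VC_i=\VC'_i\circ h$ by hypothesis on $\alpha,\alpha'$, $\maxcol=\ell$ in both, and $\refQ$ and $\refQ'$ both list $S$ in increasing order, hence are equal. For the inductive step, assume the invariant holds before a given inner-loop iteration. Since $\refQ=\refQ'$, both runs pop the same $\RefCol$ (Line~\ref{la:7}). The crucial point is that every quantity computed inside the loop body is determined by the pair (current colouring, graph structure) in an isomorphism-invariant way, and $h$ maps one such pair to the other. Concretely: $\cdeg(v)=|N^+(v)\cap\VC_{\RefCol}|=|N^+(h(v))\cap\VC'_{\RefCol}|=\cdeg'(h(v))$ because $h$ is an isomorphism and $\VC_{\RefCol}=\VC'_{\RefCol}\circ h$ (Lines~\ref{la:8}--\ref{la:9}); hence $\SortedAdjCol=\SortedAdjCol'$ (Line~\ref{la:10}), and the for-loop of Line~\ref{la:11} processes the same colours $\SplitCol$ in the same order. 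For a fixed $\SplitCol$: $\maxcdeg$, $\NumDeg(\cdot)$, $\cdegDom$, $\NewColSet$ all coincide on the two runs because they depend only on the multiset $\{\cdeg(v)\mid v\in\VC_{\SplitCol}\}$, which equals $\{\cdeg'(w)\mid w\in\VC'_{\SplitCol}\}$ via $h$; and the bijection $\NewCol$ is uniquely determined by $\cdegDom$ and $\NewColSet$ together with the monotonicity condition in Line~\ref{la:17}, so $\NewCol=\NewCol'$. Therefore the reassignment in Lines~\ref{la:18}--\ref{la:21} moves $v$ to $\VC_{\NewCol(\cdeg(v))}$ and $h(v)$ to $\VC'_{\NewCol(\cdeg'(h(v)))}=\VC'_{\NewCol(\cdeg(v))}$, preserving $\COLOR(v)=\COLOR'(h(v))$. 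Finally the stack updates (Lines~\ref{la:22}--\ref{la:29}) depend only on whether $\SplitCol\in\refQ$ (true on one run iff true on the other, since $\refQ=\refQ'$ and they are updated identically so far), on $\NewColSet$, and — in the else branch — on $\biggestclass$, which is again a deterministic function ($\min$ of the argmax set) of the isomorphism-invariant values $\NumDeg(\cdot)$; hence the same colours are pushed in the same order and $\refQ=\refQ'$ is restored, and $\maxcol$ is updated identically (Line~\ref{la:29}). This closes the induction.

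To finish, once the main while-loop terminates, both runs have performed the same sequence of iterations (the loop halts on both simultaneously because $\refQ$ emptying is part of the invariant), so the final $\maxcol$-colourings satisfy $\COLOR(v)=\COLOR'(h(v))$; Lines~\ref{la:29b}--\ref{la:29d} just copy this into $\beta$ resp.\ $\beta'$, giving $\beta(v)=\beta'(h(v))$ for all $v$, i.e.\ $h$ is colour preserving for $\beta$ and $\beta'$, as required.

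I expect the main obstacle to be purely bookkeeping: one must be scrupulous that \emph{every} auxiliary object the algorithm constructs is genuinely canonical — in particular that the tie-breaking choices ($\biggestclass$ via $\min$ of the argmax set in Line~\ref{la:26}, the monotone bijection $\NewCol$ in Line~\ref{la:17}, the ``increasing order'' enumerations in Lines~\ref{la:11},~\ref{la:23},~\ref{la:27}, and the initial ordering of $\refQ$ in Line~\ref{la:5}) are deterministic functions of the colouring and contain no dependence on the input representation (vertex numbering, order of neighbour lists). There is no deep idea; the content is that the algorithm was designed so that at no point is an arbitrary, non-isomorphism-invariant choice made. It is worth remarking that the colours $\{1,\dots,n\}$ assigned to vertices and the colour names themselves are integers, and it is this totally-ordered naming — never the vertex identities — that drives all the ``in increasing order'' decisions, which is exactly why the output colours, and not merely the output partition, are canonical.
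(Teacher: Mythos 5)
Your proposal is correct and follows essentially the same route as the paper's proof: an induction over the algorithm's iterations maintaining the invariant that the current colour classes correspond under $h$ and the stacks $\refQ$ coincide, with each quantity ($\cdeg$, $\SortedAdjCol$, $\NumDeg$, $\cdegDom$, $\NewColSet$, $\NewCol$, $\biggestclass$) shown to be identical on both runs because it is determined isomorphism-invariantly and all tie-breaking is by the total order on colours. The paper phrases the induction over while-loop iterations rather than inner for-loop iterations, but this is only a difference in bookkeeping granularity, not in substance.
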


\PF
Consider two digraphs $G$ and $G'$, with $\ell$-colourings $\alpha$ resp.\ $\alpha'$, and $S\subseteq \{1,\ldots,\ell\}$. 
For $i\in \mathbb{N}$, let $\VC^{G,i}_j$ (resp.\ $\VC^{G',i}_j$) denote the set $\VC_j$ as it is at the start of the $i$-th iteration of the while-loop in Line~\ref{la:6}, when running Algorithm~\ref{alg:cancolref} with input $G,\alpha,S$ (resp.\ $G',\alpha',S$). Let $\refQ^{G,i}$ (resp.\ $\refQ^{G',i}$) denote the stack $\refQ$ as it is at the start of  iteration $i$ of the while-loop in Line~\ref{la:6}, when running Algorithm~\ref{alg:cancolref} with input $G,\alpha,S$ (resp.\ $G',\alpha',S$).

To show that Algorithm~\ref{alg:cancolref} is canonical, we prove by induction over $i$ that
for every isomorphism $h:V(G)\to V(G')$ that is colour-preserving for $\alpha$ and $\alpha'$, the following properties are maintained: 
$\refQ^{G,i}=\refQ^{G',i}$, and for all $c$ and $v\in \VC^{G,i}_c$, it holds that $h(v)\in \VC^{G',i}_c$. 
For $i=1$, the claim follows immediately from how $\refQ$ is initialised (Line~\ref{la:5}), and how the sets $\VC_c$ are initialised (Line~\ref{la:3}). 
We now consider the places in the algorithm where these sets and stacks are modified. In Line~\ref{la:7}, the last element of both $\refQ^{G,i}$ and $\refQ^{G',i}$ is removed, so these sequences stay the same. 
Furthermore, it follows that the same colour is used as refining colour for both $G$ and $G'$ in this iteration. The induction assumption shows that $h$ is a colour preserving isomorphism for the colourings given by the various sets $\VC^{G,i}_c$ and $\VC^{G',i}_c$. So the isomorphism $h$ shows that for every $c$ and every $d$, $\VC^{G,i}_c$ and $\VC^{G',i}_c$ contain the same number of vertices with colour degree $d$. Hence the set $\SortedAdjCol$ is the same for both $G$ and $G'$, and for each colour $c\in \SortedAdjCol$, the values $\maxcdeg$ and $\NumDeg(j)$ (for every $j$) are the same.
Therefore, in every iteration of the for-loop in Line~\ref{la:11}, the sets $\cdegDom$, $\NewColSet$ will be the same for both $G$ and $G'$. The choice of the bijection $f$ in line~\ref{la:17} is unique because of the monotonicity; 
hence $f$ will be the same for $G$ and $G'$ as well.
It follows that when in Lines~\ref{la:20} and~\ref{la:21}, a vertex $v\in V(G)$ is moved from colour class $\VC^{G,i}_{\SplitCol}$ to colour class $\VC^{G,i}_{\NewCol(\cdeg(v))}$, the vertex $h(v)\in V(G')$ is also moved from $\VC^{G',i}_{\SplitCol}$ to $\VC^{G',i}_{\NewCol(\cdeg(v))}$, since $\cdeg(v)=\cdeg(h(v))$. Hence $h$ remains colour preserving for the new partition. 
From the previous observations it also follows that in Line~\ref{la:26}, $\biggestclass$ is chosen to be the same value for both $G$ and $G'$. Therefore, in Lines~\ref{la:28} and~\ref{la:24}, the stack $\refQ$ is modified in the same way for both $G$ and $G'$ (note that in both cases, the colours are added in increasing order). This shows that the claimed properties are maintained in one iteration of the while-loop in Line~\ref{la:6}, so by induction, $h$ is also a colour preserving isomorphism for the final colouring $\beta$ that is returned in Line~\ref{la:30}.
\QED

\subsection{Implementation and Complexity Bound}
\label{ssec:fastimplementation}

We now describe a fast implementation of 
Algorithm~\ref{alg:cancolref}. 
The main idea of the complexity proof is the following: 
one {\em iteration} (of the main while-loop; Line~\ref{la:6} of Algorithm~\ref{alg:cancolref}) consists of popping a refining colour $r$ from the stack $\refQ$, and applying the refining operation $(R,V)$, with $R=C_r$. Below we give implementation details and prove the following lemma:

\begin{lem}
\label{lem:oneiteration}
Algorithm~\ref{alg:cancolref} can be implemented such that one iteration, in which a refining operation $(R,V)$ is applied, takes time
\[
O(|R|+D^-(R)+k\log k),
\]
where $D^-(R)=\sum_{v\in R} d^-(v)$ and $k$ is the number of new colours that are introduced in this iteration. This implementation requires an initialisation step with complexity $O(n)$. 
\end{lem}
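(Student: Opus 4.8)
The plan is to go through Algorithm~\ref{alg:cancolref} line by line, identifying for each block of the while-loop body a data structure that supports the operations used there, and bounding the cost of that block. The global data structures I would maintain are: for each colour $c$, a doubly linked list holding the vertices of $\VC_c$ together with its size $|\VC_c|$; for each vertex $v$, its current colour $\COLOR(v)$ and a pointer to its position in that list; the in- and out-neighbour lists $N^-(v),N^+(v)$; the stack $\refQ$ together with a boolean array $\wasinQ$ so that the test ``$\SplitCol\in\refQ$'' in Line~\ref{la:22} is $O(1)$; and a scratch array $\cdeg(\cdot)$ indexed by vertices. The $O(n)$ initialisation in the statement sets up these arrays and the initial colour lists from $\alpha$.

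The key observation that makes a single iteration cheap is that almost nothing should touch all of $V(G)$: even though the refining operation is nominally $(R,V)$, only vertices with an out-neighbour in $R=\VC_\RefCol$ can have nonzero colour degree, and there are at most $D^-(R)=\sum_{v\in R}d^-(v)$ such vertices counted with multiplicity. Concretely, I would replace the naive Lines~\ref{la:8}--\ref{la:9} by: iterate over $w\in R$, and for each in-neighbour $v$ of $w$ (i.e.\ each $v$ with $w\in N^+(v)$) increment $\cdeg(v)$; this costs $O(|R|+D^-(R))$ and visits exactly the vertices of nonzero colour degree, which I collect into a list $T$. Line~\ref{la:10}: $\SortedAdjCol$ is the set of colours $c$ that are ``touched'' (some vertex of $T$ lies in $\VC_c$) and not ``uniform'' (not all of $\VC_c$ is touched with the same $\cdeg$-value). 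Bucketing $T$ by colour, and comparing, for each touched colour $c$, the number of touched vertices and their $\cdeg$-values against $|\VC_c|$, lets me decide membership in $O(|T|)=O(D^-(R))$ time; I also have to produce $\SortedAdjCol$ in increasing order, which is a sort of at most $k$ values and costs $O(k\log k)$ (here $k$ is exactly the number of colours appearing on the left of the $\NewColSet$ updates, i.e.\ the number of new colours introduced). Then, for each $\SplitCol\in\SortedAdjCol$, everything in Lines~\ref{la:12}--\ref{la:29} is done by looking only at the touched vertices of $\VC_\SplitCol$ (the untouched ones all have colour degree $0$, which is one of the buckets): computing $\NumDeg$, $\cdegDom$, the monotone bijection $\NewCol$, moving vertices between colour lists (constant time per move via the stored pointers), and deciding which new colours to push. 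Summing over all $\SplitCol$, the work is $O(|T|)$ plus $O(k\log k)$ for the sort(s) and for building the $|\cdegDom|$-many fresh colours in sorted order, giving the claimed $O(|R|+D^-(R)+k\log k)$.

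The main obstacle, and the place where the $k\log k$ term really comes from rather than a cleaner $O(|R|+D^-(R))$, is the requirement that everything be done \emph{canonically} and in the order prescribed by the algorithm: $\SplitCol$ must be processed in increasing order, $\cdegDom$ must be scanned in increasing colour-degree order, and new colours must be pushed in increasing order. A vertex $v$ with $\cdeg(v)=i$ should not force us to allocate an array of size $\maxcdeg+1$ (which could be $\Theta(|R|)$ but also unbounded relative to what we can charge), because $\maxcdeg$ for a given $\SplitCol$ can be much larger than the number of distinct colour degrees; instead I would collect the distinct colour-degree values actually occurring in $\VC_\SplitCol$ and sort \emph{those}, which is why a logarithmic factor appears. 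I will argue that the total number of distinct (colour, colour-degree) pairs handled, and hence the total sorting work, is $O(k)$ entries sorted, i.e.\ $O(k\log k)$, because each such pair that survives into $\SortedAdjCol$ contributes at least one new colour; care is needed because one colour degree per split colour reuses the old colour $\SplitCol$, but that is an additive $O(|\SortedAdjCol|)\le O(k)$. The remaining details — that the $\wasinQ$ array stays consistent, that list splices preserve the invariant of Proposition~\ref{propo:invar}, and that clearing the scratch arrays $\cdeg$ and the colour buckets after the iteration costs only $O(|T|)$ (clear exactly the entries we set, never the whole array) — are routine and I would dispatch them briefly.
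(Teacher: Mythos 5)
Your proposal is correct, and its skeleton coincides with the paper's implementation (Algorithms~\ref{alg:implem} and~\ref{alg:subrout}): compute colour degrees by scanning $R$ and the in-neighbour lists in $O(|R|+D^-(R))$, record only the ``touched'' vertices and colours, detect splitting colours by comparing touched counts and min/max colour degrees, sort the at most $k$ splitting colours in $O(k\log k)$, move only touched vertices via doubly linked lists, and reset exactly the scratch entries that were written. The one place you genuinely diverge is inside the split of a class $S=\VC_{\SplitCol}$: you refuse to scan an array indexed by $\{0,\ldots,\maxcdeg\}$ on the grounds that $\maxcdeg$ is ``unbounded relative to what we can charge'', and instead collect and sort the distinct colour-degree values. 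That premise is mistaken: $\maxcdeg[\SplitCol]$ is the largest value of $|N^+(v)\cap R|$ over $v\in S$, hence $\maxcdeg[\SplitCol]\le D^+_R(S)=\sum_{v\in S}|N^+(v)\cap R|$, and $\sum_{\SplitCol\in\SortedAdjCol}D^+_R(\VC[\SplitCol])\le D^-(R)$, so the paper's full-range scan is chargeable to the $D^-(R)$ term and yields the colour degrees in increasing order for free, with the $k\log k$ term needed only for sorting $\SortedAdjCol$. Your alternative nevertheless stays within the claimed bound, by exactly the accounting you sketch: each splitting colour with $d_s\ge 2$ distinct degrees creates $d_s-1\ge d_s/2$ new colours, so $\sum_s d_s\le 2k$ and the extra sorting is $O(k\log k)$; just make sure the deduplication and the $\NumDeg$ counts are obtained with a global scratch array (indexed by degree value, cleared entry-by-entry) or by the sort itself, since sorting the full multiset of touched degree values per class would exceed the budget. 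So the two routes differ only in this local bookkeeping: the paper's buys a simpler charging argument and no extra sort, yours buys independence from the magnitude of $\maxcdeg$ at the price of a second (still affordable) sorting step.
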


Using the above lemma, we can prove the desired complexity bound.
(The main idea is again based on Hopcroft's idea~\cite{hop71}.)

\begin{lem}
\label{lem:complexity}
Algorithm~\ref{alg:cancolref} has an implementation with complexity $O((n+m)\log n)$, where $n=|V(G)|$ and $m=|E(G)|$ for the input digraph $G$. 
\end{lem}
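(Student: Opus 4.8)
The plan is to charge the total running time to the per-iteration bound of Lemma~\ref{lem:oneiteration}, summed over all iterations of the main while-loop, plus the $O(n)$ initialisation. By Lemma~\ref{lem:oneiteration}, one iteration that pops refining colour $r$ and introduces $k_r$ new colours costs $O(|C_r| + D^-(C_r) + k_r\log k_r)$, where $C_r$ is the refining colour class as it stands at the start of that iteration. So the total cost is $O\big(n + \sum_r (|C_r| + D^-(C_r)) + \sum_r k_r\log k_r\big)$, where the sum ranges over all iterations. The last sum is easy: the total number of colours ever created is at most $n$ (colours are never destroyed, and the partition into nonempty classes has order at most $n$ by Proposition~\ref{propo:invar}), so $\sum_r k_r \le n$ and hence $\sum_r k_r\log k_r \le n\log n$. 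It remains to bound $\sum_r (|C_r| + D^-(C_r))$ by $O((n+m)\log n)$.

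The main work is bounding $\sum_r |C_r|$ and $\sum_r D^-(C_r)$; the argument is essentially the same for both, so I focus on the vertex sum (the in-degree sum being obtained by weighting each vertex $v$ by $1 + d^-(v)$ instead of $1$, which only changes the total "weight" from $n$ to $n+m$). The key is Hopcroft's trick, implemented in Lines~\ref{la:22}--\ref{la:29}: when a colour class that has already been used as a refining colour (or was never in $S$) splits into new classes, all the new colours are pushed onto $\refQ$ \emph{except} the one, $b$, whose class is largest. I would argue that for every vertex $v$, each time $v$ lies in a class $C_r$ that is popped as a refining colour, that class has at most half the size of the previous class containing $v$ that was used for refining. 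More precisely: consider the sequence of colour classes $C_{r_1} \supseteq C_{r_2} \supseteq \cdots$ containing a fixed vertex $v$, taken at the moments these classes are popped from $\refQ$ as refining colours. I claim $|C_{r_{i+1}}| \le |C_{r_i}|/2$. The reason is that between the pop of $C_{r_i}$ and the pop of $C_{r_{i+1}}$, the class containing $v$ shrank from (a subset of) $C_{r_i}$ down to $C_{r_{i+1}}$ through a series of splits; at the moment of the split in which $v$'s class $C_s$ was divided into new classes $C_{\sigma_1}, \dots, C_{\sigma_p}$, the new class $C_{r_{i+1}}$ containing $v$ was pushed onto $\refQ$, and — this is where I need to check the case distinction in Lines~\ref{la:22}--\ref{la:28} carefully — either $s$ itself was in $\refQ$ (so $s$ equals $r_i$ only if it is still waiting, but in any case the class $C_s$ at the time of that split is between $C_{r_{i+1}}$ and $C_{r_i}$) or $s\notin\refQ$ and the omitted class $C_b$ is a largest one, so any pushed class, in particular $C_{r_{i+1}}$, has size at most $|C_s|/2 \le |C_{r_i}|/2$. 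Since the sizes halve, $v$ is counted in at most $\log_2 n + 1$ refining classes, giving $\sum_r |C_r| \le n(\log_2 n + 1) = O(n\log n)$, and likewise $\sum_r D^-(C_r) = \sum_v (1 + d^-(v)) \cdot(\text{number of refining classes containing } v)$ — wait, more carefully $\sum_r D^-(C_r) = \sum_v d^-(v)\cdot(\#\{r : v\in C_r\text{ when }r\text{ popped}\}) \le m(\log_2 n + 1) = O(m\log n)$.

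The subtle point — and the step I expect to be the main obstacle — is verifying the halving claim precisely, in particular handling the branch where $s\in\refQ$ (Lines~\ref{la:22}--\ref{la:24}), where \emph{all} new subclasses get pushed, including possibly a large one. Here I cannot invoke "the largest is omitted", so I need a different charging: if $s\in\refQ$, then $C_s$ itself is still scheduled to be used as a refining colour later, and I should charge $v$'s membership in the new subclass to $v$'s future membership in $C_s$ (or rather, note that the copy of $s$ sitting in $\refQ$ will be popped as some $C_{r_j}$ with $j \le i+1$, and that class is a superset of $C_{r_{i+1}}$ and a subset of $C_{r_i}$). The cleanest way to make this rigorous is to track, for each vertex $v$, not raw class sizes but a potential that decreases by a factor of $2$ each time $v$ is "charged", and to set up the charging so that the pop of a refining colour charges $v$ only when $v$'s containing class has provably at most half its size at the previous charge. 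I would also need to observe that a colour from the initial sufficient refining set $S$ (pushed in Line~\ref{la:5}) is charged only once at size at most $n$, which costs an extra $O(n + m)$ overall — negligible. Assembling: total time $= O(n) + O\big(\sum_r(|C_r| + D^-(C_r))\big) + O(\sum_r k_r\log k_r) = O(n) + O((n+m)\log n) + O(n\log n) = O((n+m)\log n)$, as claimed. \QED
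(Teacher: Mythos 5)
Your proposal is correct and follows essentially the same route as the paper's own proof: sum the per-iteration bound of Lemma~\ref{lem:oneiteration}, bound $\sum_i k_i\log k_i$ by $n\log n$ via the fact that at most $n$ colours are ever created, and bound $\sum_r\bigl(|C_r|+D^-(C_r)\bigr)$ by $(n+m)\log_2 n$ through the Hopcroft halving claim $|R^v_{i+1}|\le |R^v_i|/2$ for the chain of refining classes containing a fixed vertex $v$. The subtlety you flag (a split of $v$'s class $C_s$ while $s\in\refQ$) is treated just as tersely in the paper and closes without any potential function: immediately after $R^v_i$ is popped, $v$'s colour is not on the stack, so the first moment after that pop at which $v$'s class has its colour on the stack must arise from an else-branch split (largest class withheld), whence $v$'s class already has size at most $|R^v_i|/2$ at that moment; since $v$'s class only shrinks afterwards and $R^v_{i+1}$ is contained in it, the halving follows regardless of any later splits taken in the $s\in\refQ$ branch.
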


\PF
Consider a vertex $v\in V(G)$. Let $R^v_1,\ldots,R^v_{q}$ denote the refining colour classes $\VC_r$ with $v\in \VC_r$ that are considered throughout the computation, in chronological order. Then we observe that for all $i\in \{1,\ldots,q-1\}$, $|R^v_i|\ge 2|R^v_{i+1}|$ holds. This holds because whenever a set $S=\VC_{\SplitCol}$ is split up into $\VC_{\sigma_1},\ldots,\VC_{\sigma_p}$, where $\SplitCol$ has been considered earlier as a refining colour (so it is not in $\refQ$ anymore), then for all new colours $\sigma_i$ that are added to the stack $\refQ$, $|\VC_{\sigma_i}|\le \frac{1}{2}|S|$ holds, since the largest colour class is not added to $\refQ$. 
Note that if a colour class $\VC_{\sigma_i}$ is subsequently split up before $\sigma_i$ is considered as refining colour, the bound of course also holds.
It follows that every $v\in V(G)$ appears at most $\log_2 n$ times in a refining colour class.
Then we can write
\[
\sum_{R} |R|+D^-(R) \le \sum_{v\in V(G)} (1+d^-(v)) \log_2 n = (n+m) \log_2 n,
\]
where the first summation is over all refining colour classes $R=\VC_r$ considered during the computation.
In addition, the total number of new colours that is introduced is at most $n$, since every colour class, after it is introduced, remains nonempty throughout the computation. So we may write
\[
\sum_{i} k_i \log k_i \le \sum_{i} k_i \log n \le n \log n,
\]
where $k_i$ denotes the number of colours introduced during iteration $i$. 
Combining these bounds with Lemma~\ref{lem:complexity} shows that the total complexity of the algorithm can be bounded by $O(n) + O((n+m) \log n)+O(n\log n) \subseteq O((n+m) \log n)$.
\QED

Combining Lemmas~\ref{lem:CoarsestStable}, \ref{lem:Canonical} and~\ref{lem:complexity} (using $S=\{1,\ldots,\ell\}$), we obtain our main theorem:

\begin{thm}
\label{thm:algmain}
For any digraph $G$ on $n$ vertices and $m$ edges, with surjective $\ell$-colouring $\alpha$, in time $O((n+m)\log n)$ a canonical surjective $k$-colouring $\beta$ of $G$ can be computed such that  $\pi_{\beta}$ is the coarsest stable partition that refines $\pi_{\alpha}$. 
\end{thm}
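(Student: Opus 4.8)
The plan is to derive Theorem~\ref{thm:algmain} as an immediate corollary of the three lemmas already established, so the only real work is to assemble them and to verify that choosing $S=\{1,\ldots,\ell\}$ satisfies all hypotheses. First I would note that $\{1,\ldots,\ell\}$ is trivially a sufficient refining colour set for any surjective $\ell$-colouring $\alpha$: the defining condition asks that whenever two vertices $u,v$ in the same class differ in $|N^+(\cdot)\cap C^\alpha_j|$ for \emph{some} $j$, they differ for some $j'\in S$; taking $S$ to be all colours, we may pick $j'=j$. With this observation in hand, Lemma~\ref{lem:CoarsestStable} applies and tells us that Algorithm~\ref{alg:cancolref}, run on $(G,\alpha,\{1,\ldots,\ell\})$, outputs a surjective $k$-colouring $\beta$ of $G$ such that $\pi_\beta$ is the coarsest stable partition refining $\pi_\alpha$ (which by Proposition~\ref{propo:CoarsestStableUnique} is unique).

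Next I would invoke Lemma~\ref{lem:Canonical}, which states that Algorithm~\ref{alg:cancolref} is a canonical colouring algorithm; hence the $\beta$ it produces is, by definition, a canonical colouring of $G$ starting from $\alpha$. Combined with the previous paragraph, this shows the output $\beta$ has all the properties claimed in the theorem statement, \emph{except} for the running time.

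For the complexity, I would simply cite Lemma~\ref{lem:complexity}: Algorithm~\ref{alg:cancolref} has an implementation running in time $O((n+m)\log n)$ where $n=|V(G)|$ and $m=|E(G)|$. One small point worth spelling out is that this bound does not depend on the size of the initial colour range $\ell$ beyond the implicit $\ell\le n$, and that the initialisation $\refQ:=$ the stack of all elements of $S=\{1,\ldots,\ell\}$ in increasing order is computable within the stated budget; this is subsumed by the $O(n)$ initialisation already accounted for in Lemma~\ref{lem:oneiteration} and Lemma~\ref{lem:complexity}. Stitching the three lemmas together then yields exactly the statement of Theorem~\ref{thm:algmain}.

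In short, there is no genuine obstacle here — the theorem is a packaging step. The only thing to be careful about is making the instantiation $S=\{1,\ldots,\ell\}$ explicit so that the hypotheses of Lemma~\ref{lem:CoarsestStable} (which require a \emph{sufficient} refining colour set, not an arbitrary one) are visibly met, and noting that the canonicity statement of Lemma~\ref{lem:Canonical} is uniform in $S$, so it applies to this choice as well. The deeper content all lives in the proofs of Lemmas~\ref{lem:CoarsestStable}, \ref{lem:Canonical}, and~\ref{lem:complexity} (the last of which rests on Lemma~\ref{lem:oneiteration} and Hopcroft's halving argument); Theorem~\ref{thm:algmain} itself follows in a couple of lines.
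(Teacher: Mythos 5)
Your proposal is correct and matches the paper's own proof, which likewise obtains the theorem by combining Lemmas~\ref{lem:CoarsestStable}, \ref{lem:Canonical} and~\ref{lem:complexity} with the instantiation $S=\{1,\ldots,\ell\}$ (whose sufficiency the paper also notes as trivial). Your extra remarks on the uniformity in $S$ and the initialisation cost are harmless elaborations of the same packaging step.
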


\subsubsection{Implementation Details}

It remains to prove Lemma~\ref{lem:oneiteration}. 
In Algorithm~\ref{alg:implem} and its subroutine Algorithm~\ref{alg:subrout}, the detailed, fast implementation of Algorithm~\ref{alg:cancolref} is given.
The colour classes $\VC_i$ are represented by doubly linked lists $\VC[i]$, indexed by $i\in \{1,\ldots,n\}$. 
($\VC$ and $\NC$ are arrays containing (pointers to) doubly-linked lists and lists, respectively, indexed by colour numbers $1,\ldots,n$.)
For all lists $L$, we keep track of their length, which we denote by $|L|$.

The first challenge is how to compute the colour degrees $\cdeg(v)$ efficiently for every $v\in V(G)$ (Lines~\ref{la:8} and~\ref{la:9} of Algorithm~\ref{alg:cancolref}), with respect to the refining colour $\RefCol$, and corresponding colour class $R$. For this we use an array $\ccdeg[v]$ of integers, indexed by $v\in \{1,\ldots,n\}$.
We use the following invariant: at the beginning of every iteration, $\ccdeg[v]=0$ for all $v$. Then we can compute these colour degrees by looping over all in-neighbours $w$ of all vertices $v\in R$, 
and increasing $\ccdeg[w]$. At the same time, we compute the maximum colour degree for every colour $c$, using an array $\maxcdeg$ (this is an array of integers indexed by $c\in \{1,\ldots,n\}$), we compute a list $\AdjCol$ of colours $i$ that contain at least one vertex $w\in \VC_i$ with $\ccdeg[w]\ge 1$, and for every such colour $i$, we compute a list $\NC[i]$ of all vertices $w$ with $\ccdeg[w]\ge 1$. 
None of these lists contain duplicates.
See Lines~\ref{li:14}--\ref{li:21} of Algorithm~\ref{alg:implem}. 
This implementation is correct because we also maintain the following invariant:
at the beginning of every iteration, $\maxcdeg[c]=0$ and $\NC[c]$ is an empty list, for every $c$, $\AdjCol$ is an empty list, and flags are maintained for colours to keep track of membership in $\AdjCol$. 
To maintain this invariant, we reset all of these data structures again at the end of every iteration (Lines~\ref{li:35}--\ref{li:40}). Note that it suffices to only reset $\ccdeg[v]$ for vertices $v$ that occur in some list $\NC[c]$ (Lines~\ref{li:36}--\ref{li:37}).

Next, we address how we can consider all colours that split up in one iteration, in canonical (increasing) order (see Lines~\ref{la:10},\ref{la:11} of Algorithm~\ref{alg:cancolref} and Lines~\ref{li:22}--\ref{li:33}).
To this end, we compute a new list $\SortedAdjCol$, which represents the subset of $\AdjCol$ containing
all colours that actually split up. 
This is necessary since this list needs to be sorted, in order to consider the colours in canonical order (in the for-loop in Line~\ref{li:33}).
By ensuring that all colours in $\SortedAdjCol$ split up, we have that $|\SortedAdjCol|\le k$ (where $k$ is the number of colours introduced in this iteration), and therefore we can afford to sort this list. This can be done using any list sorting algorithm of complexity $O(k \log k)$, such as merge sort.  
To compute which colours split up, we compute for every colour in $c\in \AdjCol$ the maximum colour degree $\maxcdeg[c]$ and minimum colour degree $\mincdeg[c]$. The values $\maxcdeg[c]$ were computed before. Observe that we have $\mincdeg[c]=0$ if $|\NC[c]|<|\VC[c]|$. Otherwise, we can afford to compute $\mincdeg[c]$ by iterating over $\NC[c]$
(see Lines~\ref{li:22}--\ref{li:28}).

\begin{algorithm}
\caption{A fast implementation of Algorithm~\ref{alg:cancolref}}
\label{alg:implem}
\begin{ntabbing}
\ \=\qquad\=\qquad\=\qquad\=\qquad\=\kill

\> For $c\in \{1,\ldots,n\}$:\label{li:1}\\
\> \> $\VC[c]:=$ an empty doubly linked list\label{li:2}\\
\> \> $\NC[c]:=$ an empty list\label{li:3}\\
\> \> $\maxcdeg[c]:=0$\label{li:4}\\
\> For $v\in \{1,\ldots,n\}$:\label{li:5} \algcomm{$V(G)=\{1,\ldots,n\}$}\\
\> \> append $v$ to $\VC[\alpha(v)]$\label{li:6}\\
\> \> $\ccdeg[v]:=0$\label{li:7}\\
\> \> $\COLOR[v]:=\alpha(v)$\label{li:8}\\
\> $\maxcol:=\ell$\label{li:9}\\
\> $\refQ:=$ a stack containing all elements of $S$\label{li:10}\\
\> Sort $\refQ$\label{li:10b}\\
\> $\AdjCol:=$ an empty doubly linked list\label{li:11}\\
\> While $\refQ$ is not empty:\label{li:12}\\
\> \> $\RefCol:=$pop$(\refQ)$\label{li:13}\\
\> \> For $v\in \VC[\RefCol]$:\label{li:14}\\
\> \> \> For $w\in N^-(v)$:\label{li:15}\\
\> \> \> \> $\ccdeg[w]:=\ccdeg[w]+1$\label{li:16}\\
\> \> \> \> If $\ccdeg[w]=1$ then append $w$ to $\NC[\COLOR[w]]$\label{li:17}\\
\> \> \> \> If $\COLOR[w]\not\in \AdjCol$ then\label{li:18} \algcomm{Maintain flag for this test}\\
\> \> \> \> \> append $\COLOR[w]$ to $\AdjCol$\label{li:19}\\
\> \> \> \> If $\ccdeg[w]>\maxcdeg[\COLOR[w]]$ then\label{li:20}\\
\> \> \> \> \> $\maxcdeg[\COLOR[w]]:=\ccdeg[w]$\label{li:21}\\
\> \> For $c\in \AdjCol$:\label{li:22}\\
\> \> \> If $|\VC[c]|\not=|\NC[c]|$ then\label{li:23} \algcomm{Maintain list lengths}\\
\> \> \> \> $\mincdeg[c]:=0$\label{li:24}\\
\> \> \> else\label{li:25}\\
\> \> \> \> $\mincdeg[c]:=\maxcdeg[c]$\label{li:26}\\
\> \> \> \> For $v\in \NC[c]$:\label{li:27}\\
\> \> \> \> \> if $\ccdeg[v]<\mincdeg[c]$ then $\mincdeg[c]:=\ccdeg[v]$\label{li:28}\\
\> \> $\SortedAdjCol:=$ an empty list\label{li:29}\\
\> \> For $c\in \AdjCol$:\label{li:30}\\
\> \> \> If $\mincdeg[c]<\maxcdeg[c]$ then append $c$ to $\SortedAdjCol$\label{li:31}\\
\> \> Sort $\SortedAdjCol$\label{li:32}\\
\> \> For all $\SplitCol\in \SortedAdjCol$, in increasing order:\label{li:33}\\
\> \> \> {\sc SplitUpColour($\SplitCol$)}\label{li:34} \algcomm{Subroutine, see next page}\\
\> \> \algincomm{Reset the attributes here, ready for the next iteration (next choice of $\RefCol$)}:\\
\> \> For $c\in \AdjCol$:\label{li:35}\\
\> \> \> For $v\in \NC[c]$:\label{li:36}\\
\> \> \> \> $\ccdeg[v]:=0$\label{li:37}\\
\> \> \> $\maxcdeg[c]:=0$\label{li:38}\\
\> \> \> $\NC[c]:=$ an empty list\label{li:39}\\
\> \> \> Remove $c$ from $\AdjCol$\label{li:40}\\
\> \algincomm{End of algorithm:}\\
\> Return the array $\COLOR$.\label{li:41} \algcomm{This is the final colouring $\beta$}\\

\end{ntabbing}
\vspace{-0.4cm} 
\end{algorithm}

Finally, we need to show how a single colour class $S=\VC[\SplitCol]$
can be split up efficiently, and how the appropriate new colours can be added to the stack $\refQ$ in the proper order (Lines~\ref{la:11}--\ref{la:29} of Algorithm~\ref{alg:cancolref}).
The details of this procedure are given in Algorithm~\ref{alg:subrout}.
Firstly, for every relevant $d$, we compute how many vertices in $\VC[\SplitCol]$ have colour degree $d$. These values are stored in an array $\NumDeg[d]$, indexed by $d\in \{0,\ldots,\maxcdeg[\SplitCol]\}$ (Lines~\ref{ls:2}--\ref{ls:6}).
Using this array $\NumDeg$, we can easily compute the (minimum) colour degree $\biggestclass$ that occurs most often in $S$ (Lines~\ref{ls:7}--\ref{ls:9}), 
which corresponds to the new colour that is possibly not added to $\refQ$.
Using $\NumDeg$, we can also easily construct an array $\NewCol$, indexed by $d\in \{0,\ldots,\maxcdeg[\SplitCol]\}$, which represents the mapping from colour degrees that occur in $S$ to newly introduced colours, or to the current colour $\SplitCol$
(Lines~\ref{ls:11}--\ref{ls:19}). 
Finally, we can 
move all vertices $v\in\NC[\SplitCol]$ from $\VC[\SplitCol]$ to $\VC[i]$, where $i=\NewCol[\ccdeg[v]]$ is the new colour that corresponds to the colour degree of $v$
(Lines~\ref{ls:20}--\ref{ls:24}). 
Note that looping over $\NC[\SplitCol]$ suffices, because if there are vertices in $\VC[\SplitCol]$ with colour degree 0, then these keep the same colour, and thus do not need to be addressed. This fact is essential since the number of such vertices may be too large to consider, for our desired complexity bound.
In conclusion, Algorithms~\ref{alg:implem} and~\ref{alg:subrout} are indeed implementations of Algorithm~\ref{alg:cancolref}. We now prove Lemma~\ref{lem:oneiteration} by analysing the complexity. 

\begin{algorithm}
\caption{Subroutine {\sc SplitUpColour$(\SplitCol)$}}
\label{alg:subrout}
\begin{ntabbing}
\ \=\qquad\=\qquad\=\qquad\=\qquad\=\qquad\=\qquad\=\kill
\> $\maxcdeg:=\maxcdeg[\SplitCol]$\label{ls:1}\\
\> For $i\in [1,\ldots,\maxcdeg]$:\label{ls:2}\\
\> \> $\NumDeg[i]:=0$\label{ls:3}\\
\> $\NumDeg[0]:=|\VC[\SplitCol]|-|\NC[\SplitCol]|$\label{ls:4}\\
\> For $v\in \NC[\SplitCol]$:\label{ls:5}\\
\> \> $\NumDeg[\ccdeg[v]]:=\NumDeg[\ccdeg[v]]+1$\label{ls:6}\\
\> $\biggestclass:=0$\label{ls:7}\\
\> For $i\in [1,\ldots,\maxcdeg]$:\label{ls:8}\\
\> \> If $\NumDeg[i]>\NumDeg[\biggestclass]$ then $\biggestclass:=i$\label{ls:9}\\

\> If $\SplitCol\in \refQ$ then $\wasinQ:=1$ else $\wasinQ:=0$\label{ls:10} \algcomm{maintain flag for this test}\\

\> For $i\in [0,\ldots,\maxcdeg]$:\label{ls:11}\\
\> \> If $\NumDeg[i]\ge 1$ then\label{ls:12}\\
\> \> \> If $i=\mincdeg[\SplitCol]$ then\label{ls:13}\\
\> \> \> \> $\NewCol[i]:=\SplitCol$\label{ls:14}\\
\> \> \> \> If $\wasinQ=0$ and $b\not=i$ then push$(\refQ,\NewCol[i])$\label{ls:15}\\
\> \> \> else\label{ls:16}\\
\> \> \> \> $\maxcol:=\maxcol+1$\label{ls:17}\\
\> \> \> \> $\NewCol[i]:=\maxcol$\label{ls:18}\\
\> \> \> \> If $\wasinQ=1$ or $i\not=b$ then push$(\refQ,\NewCol[i])$\label{ls:19}\\

\> For $v\in \NC[\SplitCol]$:\label{ls:20}\\
\> \> If $\NewCol[\ccdeg[v]]\not=\SplitCol$ then\label{ls:21}\\
\> \> \> Delete $v$ from $\VC[\SplitCol]$\label{ls:22}\\
\> \> \> Append $v$ to $\VC[\NewCol[\ccdeg[v]]]$\label{ls:23}\\
\> \> \> $\COLOR[v]:=\NewCol[\ccdeg[v]]$\label{ls:24}
\end{ntabbing}
\vspace{-0.4cm} 
\end{algorithm}

\noindent
{\bf Proof of Lemma~\ref{lem:oneiteration}:}
The given implementation uses a number of arrays of length $n$, either containing integers ($\ccdeg$, $\COLOR$, $\maxcdeg$, $\NumDeg$, $\NewCol$), or containing (pointers to) lists/doubly linked lists ($\VC$, $\NC$). 
All of these arrays can be initialised in time $O(n)$. In general, the initialisation steps (Lines~\ref{li:1}--\ref{li:11}) take time $O(n)$ (for Line~\ref{li:10b}, use bucket sort).

We first consider the complexity of the subroutine {\sc SplitUpColour$(\SplitCol)$}, given in Algorithm~\ref{alg:subrout}. 
We prove that it terminates in time $O(D^+_R(S))$, where $R=\VC[r]$ denotes the refining colour class, $S=\VC[\SplitCol]$ denotes the class to be split up, and $D^+_R(S)=\sum_{v\in S} |N^+(v)\cap R|$. 
Every (non-loop) line takes constant time. For the list deletion (Line~\ref{ls:22}), this requires a proper implementation of doubly linked lists. The test in Line~\ref{ls:10} whether $\SplitCol\in \refQ$ can be done in constant time by maintaining a 0/1 flag for every colour, which indicates whether the colour is in $\refQ$. Since colours are added to and deleted from the stack $\refQ$ one by one, maintaining these flags is no problem.
All for-loops in Algorithm~\ref{alg:subrout} are repeated either $\maxcdeg[\SplitCol]$ times or $|\NC[\SplitCol]|$ times. Both values are bounded by $D^+_R(S)$.
So the total complexity of one call to the subroutine can be bounded by $O(D^+_R(S))$.

Now consider the complexity of one while loop iteration of Algorithm~\ref{alg:implem}.
The first two (nested) for loops (Lines~\ref{li:14}--\ref{li:21}) take time $O(|R|+D^-(R))$. 
This holds because in total, $D^-(R)$ choices of $w$ are considered, and the operations for every such choice take constant time. The test in Line~\ref{li:18} can be implemented in constant time using a 0/1 flag that keeps track of whether a colour appears in $\AdjCol$. Since elements are added to and deleted from $\AdjCol$ one by one (Lines~\ref{li:19},~\ref{li:40}), maintaining these flags is again no problem. 

Since $|\AdjCol|\le D^-(R)$, the complexity of the for loops in Lines~\ref{li:22} and~\ref{li:30} can be bounded by $O(D^-(R))$.
Sorting $\SortedAdjCol$ takes time $O(k \log k)$, when using e.g. merge sort, since $|\SortedAdjCol|\le k$ (every colour in $\SortedAdjCol$ will split up and thus introduce at least one new colour).
One call to the subroutine {\sc SplitUpColour$(\SplitCol)$} takes time $O(D^+_R(S))$, with $S=\VC[\SplitCol]$, as shown above.
Since
\[
\sum_{\SplitCol\in \SortedAdjCol} D^+_R(\VC[\SplitCol]) \le D^-(R),
\]
the complexity of the for-loop in Line~\ref{li:33} can be bounded by $O(D^-(R))$.
The complexity of the last for-loop (Line~\ref{li:35}) can also be bounded by $O(D^-(R))$. Note in particular that in total, at most $D^-(R)$ choices of $v$ are considered in Line~\ref{li:37}.
This shows that the complexity of one iteration of the while-loop can be bounded by $O(|R|+D^-(R)+k\log k)$.
\QED

\subsection{Extensions, Generalisations and Variants}
\label{ssec:extensions}

\subsubsection{Stack vs.\ queue} 
In our algorithm, we use a stack to select the next colour that should be used for the next refining operation, whereas previous similar algorithms use a queue~\cite{mck81,paitar87}. 
Firstly, we remark that if we replace the stack by a queue, it can easily be checked that all of the claims proved in the previous sections still hold. So the best choice is determined by other concerns, which we now shortly discuss.

Using a queue gives the nice property that during the algorithm execution, all of the following `standard' partitions will be generated: given an initial partition $\pi=\pi_0$ of the vertices $V$ of a graph $G$, for every $i\ge 0$ one can define $\pi_{i+1}$ to be the partition obtained from $\pi_i$ using the refining operation $(V,V)$. The coarsest stable partition of $G$ that refines $\pi$ is now the first partition $\pi_i$ with $\pi_i=\pi_{i+1}$. This characterisation is sometimes used as an alternative definition of coarsest stable partitions. One can verify that when using a queue, for every $i$ a colouring $\alpha$ with $\pi_{\alpha}=\pi_i$ will be generated during the execution of the algorithm. 

When using a stack, the behaviour of the algorithm seems somewhat less predictable. Nevertheless, this yields a `depth-first' type of strategy that tends to give very small colour classes much quicker, which seems an advantage. In our own (limited) computational studies, we observed that using a stack was never worse than using a queue, and in some cases significantly better. Furthermore, we had an earlier lower bound example construction that required time $\Omega((n+m)\log n)$ for a queue-based algorithm, but could be solved in time $O(n+m)$ using a stack-based algorithm. For these reasons, we would recommend using a stack.

\begin{algorithm}
\caption{Iterative Colour Refinement}
\label{alg:fullbranch}
{\bf Input:} A digraph $G$ with $V(G)=\{1,\ldots,n\}$.
\begin{ntabbing}
\ \=\qquad\=\qquad\=\kill
\> Compute a surjective canonical coarsest stable $k$-colouring $\beta$ of $G$.\label{lb:1}\\
\> While $k<n$:\label{lb:2}\\
\> \> $\alpha:=\beta$\label{lb:3}\\
\> \> $\ell:=k$\label{lb:4}\\
\> \> Choose a vertex $v$ with a non-unique colour in $\alpha$.\label{lb:5}\\
\> \> $\ell:=\ell+1$\label{lb:6new}\\
\> \> $\alpha(v):=\ell$\label{lb:7new}\\
\> \> Compute a surjective canonical $k$-colouring $\beta$ of $G$ such that\label{lb:8}\\
\> \> \> $\pi_{\beta}$ is the coarsest stable partition that refines $\pi_{\alpha}$.\\
\> Return $\beta$\label{lb:9}\\
\end{ntabbing}
\vspace{-0.4cm} 
\end{algorithm}

\subsubsection{The Complexity of Iterative Refinement}
Consider Algorithm~\ref{alg:fullbranch}. This algorithm takes as input a digraph $G$ on $n$ vertices, and returns a discrete colouring $\beta$ of $G$, or more precisely: a surjective $n$-colouring of $G$. 
This colouring is not canonical, since in Line~\ref{lb:5}, an arbitrary vertex is chosen to be {\em individualised}, that is, to receive a unique colour. So by itself this algorithm is not very interesting (there are easier ways to obtain an arbitrary discrete colouring of $G$). 
However, it corresponds to {\em one recursion branch} of various state of the art canonical labelling algorithms, based on the algorithm introduced by McKay~\cite{mck81}. 
We now shortly sketch how one should modify this algorithm (into a recursive algorithm) to obtain such a canonical labelling algorithm:
In Line~\ref{lb:5}, instead choose a colour class $C^\alpha_i$ of the current colouring $\alpha$ with $|C^{\alpha}_i|\ge 2$. We {\em branch} on this colour class, as follows: for {\em every} $v\in C^{\alpha}_i$, continue with a separate branch of the algorithm where $v$ is individualised (Line~\ref{lb:7new}), and a new stable colouring is computed (as shown in Line~\ref{lb:8}). Continuing recursively this way, one obtains a number of discrete colourings of $G$; one for every leaf of the recursion tree. 
A canonical discrete colouring of $G$ can be obtained by choosing one of these colourings that maximises some value. For instance, consider the adjacency matrix representation of $G$ where rows and columns are ordered according to the colour numbers, and view this as a binary number in the straightforward way. 
This is the basic algorithm; by keeping track of automorphisms of the graph, there are various ways to speed up the algorithm by pruning the recursion tree. For more details, we refer to~\cite{mck81,darlifsak+04,junkas07,pip11,mck07}. 

The algorithm for obtaining a canonical discrete colouring $\beta$ for a digraph $G$ sketched above does not terminate in polynomial time for all graphs $G$. (If it did, this would yield a polynomial time isomorphism test: for two digraphs $G$ and $G'$, compute canonical discrete $n$-colourings $\beta$ and $\beta'$, respectively. Since $\beta$ and $\beta'$ are discrete $n$-colourings, they define a unique colour preserving bijection $h:V(G)\to V(G')$. Since  $\beta$ and $\beta'$ are canonical, $G$ and $G'$ are isomorphic if and only if $h$ is an isomorphism.) Examples are known where such an algorithm will consider an exponential number of branches~\cite{Miyazaki}. 
Nevertheless, a single branch of this algorithm (as shown in Algorithm~\ref{alg:fullbranch}) terminates quickly. From~\cite{mck81} it follows that Algorithm~\ref{alg:fullbranch} has an implementation that terminates in time $O(n^2 \log n)$.
Using our results, we can show that it has an $O((n+m) \log n)$ implementation.

\begin{thm}
Algorithm~\ref{alg:fullbranch} can be implemented such that it terminates in time $O((n+m) \log n)$, where $n=|V(G)|$ and $m=|E(G)|$ for the input graph $G$.
\end{thm}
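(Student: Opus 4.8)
The plan is to avoid recomputing a stable colouring from scratch in each iteration of the while-loop of Algorithm~\ref{alg:fullbranch}, and instead to maintain the data structures of Algorithm~\ref{alg:implem} throughout the whole execution, amortising the total work by a global version of Hopcroft's argument. Concretely, I would implement Line~\ref{lb:1} by running the $O(n)$ initialisation of Algorithm~\ref{alg:implem} (with $\alpha$ the unit colouring, $\ell=1$, $S=\{1\}$) followed by its main while-loop, stopping when $\refQ$ becomes empty. In the body of the while-loop of Algorithm~\ref{alg:fullbranch}, rather than recomputing (Line~\ref{lb:8}), I would individualise the chosen vertex $v$ in place: if $v$ has colour $c$, create a fresh colour $\maxcol+1$, move $v$ from $\VC[c]$ into a new singleton list $\VC[\maxcol+1]$, set $\COLOR[v]:=\maxcol+1$, increment $\maxcol$, push $\maxcol$ onto $\refQ$ (updating the membership flag), and simply resume the main while-loop of Algorithm~\ref{alg:implem}. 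This resumption is legitimate because whenever $\refQ$ is empty the colour-degree data structures $\ccdeg$, $\maxcdeg$, $\NC$, $\AdjCol$ and the flags are all in their clean state, and individualisation does not touch them. To implement Line~\ref{lb:5} in constant time I would keep an auxiliary list $\blk$ of all colours whose class has size $\ge 2$, updated whenever a class is split (using the sizes $\NumDeg[\cdot]$ already computed in Algorithm~\ref{alg:subrout}, and trivially at an individualisation step); this costs $O(1)$ per created colour.

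Next I would check correctness, i.e.\ that resuming the while-loop after individualising $v$ in its class $C$ (of colour $c$) computes the coarsest stable partition refining the individualised colouring. At that point $\refQ$ contains only the colour of $\{v\}$, and this singleton colour is already a sufficient refining colour set for the new colouring: the only classes that can witness a new instability are $\{v\}$ and $C\setminus\{v\}$, since every other class is unchanged and the previous colouring was stable; and because $C$ was a class of the previous stable colouring, $|N^+(z)\cap C|=|N^+(z)\cap\{v\}|+|N^+(z)\cap(C\setminus\{v\})|$ for every $z$, so any discrepancy with respect to $C\setminus\{v\}$ between two vertices of a common class is also a discrepancy with respect to $\{v\}$. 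Lemma~\ref{lem:CoarsestStable}, applied to the individualised colouring with this singleton as $S$, then gives the claim, so the maintained-state implementation reproduces the behaviour of Algorithm~\ref{alg:fullbranch}.

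For the running time I would invoke Lemma~\ref{lem:oneiteration}: apart from the single $O(n)$ initialisation and the $O(1)$ cost of each of the at most $n$ individualisation steps, the whole cost is $\sum_R O(|R|+D^-(R)+k_R\log k_R)$, where the sum is over all iterations of the refinement while-loop during the whole run and $k_R$ is the number of colours created in iteration $R$. Since the final colouring is discrete, at most $n$ colours are created in total, so $\sum_R k_R\le n$ and $\sum_R k_R\log k_R\le n\log n$. The crucial point is that the sequence of refining colour classes of each vertex still shrinks geometrically \emph{over the whole execution}: for every $w$, if $R^w_1,R^w_2,\dots$ are the successive refining colour classes containing $w$, then $|R^w_{i+1}|\le\tfrac12|R^w_i|$. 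Inside one refinement run this is exactly Hopcroft's trick as in the proof of Lemma~\ref{lem:complexity}; across runs one uses that once $\refQ$ empties, every colour is ``retired'' (it was either popped or omitted as a largest part), a retired colour stays retired, and when it is later re-split the all-but-largest pushing rule applies --- in particular, individualising $v\in C$ with $|C|\ge2$ splits $C$ into the singleton $\{v\}$ (which has size $1\le|C|/2$ and is the part pushed) and $C\setminus\{v\}$ (which retains the retired colour $c$), so every subclass of $C$ ever pushed afterwards has size at most $|C|/2$. Hence each vertex lies in a refining colour class at most $\log_2 n$ times, so $\sum_R(|R|+D^-(R))\le\sum_{w\in V(G)}(1+d^-(w))\log_2 n=(n+m)\log_2 n$, and the total is $O(n)+O((n+m)\log n)+O(n\log n)\subseteq O((n+m)\log n)$.

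I expect the main obstacle to be precisely this cross-boundary analysis: one must verify carefully that the geometric shrinking of refining colour classes --- established in the proof of Lemma~\ref{lem:complexity} only within a single invocation of colour refinement --- survives both the in-place individualisation steps and the transitions from one stable colouring to the next, i.e.\ that a colour that has already served as a refining colour (or been omitted as a largest part) behaves, when later re-split, exactly as it would inside a single run. Once this is established the amortisation is routine.
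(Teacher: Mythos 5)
Your proposal is correct and follows essentially the same route as the paper's proof: maintain the data structures of the fast implementation across all calls, individualise in place with the stack reset to the fresh singleton colour (justified by the same sufficient-refining-colour-set argument based on stability of the previous colouring), select the vertex via a maintained list of non-singleton classes, and amortise via Hopcroft's halving argument applied over the entire execution, treating the individualisation step as a split into $\{v\}$ and $C\setminus\{v\}$. Your slightly more explicit discussion of why the halving survives the run boundaries for the vertices of $C\setminus\{v\}$ is a welcome elaboration of what the paper dispatches with ``in all other cases the earlier argument applies,'' but it is not a different proof.
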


\PF
The main part of the computation occurs in Lines~\ref{lb:1} and~\ref{lb:8}, where we compute a surjective canonical $k$-colouring $\beta$ such that $\pi_{\beta}$ refines $\pi_{\alpha}$, for a given surjective $\ell$-colouring $\alpha$ (in Line~\ref{lb:1}, the unit colouring is chosen for $\alpha$). 
For this we use the fast implementation of Algorithm~\ref{alg:cancolref}, given in Section~\ref{ssec:fastimplementation}. To obtain the desired complexity, we make the following simple changes, compared to Algorithm~\ref{alg:cancolref}: we do not initialise the sets $\VC_i$ and stack $\refQ$ every time we call the algorithm (Lines~\ref{la:2}--\ref{la:5}), and do not explicitly compute the new colouring $\beta$ (Lines~\ref{la:29b}--\ref{la:30}). In addition, we do not actually copy the the colouring $\beta$ (Line~\ref{lb:3} of Algorithm~\ref{alg:fullbranch}). 
Instead, we initialise these sets once, keep working with the same sets $\{\VC_1,\ldots,\VC_{\maxcol}\}$ throughout different iterations of the while-loop in Algorithm~\ref{alg:fullbranch}, and and only compute the corresponding colouring $\beta$ at the very end of the algorithm. 
Whenever we individualise a vertex $v$ by assigning it a new colour (Line~\ref{lb:7new} of Algorithm~\ref{alg:fullbranch}), we move $v$ from its previous colour class $\VC_i$ to the new colour class $\VC_{\ell+1}$. In addition, we update the stack $\refQ$, which is currently empty, to contain the single colour $\ell+1$. (This can both be done in constant time.) 

We now argue that for computing the next stable colouring $\beta$ (Line~\ref{lb:8}), it is sufficient that $\refQ$ contains only the colour $\ell+1$. 
Denote by $\alpha_1$ the stable $\ell$-colouring before this step (with $\alpha_1(v)=i$), and by $\alpha_2$ the new $(\ell+1)$-colouring (with $\alpha_2(v)=\ell+1$). 
Consider the colour classes $C^{\alpha_1}_i$, $C^{\alpha_2}_i$ and $C^{\alpha_2}_{\ell+1}$, so $\{C^{\alpha_2}_i,C^{\alpha_2}_{\ell+1}\}$ is a partition of $C^{\alpha_1}_i$. 
Consider any two vertices $u,v$ with $\alpha_2(u)=\alpha_2(v)$. If $|N^+(u)\cap C^{\alpha_2}_i|\not=|N^+(v)\cap C^{\alpha_2}_i|$, then $|N^+(u)\cap C^{\alpha_2}_{\ell+1}|\not=|N^+(v)\cap C^{\alpha_2}_{\ell+1}|$, since $|N^+(u)\cap C^{\alpha_1}_i|=|N^+(v)\cap C^{\alpha_1}_i|$ (because $\alpha_1$ is stable). 
We conclude that $\{\ell+1\}$ is a sufficient refining colour set for $\alpha_2$, so Algorithm~\ref{alg:cancolref} will compute the desired stable colouring $\beta$ when $\refQ$ is initialised like this (Lemma~\ref{lem:CoarsestStable}). 

We can now use the same argument as in the proof of Theorem~\ref{thm:algmain} to show that the total complexity of all calls to Algorithm~\ref{alg:cancolref} (without the initialisation steps, as described above) is bounded by $O((n+m)\log n)$. Indeed, for every vertex $v\in V(G)$, if $R^v_1,\ldots,R^v_{q}$ denote the refining colour classes $\VC_r$ with $v\in \VC_r$ that are considered throughout the entire computation, in chronological order, then again for all $i\in \{1,\ldots,q-1\}$ it holds that $|R^v_i|\ge 2|R^v_{i+1}|$. If $v$ is the vertex that is individualised in Line~\ref{lb:7new} (of Algorithm~\ref{alg:fullbranch}), then this holds because the next refining colour class that contains $v$ has size one, whereas the previous colour class that contained $v$ had size at least two (because $v$ was chosen with a non-unique colour in Line~\ref{lb:5}). 
In all other cases, the argument given in the proof of Theorem~\ref{thm:algmain} applies. Following that proof, this shows that the total complexity of all refining operations done in Algorithm~\ref{alg:fullbranch} can be bounded by $O((n+m)\log n)$.  

It remains to bound the complexity of the other steps of Algorithm~\ref{alg:fullbranch}. As described above, the various sets are initialised only once, and the final colouring $\beta$ is computed only once, so this only adds a term $O(n)$ to the complexity. In addition, all steps in the while-loop (Line~\ref{lb:2}) other than the stable colouring computation in Line~\ref{lb:8} can be done in constant time, since we do not actually copy the colouring (Line~\ref{lb:3}). 
For the selection of the vertex $v$ in Line~\ref{lb:5}, this claim is not entirely obvious, but one can observe that during the computation, one can maintain a doubly linked list that contains the colours of all colour classes of size at least two. This list can be updated in constant time whenever vertices are recoloured (so it does not change the total asymptotic complexity), and it can be used to select a vertex in Line~\ref{lb:5} in constant time. The while-loop in Line~\ref{lb:2} terminates after at most $n$ iterations. In total, this shows that Algorithm~\ref{alg:fullbranch} has an implementation with complexity $O(n)+O(n)+O((n+m)\log n) \subseteq O((n+m)\log n)$. 
\QED

We remark that in practice, one might wish to use smarter methods to select the vertex $v$ to be individualised (Line~\ref{lb:5}), or more generally, to select the nontrivial colour class on which the recursive canonical labelling algorithm should branch. For instance, one can always branch on the smallest nontrivial colour class, or on the largest colour class\footnote{Our own computational tests showed, somewhat surprisingly, that branching on the {\em largest} colour class is clearly the best strategy of these two.}. 
In that case, an efficient heap-based priority queue implementation (see e.g.~\cite{fretar87}) can be used instead of a doubly-linked list to keep track of the sizes of colour classes, to attain the above complexity.

\subsubsection{Alternative Stability Criteria}
We formulated our results only for digraphs, with stability defined only in terms of out-neighbours. 
We now summarise how our results should be modified to accommodate alternative stability criteria.
\begin{thm}
\label{thm:algmain_undir}
For any undirected graph $G$ on $n$ vertices with $m$ edges, in time $O((n+m)\log n)$ a canonical coarsest stable colouring can be computed.
\end{thm}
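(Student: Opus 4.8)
The plan is to reduce the undirected case to the already-established digraph case (Theorem~\ref{thm:algmain}) by a standard symmetrisation trick, and to argue that canonicity and the complexity bound are preserved. Concretely, given an undirected graph $G=(V,E)$ on $n$ vertices and $m$ edges, I would form the digraph $\overrightarrow{G}$ on the same vertex set $V$ in which each undirected edge $\{u,v\}$ is replaced by the two arcs $(u,v)$ and $(v,u)$. Then $\overrightarrow{G}$ has $n$ vertices and $2m$ arcs, and for every vertex $v$ we have $N^+_{\overrightarrow{G}}(v)=N^-_{\overrightarrow{G}}(v)=N_G(v)$. I would start from the unit colouring $\alpha$ of $V$ and apply Theorem~\ref{thm:algmain} to $\overrightarrow{G}$ with $\alpha$.

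The key observation is that stability for $\overrightarrow{G}$ with respect to out-neighbourhoods coincides with stability for $G$: a partition $\pi$ of $V$ satisfies $|N^+_{\overrightarrow{G}}(u)\cap R|=|N^+_{\overrightarrow{G}}(v)\cap R|$ for all $u\approx_\pi v$ and all $R\in\pi$ exactly when $|N_G(u)\cap R|=|N_G(v)\cap R|$ under the same conditions, because $N^+_{\overrightarrow{G}}=N_G$. Hence the coarsest stable partition of $\overrightarrow{G}$ refining the unit partition is precisely the coarsest stable partition of $G$, and by Proposition~\ref{propo:CoarsestStableUnique} this partition is unique. So the colouring $\beta$ produced by Algorithm~\ref{alg:cancolref} on $\overrightarrow{G}$ gives exactly a coarsest stable colouring of $G$.

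For canonicity, I would note that every isomorphism $h\colon V(G)\to V(G')$ of undirected graphs is also an isomorphism $\overrightarrow{G}\to\overrightarrow{G'}$ of the associated digraphs, and that it is colour preserving for the (trivially matching) unit colourings; Lemma~\ref{lem:Canonical} then guarantees that $h$ is colour preserving for the output colourings. Thus the composed procedure ``build $\overrightarrow{G}$, run Algorithm~\ref{alg:cancolref}'' is a canonical colouring method in the sense defined in Section~\ref{sec:alg}. For the running time, building $\overrightarrow{G}$ from the edge list of $G$ takes time $O(n+m)$, and since $\overrightarrow{G}$ has $n$ vertices and $2m$ arcs, Theorem~\ref{thm:algmain} gives a running time of $O((n+2m)\log n)=O((n+m)\log n)$, which dominates.

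I do not expect a genuine obstacle here; the only points requiring a little care are (i) checking that the definition of refining operation and the notion of ``sufficient refining colour set'' for $\overrightarrow{G}$ behave correctly under this transformation (they do, since out-neighbourhoods in $\overrightarrow{G}$ are just the undirected neighbourhoods), and (ii) confirming that the in-neighbourhood information used internally by the fast implementation (Lemma~\ref{lem:oneiteration}, which charges $D^-(R)$) is available — but $N^-_{\overrightarrow{G}}(v)=N_G(v)$ is immediate from the construction, so no extra bookkeeping is needed. One could alternatively observe directly that Algorithm~\ref{alg:cancolref} and its analysis go through verbatim with ``out-neighbour'' read as ``neighbour'' throughout, but the reduction to the digraph theorem is cleaner and avoids repeating the argument.
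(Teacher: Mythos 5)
Your reduction is correct and is exactly the paper's own argument: replace each undirected edge by two opposite arcs, observe that stability and isomorphisms coincide for $G$ and the resulting digraph, and invoke Theorem~\ref{thm:algmain} to get canonicity and the $O((n+m)\log n)$ bound. Nothing further is needed.
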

\PF
For an undirected graph $G$, denote by $G^*$ the digraph with $V(G^*)=V(G)$, constructed by replacing every undirected edge by two directed edges in both directions. 
Observe that a colouring $\alpha$ is stable for $G$ if and only if it is stable for $G^*$, so we can use the fast implementation of Algorithm~\ref{alg:cancolref} on input $G^*$ to compute a coarsest stable colouring of $G$. 
Next, observe that a bijection $h:V(G)\to V(H)$ is an isomorphism from $G$ to $H$ if and only if it is a (digraph) isomorphism from $G^*$ to $H^*$. It follows that the computed colouring is a {\em canonical} coarsest stable colouring.
\QED

For a positive integer $p$, we define a {\em $p$-edge coloured digraph} $G$ to be a tuple 
$(V,E,c)$ where $(V,E)$ is a digraph that may have parallel edges and/or loops, and $c:E\to \{1,\ldots,p\}$ is an edge colouring of $G$. For $e\in E$, we write $e=(u,v)$ to denote that $e$ is an edge from $u$ to $v$.  
For $j\in \{1,\ldots,p\}$, $v\in V$ and $C\subseteq V$, denote $d^+_j(v,C)=|\{ e\in E \mid \exists w\in C \ e=(v,w) \wedge c(e)=j\}|$ (the number of edges of colour $j$, leaving $v$, with head in $C$).
A (vertex) $\ell$-colouring $\alpha$ of $G$ is called {\em edge-colour stable} if for all $u,v\in V$ with $\alpha(u)=\alpha(v)$, all $j\in \{1,\ldots,p\}$ and all $i\in \{1,\ldots,\ell\}$, it holds that 
$d^+_j(u,C^{\alpha}_i) = d^+_j(v,C^{\alpha}_i)$. 
For two $p$-edge coloured digraphs $G=(V,E,c)$ and $G'=(V',E',c')$, a bijection $h:V\to V'$ is called an {\em isomorphism} if for all $j\in \{1,\ldots,p\}$ and $u,v\in V$ (possibly the same), it holds that 
the number of edges of colour $j$ from $u$ to $v$ equals the number of edges of colour $j$ from $h(u)$ to $h(v)$.
Using this notion of isomorphism, canonical colouring methods/canonical colourings for edge coloured digraphs are defined the same as before.
\begin{thm}
\label{thm:algmain_edgecol}
Let $G=(V,E,c)$ be an edge coloured digraph with $n=|V|$ and $m=|E|$. 
In time $O((n+m)\log (n+m))$,
a canonical coarsest edge-colour stable colouring can be computed for $G$. 
\end{thm}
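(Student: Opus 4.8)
\PFSK
The plan is to reduce to the plain-digraph case, Theorem~\ref{thm:algmain}, by an edge-subdivision gadget, just as Theorem~\ref{thm:algmain_undir} reduces undirected graphs to digraphs. Given $G=(V,E,c)$, build a simple loopless digraph $\widehat G$: for every edge $e=(u,v)\in E$ add a fresh vertex $x_e$ together with the arcs $u\to x_e$ and $x_e\to v$, so that $V(\widehat G)=V\disjunion\{x_e:e\in E\}$ has $n+m$ vertices, $\widehat G$ has $2m$ arcs, and $\widehat G$ is obtained from the edge list of $G$ in $O(n+m)$ time. As initial colouring $\widehat\alpha$, give every $v\in V$ one fixed colour and give every $x_e$ a colour that encodes $c(e)$ (relabelled so that $\widehat\alpha$ is a surjective $\ell$-colouring with $\ell\le p+1$). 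Running the implementation behind Theorem~\ref{thm:algmain} on $(\widehat G,\widehat\alpha)$ costs $O\big((n+m+2m)\log(n+m)\big)=O\big((n+m)\log(n+m)\big)$ and yields the coarsest stable colouring $\widehat\beta$ of $\widehat G$ refining $\pi_{\widehat\alpha}$; the output for $G$ is $\gamma:=\widehat\beta\big|_V$, computed in a further $O(n)$ steps.

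The heart of the proof is that $\gamma$ is the unique coarsest edge-colour stable colouring of $G$, which I would show in two parts. First, \emph{$\gamma$ is edge-colour stable.} Because refining operations only split classes, the $\widehat\beta$-classes never mix $V$ with $\{x_e:e\in E\}$, and two gadget vertices $x_e,x_{e'}$ share a $\widehat\beta$-class iff they stay together throughout the run; since $x_e$ has the single out-neighbour $v$ (the head of $e$), an induction over the sequence of refining operations shows this happens iff $c(e)=c(e')$ (separated by $\widehat\alpha$ and never merged) \emph{and} the heads of $e$ and $e'$ lie in the same $\widehat\beta$-class (forced by stability, and kept in sync while the heads stay together). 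Hence for every edge colour $j$ and every $\widehat\beta$-class $C\subseteq V$ there is exactly one ``gadget colour'' collecting all $x_e$ with $c(e)=j$ and head in $C$, and a vertex $u\in V$ has exactly $d^+_j(u,C)$ out-neighbours of that colour in $\widehat G$; applying stability of $\widehat\beta$ to $u$ and any $v$ with $\widehat\beta(u)=\widehat\beta(v)$ yields $d^+_j(u,C)=d^+_j(v,C)$. Second, \emph{$\gamma$ is coarsest.} Given any edge-colour stable colouring $\delta$ of $G$, lift it to a colouring $\widehat\delta$ of $\widehat G$ agreeing with $\delta$ on $V$ and colouring each $x_e$ by an integer encoding of the pair $\big(\delta(\text{head of }e),\,c(e)\big)$, drawn from a range disjoint from the $\delta$-values. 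A short case check---the $V$-vertices use edge-colour stability of $\delta$, each gadget vertex has a single out-neighbour---shows $\widehat\delta$ is stable for $\widehat G$ and refines $\pi_{\widehat\alpha}$, so $\pi_{\widehat\beta}\refby\pi_{\widehat\delta}$ by Propositions~\ref{propo:RefOpSafe} and~\ref{propo:CoarsestStableUnique}; restricting to $V$ gives $\pi_\gamma\refby\pi_\delta$. Together with the first part, $\gamma$ is pinned down.

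For canonicity, note that an isomorphism $h:V\to V'$ of $p$-edge coloured digraphs carries no canonical edge map, but since for each ordered pair of vertices and each colour the relevant arc-multisets of $G$ and $G'$ are equinumerous, one may fix an edge bijection $\widehat h:E\to E'$ that is compatible with $h$ and colour-preserving; then $h$ together with $x_e\mapsto x_{\widehat h(e)}$ is a digraph isomorphism $\widehat G\to\widehat{G'}$ preserving $\widehat\alpha$ and $\widehat{\alpha'}$, so by Lemma~\ref{lem:Canonical} it preserves $\widehat\beta$ and $\widehat{\beta'}$, and hence $h$ preserves $\gamma$ and $\gamma'$. The step I expect to be the main obstacle is making the inductive claim about the gadget vertices precise---that their stable colours encode exactly the pair (edge colour, class of the head), so that out-neighbourhood counts in $\widehat G$ reproduce the quantities $d^+_j(u,C)$; correctly absorbing parallel edges and loops into the gadget (and into the choice of $\widehat h$) is routine but deserves explicit care.
\QED
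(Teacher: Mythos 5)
Your proposal is correct and follows essentially the same route as the paper: subdivide each edge $e=(u,v)$ by a new vertex coloured according to $c(e)$, run Theorem~\ref{thm:algmain} on the resulting digraph with $n+m$ vertices and $2m$ arcs, and establish the two-way correspondence between edge-colour stable colourings of $G$ and stable colourings of the auxiliary digraph refining the initial colouring (your lifted colouring of the gadget vertices by the pair $(c(e),\delta(\text{head}))$ is exactly the paper's construction). The only difference is presentational: your first part proves a slightly stronger claim than needed (that gadget classes correspond \emph{exactly} to pairs of edge colour and head class), whereas only the ``purity'' direction, which follows directly from stability and the single out-neighbour of each gadget vertex, is required.
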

\PF
In time $O(n+m)$, we can construct the following digraph $G'$ from $G$, with vertex colouring $\alpha$:
Start with the vertex set $V$ (the {\em original vertices}), and for every edge $e\in E$ with $e=(u,v)$, add a vertex $v_e$ (the {\em new vertices}) and two edges $(u,v_e)$ and $(v_e,v)$. Assign colour $\alpha(v_e)=c(e)$ to the new vertices, and colour $\alpha(v)=0$ to the original vertices $v\in V$. 

We will now show that a colouring $\beta$ of $G$ is edge-colour stable if and only if there exists a stable colouring $\beta'$ for $G'$ that refines $\alpha$, that coincides with $\beta$ on $V$. 

Consider an edge-colour stable colouring $\beta$ of $G$. We extend it to a colouring $\beta'$ of $G'$, as follows: for each new vertex $v_e$ that corresponds to an edge $e=(u,v)$, assign the tuple $(c(e),\beta(v))$. Extend $\beta$ by assigning new colours to the new vertices, according to the lexicographical order of these tuples. (So two new vertices receive the same colour if and only if they are assigned the same tuple, and a new vertex and an original vertex never receive the same colour.) The resulting colouring $\beta'$ of $G'$ clearly refines $\alpha$, and is stable for $G'$: for every vertex colour $i$ used by $\beta$, vertex $u\in V$ and edge colour $j\in \{1,\ldots,p\}$, the number $d^+_j(u,C^{\beta}_i)$ (with respect to $G$) equals the number of out-neighbours of $u$ in $G'$ that have the colour corresponding to the tuple $(j,i)$. For the new vertices of $G'$, the stability criterion follows easily. 

For the other direction, consider a stable colouring $\beta'$ of $G'$ that refines $\alpha$, and define $\beta$ to be the restriction of $\beta'$ to $V$. We argue that $\beta$ is edge-colour stable for $G$. For two new vertices $v_e$ and $v_f$ of $G'$, with respective out-neighbours $x$ and $y$, we have that $\beta'(v_e)=\beta'(v_f)$ implies  $\beta'(x)=\beta'(y)$ and $\alpha(v_e)=\alpha(v_f)$, so $c(e)=c(f)$. This can be used to conclude that for any two vertices $u,v\in V$, colour $i$ and edge colour $j$, if $\beta(u)=\beta(v)$ then $d^+_j(u,C^{\beta}_i)=d^+_j(v,C^{\beta}_i)$. So $\beta$ is edge-colour stable for $G$.

It follows that a coarsest edge-colour stable colouring $\beta$ of $G$ corresponds to a coarsest stable colouring $\beta'$ of $G'$ that refines $\alpha$. Since we can compute such a colouring $\beta'$ in a canonical way, we can compute such a colouring $\beta$ in a canonical way (Theorem~\ref{thm:algmain}). It remains to consider the complexity. The graph $G'$ and colouring $\alpha$ can be constructed from $G$ in time $O(n+m)$. It has $n+m$ vertices, and $2m$ edges. So $\beta'$ can be computed in time $O((n+3m) \log (n+m))=O((n+m) \log (n+m))$ time, by Theorem~\ref{thm:algmain}.\QED

We remark that for any class of edge-coloured digraphs where the number of edges is polynomially bounded in the number vertices (so they satisfy $m\in O(n^d)$ for a constant $d$), we can write  $\log (n+m)\in O(\log n^d)=O(\log n)$. So for such a graph class, the above lemma shows that a canonical coarsest edge-colour stable colouring can again be computed in time $O((n+m)\log n)$.

The above theorem can be used for various stronger isomorphism tests. 
We now give details for one of these. 
For digraphs, we defined stability only considering out-neighbourhoods. Nevertheless, an isomorphism $h$ between two digraphs not only maps the out-neighbourhood of a vertex $v$ bijectively to the out-neighbourhood of $h(v)$, but does the same with the in-neighbourhoods. So for the purpose of digraph isomorphism testing, the following stronger stability criterion is more useful: a $k$-colouring $\alpha$ of a digraph $G$ is {\em bi-stable} if for every pair of vertices $u,v\in V(G)$ with $\alpha(u)=\alpha(v)$ and every colour $c\in \{1,\ldots,k\}$, both $|N^+(u)\cap C^{\alpha}_c|=|N^+(v)\cap C^{\alpha}_c|$ and $|N^-(u)\cap C^{\alpha}_c|=|N^-(v)\cap C^{\alpha}_c|$ hold.

\begin{thm}
\label{thm:algmain_bidirected}
For any digraph $G$ on $n$ vertices with $m$ edges, in time $O((n+m)\log n)$ a canonical coarsest bi-stable colouring can be computed.
\end{thm}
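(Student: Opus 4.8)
The plan is to reduce bi-stability to edge-colour stability and then invoke Theorem~\ref{thm:algmain_edgecol} as a black box. Given the input digraph $G=(V,E)$, I would build the $2$-edge-coloured digraph $\tilde G=(V,\tilde E,\tilde c)$ on the same vertex set as follows: every edge $(u,v)\in E$ is kept with colour $1$, and in addition, for every $(u,v)\in E$, a reversed edge $(v,u)$ with colour $2$ is added. This construction takes time $O(n+m)$ and produces an edge-coloured digraph with exactly $2m$ edges.

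The first step is to check that a vertex colouring $\alpha$ of $V$ is bi-stable for $G$ if and only if it is edge-colour stable for $\tilde G$. This follows directly from unwinding the definitions: for any $v\in V$ and any colour class $C^\alpha_i$, the colour-$1$ out-degree satisfies $d^+_1(v,C^\alpha_i)=|N^+(v)\cap C^\alpha_i|$, while the colour-$2$ out-degree satisfies $d^+_2(v,C^\alpha_i)=|N^-(v)\cap C^\alpha_i|$ (here I use that $G$ has no parallel edges, so each contribution is counted exactly once). Hence the edge-colour stability conditions for $j=1$ and $j=2$ are precisely the two conditions in the definition of bi-stability, and consequently the coarsest bi-stable colouring of $G$ equals the coarsest edge-colour stable colouring of $\tilde G$.

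The second step is to verify that the reduction transports canonicity. A bijection $h\colon V(G)\to V(H)$ is a digraph isomorphism $G\to H$ if and only if it is an isomorphism $\tilde G\to\tilde H$ of $2$-edge-coloured digraphs: the colour-$1$ part of the isomorphism condition says exactly that $h$ preserves $E$, and then the colour-$2$ part (preservation of the reversed edges) holds automatically. Since the assignment $G\mapsto\tilde G$ commutes with isomorphisms and involves no arbitrary choices, a canonical coarsest edge-colour stable colouring of $\tilde G$ (computed from the unit colouring), restricted back to $V$, is a canonical coarsest bi-stable colouring of $G$. Finally, for the running time, Theorem~\ref{thm:algmain_edgecol} applied to $\tilde G$ costs $O\big((n+2m)\log(n+2m)\big)$; since $G$ is a simple digraph we have $m\le n^2$, so $\log(n+2m)=O(\log n)$ and the bound simplifies to $O((n+m)\log n)$, with the $O(n+m)$ construction cost absorbed. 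I do not expect a genuine obstacle here; the only point that needs a little care is pinning down the exact dictionary between the two stability notions and the two isomorphism notions — in particular confirming that the colour-$2$ edges of $\tilde G$ faithfully encode in-neighbourhoods for both purposes — which is where the absence of parallel edges in $G$ is used (if loops or parallel edges were allowed one would keep multiplicities and argue identically, still using a polynomial bound on $n+|\tilde E|$ to replace $\log(n+|\tilde E|)$ by $O(\log n)$).
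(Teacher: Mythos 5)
Your proposal is correct and follows essentially the same route as the paper: build the $2$-edge-coloured digraph with colour-$1$ forward edges and colour-$2$ reversed edges, observe that edge-colour stability there coincides with bi-stability and that canonicity transfers, and then invoke Theorem~\ref{thm:algmain_edgecol} together with $2m\in O(n^2)$ to replace $\log(n+m)$ by $O(\log n)$. The only cosmetic difference is your aside about parallel edges, which the paper sidesteps by simply allowing them in the edge-coloured construction.
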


\PF
Let $V=V(G)$. 
Construct a 2-edge coloured digraph $G'=(V,E,c)$ on the same vertex set as $G$ as follows: for every edge $(u,v)\in E(G)$, add an edge $e=(u,v)$ to $E$ with $c(e)=1$, and an edge $f=(v,u)$ to $E$ with $c(f)=2$. 
(Note that this may introduce parallel edges.)
Observe that a colouring $\alpha:V\to \{1,\ldots,k\}$ is edge-colour stable for $G'$ if and only if it is bi-stable for $G$, and that a canonical colouring method for $G'$ is a canonical colouring method for $G$. 
So Theorem~\ref{thm:algmain_edgecol} can be applied. We use that $G'$ has $2m\in O(n^2)$ edges, which yields the complexity bound $O((n+m)\log n)$. 
\QED

\section{Complexity Lower Bound}
\label{sec:lowerbound}

We shall prove our lower bound for undirected graphs; this makes 
it as general as possible.
The {\em cost} of a refining operation $(R,S)$ in a graph $G$ is 
\[
\cost(R,S):=|\{(u,v) \mid uv\in E(G),u\in R, v\in S\}|.
\]
This is basically the number of edges between $R$ and $S$, except that edges with both ends in $R\cap S$ are counted twice.  
For a partition $\pi$ that admits a refining operation $(R,S)$, denote by $\pi(R,S)$ the partition that results from this operation.
\begin{defi}
\label{defi:cost}
Let $G=(V,E)$ be a graph, and $\pi$ be a partition of $V$. 
\begin{itemize}
 \item If $\pi$ is stable, then $\cost(\pi):=0$.
 \item Otherwise, $\cost(\pi):=\min_{R,S} \cost(\pi(R,S)) + \cost(R,S)$, where the minimum is taken over all effective refining operations $(R,S)$ that can be applied to $\pi$.
\end{itemize}
\end{defi}

Note that this is well-defined; if $\pi$ is unstable, then there exists at least one effective elementary refining operation $(R,S)$, and for any such operation, $|\pi(R,S)|>|\pi|$.
We can now formulate the main result of this section.

\begin{thm}\label{thm:lowerbound}
For every integer $k\ge 2$, there is a graph $G_k$ with $n\in O(2^kk)$ vertices and $m \in O(2^kk^2)$ edges, such that $\cost(\alpha)\in \Omega((m+n)\log n)$, where $\alpha$ is the unit partition of $V(G_k)$.
\end{thm}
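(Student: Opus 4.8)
The plan is to construct $G_k$ so that any sequence of effective refining operations is forced to "rediscover" a logarithmic-depth recursive structure, each level of which contributes $\Omega(m+n)$ to the cost. The natural candidate is a layered gadget built around a binary-tree-like or hypercube-like skeleton: I would take roughly $2^k$ "leaf" vertices indexed by binary strings of length $k$, together with $k$ layers of auxiliary vertices that, in the stable colouring, separate the leaves bit by bit. The edge set should be chosen so that (a) the unique coarsest stable partition is discrete (or nearly so) on the leaves — this guarantees that a lot of refinement genuinely has to happen — and (b) at every stage of any refinement process, distinguishing the leaves that still share a colour requires routing information through $\Theta(2^k)$ edges, so that the cost cannot be amortised away.

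The key steps, in order, would be: first, give the explicit construction of $G_k$ and verify the vertex/edge counts $n\in O(2^k k)$, $m\in O(2^k k^2)$; this is routine bookkeeping. Second, compute the coarsest stable partition of $G_k$ (starting from the unit partition) and show it refines the leaf set into at least $2^{k-1}$ (or $\Theta(2^k)$) distinct cells — this uses Proposition~\ref{propo:CoarsestStableUnique} and a direct stability check. Third, and this is the heart of the argument, prove the cost lower bound: set up a potential/charging argument showing that for \emph{any} sequence of effective refining operations $(R_1,S_1),\dots,(R_t,S_t)$ taking the unit partition to the stable partition, one has $\sum_i \cost(R_i,S_i)\in\Omega((m+n)\log n)$. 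The idea is to define $k$ "thresholds" of refinement (e.g. based on how many of the $k$ coordinate-layers have been fully split) and argue that (i) the process must pass through all $\Theta(k)$ thresholds, and (ii) between consecutive thresholds, the operations performed must collectively touch $\Omega(2^k k)$ edges, because the relevant refining sets $R$ and refined sets $S$ straddle gadgets spanning all the leaves. Summing over $\Theta(k)=\Theta(\log n)$ thresholds gives $\Omega(2^k k^2)=\Omega(m)\supseteq\Omega(m+n)$ times $\Omega(\log n)$, i.e.\ the claimed bound. Since $\cost(\alpha)$ minimises over \emph{all} such sequences (Definition~\ref{defi:cost}), this lower bound on every sequence immediately bounds $\cost(\alpha)$ from below.

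I expect the main obstacle to be step three, and specifically point (ii): ruling out "clever" refining operations that split many leaf-classes simultaneously using a cheap refining set $R$. An adversary algorithm is allowed to pick $R$ and $S$ nondeterministically, so I must show the gadget is designed so that \emph{no} small-cost refining set can separate leaves that differ only in an as-yet-unresolved coordinate — intuitively, the auxiliary vertices carrying coordinate-$j$ information cannot themselves be split until coordinates handled "earlier" in the gadget are resolved, forcing a genuine sequential dependency, and each such resolution step, when it finally becomes available, has cost proportional to the full width $\Theta(2^k)$ of the gadget times the relevant layer multiplicity. Making this rigorous will require a careful invariant: something like "if at some point in the refinement process fewer than $j$ coordinate-layers are fully separated, then the cells of the current partition restricted to layer $j$ and below are exactly [some explicitly described coarse cells]", proved by induction on the operation sequence using Proposition~\ref{propo:RefOpSafe}. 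The secondary obstacle is choosing the gadget parameters so that $m\in O(2^k k^2)$ (not $\omega(2^k k^2)$) while still forcing $\Omega(\log n)$ distinct costly phases — the $k^2$ budget suggests each of the $k$ layers has $O(k)$-degree structure, which must be reconciled with the need for each phase to cost $\Omega(2^k k)$ rather than merely $\Omega(2^k)$.
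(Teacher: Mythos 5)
Your plan reproduces the intended high-level shape (a $2^k$-wide structure that must be split bit by bit over $\Theta(k)=\Theta(\log n)$ levels), but as stated it has a quantitative gap that is fatal to the conclusion. With $n\in\Theta(2^kk)$, $m\in\Theta(2^kk^2)$ and only $\Theta(k)$ thresholds, your step (ii) bound of $\Omega(2^kk)$ edges per phase sums to $\Omega(2^kk^2)=\Omega(m)$ only --- a bound that even a linear-time refinement strategy meets, and a factor $\Theta(\log n)$ short of the claimed $\Omega((m+n)\log n)=\Omega(2^kk^3)$. So each of the $\Theta(k)$ phases must be forced to re-touch $\Omega(m)=\Omega(2^kk^2)$ edges, i.e.\ a constant fraction of \emph{all} edges, not merely $\Omega(n)$ of them. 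In the paper's construction this is exactly the role of the middle layers $\XX,\YY$: each index $i$ carries $k$ vertices $x_i^j$ and $k$ vertices $y_i^j$ joined completely, so $[\XX,\YY]$ has $\Theta(2^kk^2)$ edges, and every halving of the $Y$-side binary blocks at level $\ell$ is forced through half of these edges (each elementary operation costs $k^22^{-(\ell+1)}2^{k}$, giving $2^{k-1}k^2$ per level and $2^{k-1}k^3$ in total).

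The second issue is that the part you defer --- excluding ``clever'' cheap refining sets that shortcut the sequential dependency --- is precisely where the construction and the proof technique have to do real work, and a generic tree- or hypercube-like skeleton does not provide it. The paper enforces the dependency with explicit AND-gadgets built from the Cai--F\"urer--Immerman gadget, wired so that the out-terminals (which separate the $X$-blocks of the next level) become distinguishable if and only if \emph{all} in-terminal pairs (the $Y$-blocks of the current level) are already distinguished (Lemma~\ref{lem:ANDgadgets}); the starting gadget seeds the process. Moreover, instead of your proposed invariant over arbitrary operation sequences, the paper uses a cleaner device: cost is monotone under refinement between the initial and the stable partition (Proposition~\ref{propo:CostBoundKey}), and elementary operations suffice (Proposition~\ref{propo:ElementaryIrrelevant}), so it is enough to lower-bound $\cost$ on a canonical family of coarse partitions $\pi_{Q,\ell}$ (Definition~\ref{defi:QlStable}), classify the very few effective elementary operations available from each of them (Corollary~\ref{cor:OnlySpecificOperations}), and run an induction on $|Q|$ and $\ell$ via the recursion $\cost(\pi_{Q,\ell})\ge 2^{k-(\ell+1)}k^2+\min_{q\in Q}\cost(\pi_{Q\setminus\{q\},\ell})$. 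Without something playing the role of Proposition~\ref{propo:CostBoundKey}, your charging argument would have to control every partition reachable by an adversarial sequence, which is considerably harder than what you sketch; with it, the ``careful invariant'' you anticipate is unnecessary.
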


Note that this theorem implies a complexity lower bound for all partition-refinement based algorithms for colour refinement, as discussed in the introduction. 
We will first prove some basic observations related to the above definitions, then give the construction of the graph, and finally prove Theorem~\ref{thm:lowerbound}.

\subsection{Basic Observations}

We start with two basic properties of stable partitions. The first proposition follows easily from the definitions.
\begin{propo}
\label{propo:PiClosedSubgraphRemainsStable}
Let $G=(V,E)$ be a graph, and $\pi$ be a stable partition of $V$. For any $\pi$-closed subset $S\subseteq V$, $\pi[S]$ is a stable partition for $G[S]$. 
\end{propo}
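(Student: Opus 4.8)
The statement to prove is Proposition~\ref{propo:PiClosedSubgraphRemainsStable}: if $\pi$ is stable for $G$ and $S$ is $\pi$-closed, then $\pi[S]$ is stable for $G[S]$.

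\medskip

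\textbf{Proof proposal.}

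The plan is to unwind the definition of stability for the induced objects and reduce each required equation to the corresponding equation for $\pi$ in $G$, using that $S$ is a union of cells. First I would fix the setup: let $u,v \in S$ with $u \approx_{\pi[S]} v$, and let $R \in \pi[S]$; I must show $|N_{G[S]}(u) \cap R| = |N_{G[S]}(v) \cap R|$. Here $N_{G[S]}(u)$ denotes the neighbourhood of $u$ in the induced subgraph $G[S]$, which equals $N_G(u) \cap S$.

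The key observations are two. First, by definition of the induced partition, $u \approx_{\pi[S]} v$ is equivalent to $u \approx_\pi v$ (both $u,v$ already lie in $S$), so $u$ and $v$ are in the same cell of $\pi$. Second, every cell $R$ of $\pi[S]$ is of the form $R = R' \cap S$ for some cell $R' \in \pi$; but since $S$ is $\pi$-closed and $R' \cap S \neq \emptyset$, in fact $R' \subseteq S$, so $R = R'$ is itself a cell of $\pi$ — in particular $R$ is $\pi$-closed and $R \subseteq S$. Now compute: $N_{G[S]}(u) \cap R = (N_G(u) \cap S) \cap R = N_G(u) \cap R$, because $R \subseteq S$. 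Therefore $|N_{G[S]}(u) \cap R| = |N_G(u) \cap R|$, and likewise for $v$. Since $\pi$ is stable for $G$, $u \approx_\pi v$, and $R \in \pi$, we get $|N_G(u) \cap R| = |N_G(v) \cap R|$, and combining the equalities yields $|N_{G[S]}(u) \cap R| = |N_{G[S]}(v) \cap R|$, as required. Finally I would note that $\pi[S]$ is a genuine partition of $S$ (immediate from the definition of induced partition), so it is indeed a stable partition for $G[S]$. \QED

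\medskip

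There is essentially no obstacle here; the only point that needs a moment's care is the claim that a cell of $\pi[S]$ coincides with a cell of $\pi$ rather than merely being a trace of one — this is exactly where $\pi$-closedness of $S$ is used, and it is the crux of the argument. Everything else is a direct substitution of definitions.
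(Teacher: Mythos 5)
Your proof is correct, and it is exactly the direct definitional argument the paper has in mind when it states that this proposition ``follows easily from the definitions'' (the paper gives no explicit proof). The one point of substance you correctly isolate --- that $\pi$-closedness of $S$ forces each cell of $\pi[S]$ to be an entire cell of $\pi$ contained in $S$, so neighbourhood counts in $G[S]$ coincide with those in $G$ --- is precisely where the hypothesis is used.
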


\begin{propo}
\label{propo:distances}
Let $G=(V,E)$ be a graph, and $\pi$ be a stable partition of $V$. For any $\pi$-closed set $S$ and vertices $u,v\in V$: if the distance from $u$ to $S$ is different from the distance from $v$ to $S$, then $u\not\approx_{\pi} v$. 
\end{propo}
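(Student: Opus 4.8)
\textbf{Proof proposal for Proposition~\ref{propo:distances}.}
The plan is to prove the statement by induction on $\min(\dist(u,S),\dist(v,S))$, where $\dist(x,S)$ denotes the distance from $x$ to the set $S$ (i.e.\ $\min_{w\in S}\dist(x,w)$, with the convention that this is $\infty$ if no vertex of $S$ is reachable from $x$). Actually, the cleanest formulation is to prove the contrapositive in a strengthened form: if $u\approx_\pi v$, then $\dist(u,S)=\dist(v,S)$. I would establish this for all pairs with $u\approx_\pi v$ simultaneously, by showing that for each fixed integer $d\ge 0$ the set $S_d:=\{x\in V \mid \dist(x,S)\le d\}$ is $\pi$-closed; the proposition then follows immediately, since $\dist(u,S)=\dist(v,S)$ iff $u$ and $v$ lie in exactly the same sets $S_d$.

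The base case is $d=0$: here $S_0=S$, which is $\pi$-closed by hypothesis. For the inductive step, assume $S_{d}$ is $\pi$-closed, and consider $S_{d+1}$. The key observation is that a vertex $x$ lies in $S_{d+1}$ precisely when $x\in S_d$ or $x$ has a neighbour in $S_d$, i.e.\ iff $x\in S_d$ or $|N(x)\cap S_d|\ge 1$. Now take $u\approx_\pi v$ with $u\in S_{d+1}$; I must show $v\in S_{d+1}$. Since $S_d$ is $\pi$-closed, it is a union of cells of $\pi$, so by stability of $\pi$ (applied to the $\pi$-closed set $S_d$, which is a union of cells $R\in\pi$) we get $|N(u)\cap S_d|=|N(v)\cap S_d|$ — summing the equalities $|N(u)\cap R|=|N(v)\cap R|$ over the cells $R$ comprising $S_d$. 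Also $u\approx_\pi v$ and $S_d$ being $\pi$-closed gives $u\in S_d \iff v\in S_d$. Combining: $u\in S_{d+1}$ means $u\in S_d$ or $|N(u)\cap S_d|\ge 1$; the first disjunct transfers to $v$ by $\pi$-closedness of $S_d$, the second by the stability-derived degree equality; hence $v\in S_{d+1}$. This shows $S_{d+1}$ is $\pi$-closed, completing the induction.

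I do not anticipate a real obstacle here — the argument is essentially a routine unwinding of the definitions of stability and $\pi$-closedness, as the paper itself signals by calling this a ``basic property.'' The one point requiring a little care is the additivity step: that $|N(u)\cap R|=|N(v)\cap R|$ for every cell $R$ of $\pi$ (when $u\approx_\pi v$) implies the same equality with $R$ replaced by any $\pi$-closed set, which follows because a $\pi$-closed set is by definition a disjoint union of cells and the neighbourhood counts add over a disjoint union. A secondary subtlety is handling the case $\dist(v,S)=\infty$ (when $S$ is unreachable from $v$): this is covered automatically, since then $v\notin S_d$ for all $d$, so if $u\approx_\pi v$ the same must hold for $u$, forcing $\dist(u,S)=\infty$ as well. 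Thus the strengthened claim ``every $S_d$ is $\pi$-closed'' handles finite and infinite distances uniformly, and Proposition~\ref{propo:distances} drops out.
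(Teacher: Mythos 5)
Your proof is correct and rests on the same core idea as the paper's: an induction on distance to $S$ that uses stability to transfer the property ``has a neighbour closer to $S$'' between vertices in the same cell. The only difference is packaging: the paper inducts pairwise on $\dist(u,S)$ (w.l.o.g.\ the smaller of the two distances) and applies stability to the single cell $R$ containing a closer neighbour of $u$, concluding $|N(u)\cap R|\ge 1 > 0 = |N(v)\cap R|$, whereas you prove the equivalent global statement that every ball $S_d$ is $\pi$-closed and sum the stability equalities over the cells comprising $S_d$; both arguments are sound, and yours just makes explicit the fact (implicit in the paper's induction hypothesis) that the distance-to-$S$ function is constant on each cell.
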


\PF 
Denote the distance from a vertex $x$ to $S$ by $\dist(x,S)$. 
W.l.o.g. we may assume that $\dist(u,S)<\dist(v,S)$, so in particular $\dist(u,S)$ is finite. 
We prove the statement by induction over $\dist(u,S)$.
If $\dist(u,S)=0$ then $u\in S$ but $v\not\in S$. Since $S$ is $\pi$-closed, this implies $u\not\approx_{\pi}v$. Otherwise, $u$ is adjacent to a vertex $w$ with $\dist(w,S)=\dist(u,S)-1$, but $v$ is not. Let $R\in \pi$ be the cell with $w\in R$. Then by induction, $|N(v)\cap R|=0$, so $u\not\approx_{\pi} v$, since $\pi$ is stable.
\QED

For a partition $\pi$ of $V$, denote by $\pi_{\infty}$ the coarsest stable partition of $V$ that refines $\pi$.

\begin{propo}
\label{propo:CostBoundKey}
Let $\pi$ and $\rho$ be partitions of $V$ such that $\pi\refby \rho\refby \pi_{\infty}$.
Then $\cost(\pi)\ge \cost(\rho)$.
\end{propo}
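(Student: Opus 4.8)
The plan is to prove this by induction on $|\pi_{\infty}| - |\pi|$, the number of refining steps still needed to reach the coarsest stable partition from $\pi$. The base case is when $\pi = \pi_{\infty}$; then $\pi$ is stable, so by Definition~\ref{defi:cost} we have $\cost(\pi) = 0$, and since $\pi \refby \rho \refby \pi_{\infty} = \pi$ forces $\rho = \pi = \pi_{\infty}$, also $\cost(\rho) = 0$, and the inequality is trivial. For the inductive step, assume $\pi \strictrefby \pi_{\infty}$, so $\pi$ is unstable and $\cost(\pi) = \min_{R,S} \bigl(\cost(\pi(R,S)) + \cost(R,S)\bigr)$ over effective refining operations on $\pi$.

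The key idea is to take an effective refining operation $(R,S)$ on $\pi$ that achieves the minimum in $\cost(\pi)$, and to show it can be ``transported'' to $\rho$: since $\pi \refby \rho$, the sets $R$ and $S$, being $\pi$-closed, are also $\rho$-closed, so $(R,S)$ is a valid refining operation on $\rho$ as well. One subtlety is that $(R,S)$ need not be \emph{effective} on $\rho$ — it is possible that $\rho$ already reflects all the refinement $(R,S)$ would induce, i.e. $\rho(R,S) = \rho$. I would split into two cases. If $(R,S)$ is effective on $\rho$, I would like to conclude $\cost(\rho) \le \cost(\rho(R,S)) + \cost(R,S)$ directly from the definition, and then invoke the induction hypothesis on the pair $\pi(R,S) \refby \rho(R,S)$ (which refine each other by Proposition~\ref{propo:RefOpSafe}, and both are refined by $\pi_{\infty} = (\pi(R,S))_{\infty}$ by Proposition~\ref{propo:CoarsestStableUnique}) to get $\cost(\pi(R,S)) \ge \cost(\rho(R,S))$, hence $\cost(\pi) = \cost(\pi(R,S)) + \cost(R,S) \ge \cost(\rho(R,S)) + \cost(R,S) \ge \cost(\rho)$. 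For this to be an instance of the induction hypothesis I need $|\pi(R,S)| > |\pi|$ (true since $(R,S)$ is effective on $\pi$), so $|\pi_\infty| - |\pi(R,S)| < |\pi_\infty| - |\pi|$. In the other case, $\rho(R,S) = \rho$; then $\pi(R,S) \refby \rho$ still holds, and I claim $\pi \refby \pi(R,S) \refby \rho \refby \pi_{\infty}$ with $|\pi(R,S)| > |\pi|$, so the induction hypothesis applies to the pair $(\pi(R,S), \rho)$, giving $\cost(\pi(R,S)) \ge \cost(\rho)$; then $\cost(\pi) = \cost(\pi(R,S)) + \cost(R,S) \ge \cost(\rho)$ since $\cost(R,S) \ge 0$.

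The main obstacle I anticipate is the bookkeeping around whether the chosen minimizing operation $(R,S)$ is effective on $\rho$, and making sure $\pi_\infty$ really is the coarsest stable refinement of $\pi(R,S)$ as well (this is needed to apply the induction hypothesis with the \emph{same} target partition). The latter follows because $\pi \refby \pi(R,S) \refby \pi_\infty$ by Proposition~\ref{propo:RefOpSafe} (with $\rho := \pi_\infty$), and $\pi_\infty$ is stable, so the coarsest stable refinement of $\pi(R,S)$ is squeezed between $\pi(R,S)$ and $\pi_\infty$; but any stable partition refining $\pi(R,S)$ also refines $\pi$, hence also refines $\pi_\infty$'s... wait, it is refined by none coarser — more carefully: $(\pi(R,S))_\infty$ refines $\pi$ and is stable, so $\pi_\infty \refby (\pi(R,S))_\infty$; conversely $(\pi(R,S))_\infty \refby \pi_\infty$ is false in general, rather $\pi(R,S) \refby \pi_\infty$ and $\pi_\infty$ stable give $(\pi(R,S))_\infty \refby \pi_\infty$. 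Combining, $(\pi(R,S))_\infty = \pi_\infty$. A second minor point: I should double-check that the induction is well-founded, which it is since $|\pi_\infty| - |\pi|$ is a nonnegative integer that strictly decreases (when we replace $\pi$ by $\pi(R,S)$, whose order is strictly larger but still at most $|\pi_\infty|$). With these observations in place the argument is routine.
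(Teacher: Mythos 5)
Your proof is correct and follows essentially the same route as the paper: the paper observes that a refining operation $(R,S)$ on $\pi$ can be transported to $\rho$ with $\pi(R,S)\refby\rho(R,S)\refby\pi_\infty$, and then asserts that "an induction proof based on this observation" transports a minimum-cost refining sequence from $\pi$ to $\rho$ at no greater cost. You have simply spelled out that induction (on $|\pi_\infty|-|\pi|$, with the case split on whether the transported operation is still effective and the check that $(\pi(R,S))_\infty=\pi_\infty$), which matches the paper's argument in substance.
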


\PF
Let $(R,S)$ be a refining operation that can be applied to $\pi$, which yields $\pi'$. Then it can be observed that the operation $(R,S)$ can also be applied to $\rho$, and that for the resulting partition $\rho'$, it holds again that $\pi'\refby \rho'\refby \pi_{\infty}$ (Proposition~\ref{propo:RefOpSafe} shows that $\rho'\refby \pi_{\infty}$). 

An induction proof based on this observation shows that a minimum cost sequence of refining operations that refines $\pi$ to $\pi_\infty$ can also be applied to $\rho$, to yield the stable partition $\pi_\infty$, at the same cost. Therefore, $\cost(\pi)\ge \cost(\rho)$.
\QED

A refining operation $(R,S)$ on $\pi$ is {\em elementary} if both $R\in \pi$ and $S\in \pi$.
The next proposition shows that adding the word `elementary' in Definition~\ref{defi:cost} yields an equivalent definition.

\begin{propo}
\label{propo:ElementaryIrrelevant}
Let $\pi$ be an unstable partition of $V(G)$.
Then 
\[
\cost(\pi)=\min_{R,S} \cost(\pi(R,S)) + \cost(R,S),
\] 
where the minimum is taken over all effective elementary refining operations $(R,S)$ that can be applied to $\pi$.
\end{propo}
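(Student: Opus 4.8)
\textbf{Proof plan for Proposition~\ref{propo:ElementaryIrrelevant}.} The plan is to show that any effective refining operation $(R,S)$ on $\pi$ can be simulated by a sequence of effective elementary refining operations whose total cost does not exceed $\cost(R,S)$, and which produces a partition that still lies between $\pi$ and $\pi_\infty$; combined with Proposition~\ref{propo:CostBoundKey} this yields that the minimum over elementary operations is at most $\cost(\pi)$. Since every elementary operation is in particular a refining operation, the minimum over elementary operations is also at least $\cost(\pi)$, so the two are equal. The inequality ``$\le$'' is the trivial direction; the work is in ``$\ge$''.

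First I would fix an effective refining operation $(R,S)$ with resulting partition $\pi' = \pi(R,S)$. Write $R = R_1 \cup \dots \cup R_a$ and $S = S_1 \cup \dots \cup S_b$ as unions of cells of $\pi$. The key observation is that by the definition of a refining operation, two vertices $u,v$ in a common cell $S_j$ end up in the same cell of $\pi'$ iff $u \approx_\pi v$ and $|N(u) \cap R_i| = |N(v) \cap R_i|$ for all $i$. So $\pi'$ is exactly what one obtains by performing, for each pair $(i,j) \in \{1,\dots,a\} \times \{1,\dots,b\}$, the elementary refining operation $(R_i, S_j')$, where $S_j'$ is whatever cell currently contains the vertices originally in $S_j$ — processed in some fixed order, accumulating refinements. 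More carefully: I would argue by induction on the number of (cell,cell) pairs processed that after processing a set $P$ of pairs, the current partition $\pi_P$ satisfies $\pi \refby \pi_P \refby \pi'$, and that the total cost incurred so far is at most $\sum_{(i,j)\in P} \cost(R_i, S_j)$. The cost bookkeeping works because each edge between $R$ and $S$ lies between a unique $R_i$ and a unique $S_j$ (edges inside $R \cap S$ get counted in both the $R$-role and the $S$-role, consistently with how $\cost$ double-counts them), so $\sum_{i,j} \cost(R_i, S_j) = \cost(R,S)$; and an elementary step $(R_i, S_j')$ with $S_j' \subseteq S_j$ costs at most $\cost(R_i, S_j)$ since its edges form a subset (note $S_j'$ is $\pi_P$-closed and contained in the original cell $S_j$, so this is legitimate — though one must be slightly careful that $S_j'$ need not be a single cell of $\pi_P$, so I would actually perform one elementary operation per current cell contained in the old $S_j$, which only helps the cost estimate).

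One subtlety to handle: some of these elementary operations may not be \emph{effective} (they split nothing), and the definition of $\cost(\pi)$ takes the minimum only over effective operations. I would deal with this by simply discarding the ineffective ones from the sequence — they don't change the partition, and dropping them only decreases the total cost. What remains is a sequence of effective elementary operations transforming $\pi$ into $\pi'$ at cost at most $\cost(R,S)$; since the sequence is nonempty (as $(R,S)$ is effective, the net partition strictly refines $\pi$, so at least one step in the decomposition must be effective), the first such step $(R_{i_0}, S_{j_0}')$ is a legitimate effective elementary operation applicable to $\pi$. Then by Proposition~\ref{propo:CostBoundKey} applied along the chain (or a direct induction mirroring the proof of Proposition~\ref{propo:CostBoundKey}), $\cost(\pi) \ge \cost(\pi(R_{i_0}, S_{j_0}')) + \cost(R_{i_0}, S_{j_0}')$ is already witnessed, and iterating the argument gives $\cost(\pi) \le \cost(\pi') + \cost(R,S)$ whenever $(R,S)$ is an arbitrary effective refining operation, hence in particular the elementary minimum is $\le \cost(\pi)$.

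The main obstacle I anticipate is the cost accounting when $R$ and $S$ overlap, i.e.\ making sure the double-counting convention in the definition of $\cost$ is respected throughout the decomposition into elementary pieces, and making precise the claim that a subset operation $(R_i, S_j')$ with $S_j' \subseteq S_j$ has cost bounded by $\cost(R_i,S_j)$ even when $R_i$ and $S_j'$ themselves overlap. This is a careful but routine edge-counting argument; once it is pinned down, the rest is the standard ``refining operations compose and stay below $\pi_\infty$'' bookkeeping already developed in Propositions~\ref{propo:RefOpSafe} and~\ref{propo:CostBoundKey}.
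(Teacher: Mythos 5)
Your plan is essentially the paper's own proof: decompose the nonelementary operation $(R,S)$ into the cell-pair operations $(R_i,S_j)$, observe that the ordered-pair (double-counting) convention makes the costs sum exactly to $\cost(R,S)$, check that the resulting partition refines $\pi(R,S)$, and conclude via Proposition~\ref{propo:CostBoundKey}. The additional bookkeeping you flag (discarding ineffective steps, splitting into one operation per current cell when earlier steps have refined $S_j$ or $R_i$) only makes explicit details the paper leaves implicit, so the argument is correct and follows the same route.
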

\PF
Let $(R,S)$ an nonelementary refining operation for $\pi$, and let $\rho_1$ be the result of applying $(R,S)$ to $\pi$. We shall prove that there is a sequence of elementary refining operations of total cost at most $\cost(R,S)$ that, when applied to $\pi$, yields a partition $\rho_2$ that refines $\rho_1$. The claim follows by Proposition~\ref{propo:CostBoundKey}. 

Suppose that $R$ consists of the cells $R_1,\ldots,R_q$ and $S$ consists of the cells $S_1,\ldots,S_p$. We apply the elementary refining operations $(R_i,S_j)$ for all $i\in \{1,\ldots,q\},j\in \{1,\ldots,p\}$ in an arbitrary order and let $\rho_2$ be the resulting partition. The cost of these elementary refinements is
\begin{align*}
\sum_{i,j}\cost(R_i,S_j)&=\sum_{i,j}|\{(u,v)\mid uv\in E(G), u\in R_i,v\in S_j\}| \\
&=|\{(u,v)\mid uv\in E(G), u\in R,v\in S\}|=\cost(R,S).
\end{align*}
It is easy to see that $\rho_2$ refines $\rho_1$. Indeed, if $u,v\in S$ belong to the same class of $\rho_2$, then they belong to the same class $S_j$, and for all classes $R_i$ they have the same number of neighbours in $R_i$. Hence they have the same number of neighbours in $R=\bigcup_iR_i$, and this means that they belong to the same class of $\rho_1$.
\QED

\subsection{Construction of the Graph}
\label{ssec:construction}

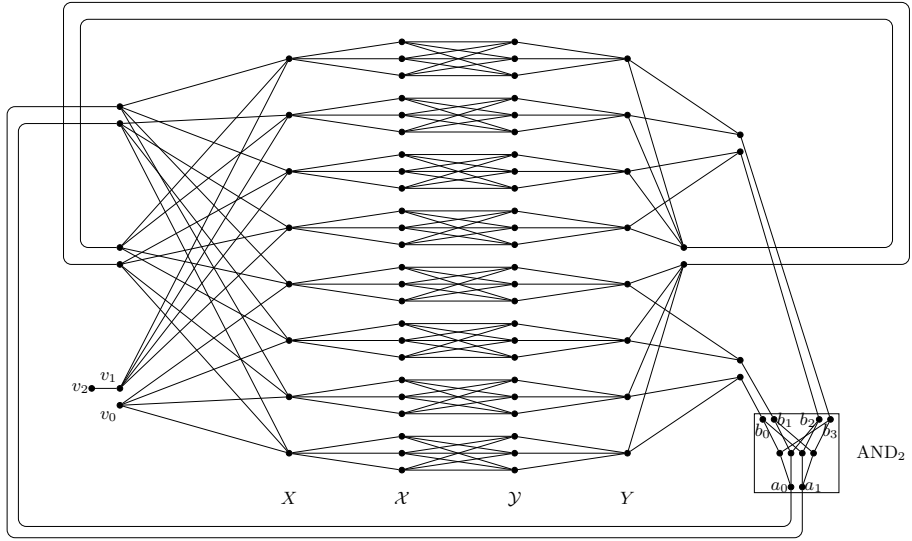
\begin{figure}
\resizebox{\textwidth}{!}{
\begin{tikzpicture}
[knoten/.style={circle,draw=black,fill=black,
	inner  sep=0pt,minimum  size=1mm}]
\foreach \xmax in {8} {

\foreach \x in {1,...,\xmax} \node[knoten] (x\x) at (2,\x) {};
\foreach \x in {1,...,\xmax} \foreach \y in {1,2,3} 
	\node[knoten] (x\x\y) at (4,\x+\y*0.3-2*0.3) {};
\foreach \x in {1,...,\xmax} \node[knoten] (y\x) at (8,\x) {};
\foreach \x in {1,...,\xmax} \foreach \y in {1,2,3} 
	\node[knoten] (y\x\y) at (6,\x+\y*0.3-2*0.3) {};
	
\foreach \x in {1,...,\xmax} \foreach \y in {1,2,3} {
	\draw[-] (x\x)--(x\x\y);
	\draw[-] (y\x)--(y\x\y);
	\foreach \z in {1,2,3} \draw[-] (x\x\y) -- (y\x\z);
};

\node[knoten] (bO1) at (-1,2-0.15) {};
\node[knoten] (bI1) at (-1,2+0.15) {};
\node[knoten] (bO2) at (-1,4.5-0.15) {};
\node[knoten] (bI2) at (-1,4.5+0.15) {};
\node[knoten] (bO3) at (-1,7-0.15) {};
\node[knoten] (bI3) at (-1,7+0.15) {};

\node[knoten] (start) at  (-1.5,2+0.15) {};
\node at (-1.5-0.2,2+0.15) {$v_2$};
\node  at (-1-.2,2-0.15-.2) {$v_0$};
\node at (-1-.2,2+0.15+.2) {$v_1$};

\draw[-] (start) -- (bI1);

\draw[-] (bO1) -- (x1);
\draw[-] (bO1) -- (x2);
\draw[-] (bO1) -- (x3);
\draw[-] (bO1) -- (x4);
\draw[-] (bI1) -- (x5);
\draw[-] (bI1) -- (x6);
\draw[-] (bI1) -- (x7);
\draw[-] (bI1) -- (x8);

\draw[-] (bO2) -- (x1);
\draw[-] (bO2) -- (x2);
\draw[-] (bO2) -- (x5);
\draw[-] (bO2) -- (x6);
\draw[-] (bI2) -- (x3);
\draw[-] (bI2) -- (x4);
\draw[-] (bI2) -- (x7);
\draw[-] (bI2) -- (x8);

\draw[-] (bO3) -- (x1);
\draw[-] (bO3) -- (x3);
\draw[-] (bO3) -- (x5);
\draw[-] (bO3) -- (x7);
\draw[-] (bI3) -- (x2);
\draw[-] (bI3) -- (x4);
\draw[-] (bI3) -- (x6);
\draw[-] (bI3) -- (x8);

\foreach \s in {1,2} {
	\node[knoten] (a1O\s) at (10,4*\s-1.5-0.15) {};
	\node[knoten] (a1I\s) at (10,4*\s-1.5+0.15) {};
	}
\node[knoten] (a2O) at (9,4.5-0.15) {};
\node[knoten] (a2I) at (9,4.5+0.15) {};

\draw[-] (a1O1) -- (y1);
\draw[-] (a1O1) -- (y2);
\draw[-] (a1I1) -- (y3);
\draw[-] (a1I1) -- (y4);
\draw[-] (a1O2) -- (y5);
\draw[-] (a1O2) -- (y6);
\draw[-] (a1I2) -- (y7);
\draw[-] (a1I2) -- (y8);

\foreach  \s in {1,...,4} \draw[-] (a2O) -- (y\s);
\foreach  \s in {5,...,8} \draw[-] (a2I) -- (y\s);

 \draw (10.25,1.5+.2) rectangle (11.75,0.5-.2);
\node at (12.5,1) {$\gadget_2$};
\node[knoten] (andO1) at (10.5-0.1,1.6) {};
\node[knoten] (andI1) at (10.5+0.1,1.6){};
\node[knoten] (andO2) at (11.5-0.1,1.6){};
\node[knoten] (andI2) at (11.5+0.1,1.6){};
\node[knoten] (andO3) at (11-0.1,0.4){};
\node[knoten] (andI3) at (11+0.1,0.4){};
\node[knoten] (CFI1) at (11-0.3,1){};
\node[knoten] (CFI2) at (11-0.1,1){};
\node[knoten] (CFI3) at (11+0.1,1){};
\node[knoten] (CFI4) at (11+0.3,1){};
\draw[-] (andO3) -- (CFI1);
\draw[-] (andO3) -- (CFI2);
\draw[-] (andI3) -- (CFI3);
\draw[-] (andI3) -- (CFI4);
\draw[-] (andO1) -- (CFI1);
\draw[-] (andO1) -- (CFI3);
\draw[-] (andI1) -- (CFI2);
\draw[-] (andI1) -- (CFI4);
\draw[-] (andO2) -- (CFI2);
\draw[-] (andO2) -- (CFI3);
\draw[-] (andI2) -- (CFI1);
\draw[-] (andI2) -- (CFI4);

\draw[-] (andO1) -- (a1O1);
\draw[-] (andI1) -- (a1I1);
\draw[-] (andO2) -- (a1O2);
\draw[-] (andI2) -- (a1I2);

\node at (10.5-0.1,1.6-.2) {$b_0$};
\node at (10.5+0.1+.2,1.6){$b_1$};
\node at (11.5-0.1-.2,1.6){$b_2$};
\node at (11.5+0.1,1.6-.2){$b_3$};

\node at (11-0.1-.2,0.4){$a_0$};
\node at (11+0.1+.2,0.4){$a_1$};

\foreach \hoch/\tief in {9/-.5} { 
\draw[-,rounded corners] (a2O) -- (13,4.5-0.15) -- (13,\hoch) -- (-2,\hoch) -- (-2,4.5-0.15) -- (bO2);
\draw[-,rounded corners] (a2I) -- (13-0.3,4.5+0.15) -- (13-0.3,\hoch-0.3) -- (-2+0.3,\hoch-0.3) -- (-2+0.3,4.5+0.15) -- (bI2);

\draw[-,rounded corners] (andI3) -- (11.1,\tief) -- (-3,\tief) -- (-3,7.15) -- (bI3);
\draw[-,rounded corners] (andO3) -- (11.1-0.2,\tief+0.2) -- (-3+0.2,\tief+0.2) -- (-3+0.2,7.15-0.3) -- (bO3);
}

\node at (2,0.2) {$\X$};
\node at (4,0.2) {$\XX$};
\node at (6,0.2) {$\YY$};
\node at (8,0.2) {$\Y$};
}
\end{tikzpicture}
}
\caption{The graph $G_3$}
\label{fig:G3}
\end{figure}

For $k\in \mathbb{N}$, denote $\blk_k=\{0,\ldots,2^k-1\}$. 
For $\ell\in \{0,\ldots,k\}$ and $q\in \{0,\ldots,2^{\ell}-1\}$, the subset $\blk^{\ell}_q=\{q2^{k-\ell},\ldots,(q+1)2^{k-\ell}-1\}$ is called the {\em $q$-th binary block of level $\ell$}.
Analogously, for any set of vertices with indices in $\blk_k$, we also consider binary blocks. For instance, if $X=\{x_i \mid i\in \blk_k\}$, then $X^{\ell}_q=\{x_i \mid i\in\blk^{\ell}_q\}$ is called a binary block of $X$. For such a set $X$, a {\em partition $\pi$ of $X$ into binary blocks} is a partition where every $S\in \pi$ is a binary block.
A key fact for binary blocks that we will often use is that for any $\ell$ and $q$,  $\blk^{\ell}_q=\blk^{\ell+1}_{2q}\cup \blk^{\ell+1}_{2q+1}$.

For every integer $k\ge 2$, we will construct a graph $G_k$. (An example for $k=3$ is given in Figure~\ref{fig:G3}.)
In its core this graph consists of the vertex sets $\X=\{x_i\mid i\in \blk_k\}$, $\XX=\{x^j_i\mid i\in \blk_k,j\in \{1,\ldots,k\}\}$,
$\YY=\{y^j_i\mid i\in \blk_k,j\in \{1,\ldots,k\}\}$ and $\Y=\{y_i\mid i\in \blk_k\}$. 
Every vertex $x_i$ is adjacent to $x^j_i$ for all $j\in\{1,\ldots,k\}$ and every $y_i$ is adjacent to all $y_i^j$. Furthermore, for all $i,j_1,j_2$ there is an edge between $x^{j_1}_i$ and $y^{j_2}_i$. 
(For $\XX$, {\em binary blocks} are subsets of the form $\XX^{\ell}_q:=\{x^j_i \mid i\in\blk^{\ell}_q, j\in\{1,\ldots,k\}\}$, and for $\YY$ the definition is analogous.)

We add gadgets to the graph to ensure that any sequence of refining operations behaves as follows.
After the first step, which distinguishes vertices according to their degrees, $\X$ and $\Y$ are cells of the resulting partition.
Next, $\X$ splits up into two binary blocks $\X^1_0$ and $\X^1_1$ of equal size. This causes $\XX$ to split up accordingly into $\XX^1_0$ and $\XX^1_1$. 
One of these cells will be used to halve $\YY$ in the same way. This refining operation $(R,S)$ is expensive because $[R,S]$ contains half of the edges between $\XX$ and $\YY$.
Next, $\Y$ can be split up into $\Y^1_0$ and $Y^1_1$. 
Once this happens, there is a gadget $\AND_1$ that causes the two cells $X^1_0$, $X^1_1$ to split up into the four cells $X^2_q$, for $q=0,\ldots,3$. 
Again, this causes cells in $\XX, \YY$ and $\Y$ to split up in the same way and to achieve this, half of the edges between $\XX$ and $\YY$ have to be considered. The next gadget $\AND_2$ ensures that if both cells of $\Y$ are split, then the four cells of $\X$ can be halved again, etc. 
In general, we design a gadget $\AND_{\ell}$ of level $\ell$ that ensures that 
if $\Y$ is partitioned into $2^{\ell+1}$ binary blocks of equal size, then $\X$ can be partitioned into $2^{\ell+2}$ binary blocks of equal size.
By halving all the cells of $\X$ and $\Y$ $k=\Theta(\log n)$ times (with $n=|V(G_k)|$), this refinement process ends up with a discrete colouring of these vertices. Since every iteration uses half of the edges between $\XX$ and $\YY$ (which are $\Theta(m)$), we get the cost lower bound of $\Omega(m\log n)$ (with $m=|E(G_k)|$).

\begin{figure}
\centering
\parbox{1.2in}{

\resizebox{!}{\pictureheight}{
\begin{tikzpicture}
[knoten/.style={circle,draw=black,fill=black,
	inner  sep=0pt,minimum  size=1mm},scale=2.5]

 \draw (10.25,1.5+.15) rectangle (11.75,0.5-.15);
\node[knoten] (andO1) at (10.5-0.1,1.6) [label=above:{\small$b_0$}] {};
\node[knoten] (andI1) at (10.5+0.1,1.6) [label=above:{\small$b_1$}]{};
\node[knoten] (andO2) at (11.5-0.1,1.6) [label=above:{\small$b_2$}]{};
\node[knoten] (andI2) at (11.5+0.1,1.6) [label=above:{\small$b_3$}]{};
\node[knoten] (andO3) at (11-0.1,0.4) [label=below:{\small$a_0$}]{};
\node[knoten] (andI3) at (11+0.1,0.4) [label=below:{\small$a_1$}]{};
\node[knoten] (CFI1) at (11-0.3,1) [label=left:{\small$c_0$}] {};
\node[knoten] (CFI2) at (11-0.1,1) [label=below:{\small$c_1$}]{};
\node[knoten] (CFI3) at (11+0.1,1) [label=below:{\small$c_2$}]{};
\node[knoten] (CFI4) at (11+0.3,1) [label=right:{\small$c_3$}]{};
\draw[-] (andO3) -- (CFI1);
\draw[-] (andO3) -- (CFI2);
\draw[-] (andI3) -- (CFI3);
\draw[-] (andI3) -- (CFI4);
\draw[-] (andO1) -- (CFI1);
\draw[-] (andO1) -- (CFI3);
\draw[-] (andI1) -- (CFI2);
\draw[-] (andI1) -- (CFI4);
\draw[-] (andO2) -- (CFI2);
\draw[-] (andO2) -- (CFI3);
\draw[-] (andI2) -- (CFI1);
\draw[-] (andI2) -- (CFI4);

\end{tikzpicture}
}

\caption{$\gadget_2$}
\label{fig:gadget2}}
\hspace{2cm}
\begin{minipage}{1.2in}

\resizebox{!}{\pictureheight}{
\begin{tikzpicture}
[knoten/.style={circle,draw=black,fill=black,
	inner  sep=0pt,minimum  size=1mm}]

 \draw (10.25,1.5+.15) rectangle (11.75,0.5-.15);
\node[knoten] (AndO1) at (10.5-0.1,1.6) [label=above:{\small$b_0$}]{};
\node[knoten] (AndI1) at (10.5+0.1,1.6){};
\node[knoten] (AndO2) at (11.5-0.1,1.6) {};
\node[knoten] (AndI2) at (11.5+0.1,1.6){};
\node[knoten] (AndO3) at (11-0.1,0.4){};
\node[knoten] (AndI3) at (11+0.1,0.4){};
\node[knoten] (CFI1) at (11-0.3,1){};
\node[knoten] (CFI2) at (11-0.1,1){};
\node[knoten] (CFI3) at (11+0.1,1){};
\node[knoten] (CFI4) at (11+0.3,1){};
\draw[-] (AndO3) -- (CFI1);
\draw[-] (AndO3) -- (CFI2);
\draw[-] (AndI3) -- (CFI3);
\draw[-] (AndI3) -- (CFI4);
\draw[-] (AndO1) -- (CFI1);
\draw[-] (AndO1) -- (CFI3);
\draw[-] (AndI1) -- (CFI2);
\draw[-] (AndI1) -- (CFI4);
\draw[-] (AndO2) -- (CFI2);
\draw[-] (AndO2) -- (CFI3);
\draw[-] (AndI2) -- (CFI1);
\draw[-] (AndI2) -- (CFI4);

\begin{scope}[xshift=2cm]
 \draw (10.25,1.5+.15) rectangle (11.75,0.5-.15);
\node[knoten] (aNdO1) at (10.5-0.1,1.6) {};
\node[knoten] (aNdI1) at (10.5+0.1,1.6){};
\node[knoten] (aNdO2) at (11.5-0.1,1.6) {};
\node[knoten] (aNdI2) at (11.5+0.1,1.6) [label=above:{\small$b_{7}$}]{};
\node[knoten] (aNdO3) at (11-0.1,0.4){};
\node[knoten] (aNdI3) at (11+0.1,0.4){};
\node[knoten] (CFI1) at (11-0.3,1){};
\node[knoten] (CFI2) at (11-0.1,1){};
\node[knoten] (CFI3) at (11+0.1,1){};
\node[knoten] (CFI4) at (11+0.3,1){};
\draw[-] (aNdO3) -- (CFI1);
\draw[-] (aNdO3) -- (CFI2);
\draw[-] (aNdI3) -- (CFI3);
\draw[-] (aNdI3) -- (CFI4);
\draw[-] (aNdO1) -- (CFI1);
\draw[-] (aNdO1) -- (CFI3);
\draw[-] (aNdI1) -- (CFI2);
\draw[-] (aNdI1) -- (CFI4);
\draw[-] (aNdO2) -- (CFI2);
\draw[-] (aNdO2) -- (CFI3);
\draw[-] (aNdI2) -- (CFI1);
\draw[-] (aNdI2) -- (CFI4);
\end{scope}

\begin{scope}[xshift=1cm,yshift=-2cm]
 \draw (10.25,1.5+.15) rectangle (11.75,0.5-.15);
\node[knoten] (anDO1) at (10.5-0.1,1.6) {};
\node[knoten] (anDI1) at (10.5+0.1,1.6){};
\node[knoten] (anDO2) at (11.5-0.1,1.6){};
\node[knoten] (anDI2) at (11.5+0.1,1.6){};
\node[knoten] (anDO3) at (11-0.1,0.4) [label=below:{\small$a_{0}$}]{};
\node[knoten] (anDI3) at (11+0.1,0.4){};
\node[knoten] (CFI1) at (11-0.3,1){};
\node[knoten] (CFI2) at (11-0.1,1){};
\node[knoten] (CFI3) at (11+0.1,1){};
\node[knoten] (CFI4) at (11+0.3,1){};
\draw[-] (anDO3) -- (CFI1);
\draw[-] (anDO3) -- (CFI2);
\draw[-] (anDI3) -- (CFI3);
\draw[-] (anDI3) -- (CFI4);
\draw[-] (anDO1) -- (CFI1);
\draw[-] (anDO1) -- (CFI3);
\draw[-] (anDI1) -- (CFI2);
\draw[-] (anDI1) -- (CFI4);
\draw[-] (anDO2) -- (CFI2);
\draw[-] (anDO2) -- (CFI3);
\draw[-] (anDI2) -- (CFI1);
\draw[-] (anDI2) -- (CFI4);
\end{scope}

\draw[-] (AndO3) -- (anDO1);
\draw[-] (AndI3) -- (anDI1);
\draw[-] (aNdO3) -- (anDO2);
\draw[-] (aNdI3) -- (anDI2);

\end{tikzpicture}
}
\caption{$\gadget_3$}
\label{fig:gadget3}
\end{minipage}
\end{figure}

We now define these gadgets in more detail.
For every integer $\ell\ge 1$, we define a gadget $\AND_{\ell}$, which consists of a graph $G$ together with two {\em out-terminals} $a_0,a_1$, and an ordered sequence of $p=2^{\ell}$ {\em in-terminals}  $b_0,\ldots,b_{p-1}$. 
For $\ell=1$, the graph $G$ has $V(G)=\{a_0,a_1,b_0,b_1\}$, and $E(G)=\{a_0b_0,a_1b_1\}$. 
For $\ell=2$, the graph $G$ is identical to the construction of Cai, F\"{u}rer and Immerman~\cite{CFI92}.
(See Figure \ref{fig:gadget2}. 
The out-terminals $a_0,a_1$ and in-terminals $b_0,\ldots,b_3$ are indicated.)
For $\ell\ge 3$, $\AND_{\ell}$ is obtained by taking one copy $G^*$ of an $\AND_2$-gadget,
and two copies $G'$ and $G''$ of an $\AND_{\ell-1}$-gadget, and adding four edges to connect the two pairs of in-terminals of $G^*$ with the pairs of out-terminals of $G'$ and $G''$, respectively.
As out-terminals of the resulting gadget we choose the out-terminals of $G^*$. The in-terminal sequence is obtained by concatenating the sequences of in-terminals of $G'$ and $G''$. 
(See Figure~\ref{fig:gadget3} for an example of $\AND_3$.)
For any $\AND_{\ell}$-gadget $G$ with in-terminals $b_0,\ldots,b_{2^{\ell}-1}$, the {\em in-terminal pairs} are pairs $b_{2p}$ and $b_{2p+1}$, for all $p\in \{0,\ldots,2^{\ell-1}-1\}$.

The graph $G_k$ is now constructed as follows. Start with vertex sets $X,\XX,\YY$ and $Y$, and edges between them, as defined above. For every $\ell\in \{1,\ldots,k-1\}$, we add a copy $G$ of an $\AND_{\ell}$-gadget to the graph. Denote the out- and in-terminals of $G$ by $a_0,a_1$ and $b_0,\ldots,b_{2^{\ell}-1}$, respectively. 
\begin{itemize}
\item 
For $i=0,1$ and all relevant $q$: we add edges from $a_i$ to every vertex in $X^{\ell+1}_{2q+i}$.
\item
For every $i$, we add edges from $b_i$ to every vertex in $Y^{\ell}_i$.
\end{itemize}
Finally, we add a {\em starting gadget} to the graph, consisting of three vertices $v_0,v_1,v_2$, the edge $v_1v_2$, and edges $\{v_0x_i \mid i\in\blk^1_0\}\cup \{v_1x_i \mid i\in\blk^1_1\}$.
See Figure~\ref{fig:G3} for an example of this construction. (In the figure, we have expanded the terminals of $\AND_2$ into edges, for readability. This does not affect the behaviour of the graph.)

\begin{propo}
\label{propo:Size}
$G_k$ has $O(2^kk)$ vertices and $O(2^kk^2)$ edges. 
\end{propo}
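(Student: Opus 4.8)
The plan is to count the vertices and edges of each structural part of $G_k$ separately and then sum. I would organise the bookkeeping around the four core vertex sets $\X,\XX,\YY,\Y$, the $k-1$ gadgets $\AND_1,\ldots,\AND_{k-1}$, and the starting gadget, using throughout that $|\blk_k|=2^k$.

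First the core. By definition $|\X|=|\Y|=2^k$ and $|\XX|=|\YY|=k\cdot 2^k$, so the core has $O(2^kk)$ vertices. For edges: each $x_i$ has $k$ edges to the $x^j_i$ (total $k2^k$), symmetrically $k2^k$ edges on the $\Y$-side, and for each $i\in\blk_k$ there is a complete bipartite graph between $\{x^1_i,\ldots,x^k_i\}$ and $\{y^1_i,\ldots,y^k_i\}$, contributing $k^2$ edges, for a total of $k^2 2^k$. So the core has $O(2^kk^2)$ edges.

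Next the gadgets. The key sub-step is a size estimate for a single $\AND_\ell$-gadget: I claim $\AND_\ell$ has $O(2^\ell)$ vertices and $O(2^\ell)$ edges, which I would prove by induction on $\ell$ from the recursive definition ($\AND_1$ has $4$ vertices and $2$ edges; $\AND_2$ is a fixed finite graph; and $\AND_\ell$ for $\ell\ge 3$ is built from one $\AND_2$ and two $\AND_{\ell-1}$'s plus four connecting edges, giving the recurrence $V(\ell)\le 2V(\ell-1)+c$, $E(\ell)\le 2E(\ell-1)+c$, whose solution is $O(2^\ell)$). Summing over $\ell\in\{1,\ldots,k-1\}$ gives $\sum_\ell O(2^\ell)=O(2^k)$ vertices and $O(2^k)$ edges internal to all gadgets. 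For the edges joining gadget $\AND_\ell$ to the core: each out-terminal $a_i$ is joined to every vertex of a block $X^{\ell+1}_{2q+i}$, and over all relevant $q$ these blocks partition $\X$, so the two out-terminals contribute $2\cdot|\X|=2^{k+1}$ edges; each in-terminal $b_i$ is joined to every vertex of $Y^\ell_i$, and these blocks partition $\Y$, contributing another $2^k$ edges. Hence each gadget contributes $O(2^k)$ connecting edges, and over all $k-1$ gadgets that is $O(2^kk)$. The starting gadget adds $3$ vertices and $1+2^k$ edges, which is absorbed. Adding everything: $|V(G_k)|=O(2^kk)+O(2^k)+O(1)=O(2^kk)$ and $|E(G_k)|=O(2^kk^2)+O(2^k)+O(2^kk)+O(2^k)=O(2^kk^2)$.

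I do not expect a genuine obstacle here; the only thing needing care is the inductive size bound for $\AND_\ell$ (making sure the $2^\ell$ growth of the in-terminal count, rather than something larger, controls the recursion) and keeping straight that the blocks $X^{\ell+1}_{2q+i}$ (over $q$, for fixed $i$) and $Y^\ell_i$ (over $i$) each form a partition of $\X$ resp.\ $\Y$, so that the gadget-to-core edge counts are linear in $2^k$ per gadget rather than quadratic.
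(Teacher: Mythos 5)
Your proof is correct and follows essentially the same route as the paper: an induction giving $O(2^\ell)$ vertices and edges per $\AND_\ell$-gadget, summing to $O(2^k)$ over all gadgets, with the dominant contribution coming from $G_k[\XX\cup\YY]$ (the paper states this more tersely, leaving the gadget-to-core and starting-gadget edges implicit, which you count explicitly). One harmless slip: for fixed $i$ the blocks $X^{\ell+1}_{2q+i}$ cover only half of $\X$, so the two out-terminals of a gadget contribute $|\X|=2^k$ connecting edges rather than $2^{k+1}$ --- this does not affect the asymptotic bound.
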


\PF
An easy induction proof shows that the $AND_{\ell}$-gadget has $O(2^{\ell})$ vertices and edges. So, all $\AND_{\ell}$ gadgets together, for $\ell\in \{1,\ldots,k-1\}$, have at most $O(2^k)$ vertices and edges. Therefore, the bounds on the total number of vertices and edges of $G_k$ are dominated by the number of vertices and edges in $G_k[\XX\cup \YY]$, which is $k2^{k+1}$ and $k^2 2^k$, respectively.
\QED

We now state and prove the key property for $AND_{\ell}$-gadgets. This requires the following definitions.
For a graph $G=(V,E)$, 
If $\psi$ is a partition of a {\em subset} $S\subseteq V$, then for short we say that a partition $\rho$ of $V$ {\em refines} $\psi$ if it refines $\psi\cup \{V\bs S\}$.
We say that $\rho$ {\em agrees with} $\psi$ if $\rho[S]=\psi$.
(So if $V\bs S\not=\emptyset$, one can choose $\rho$ such that it agrees with $\psi$ but does not refine $\psi$.)
For two graphs $G$ and $H$, by $G\disjunion H$ we denote the graph obtained by taking the disjoint union of $G$ and $H$.
We say that a partition $\pi$ of $V$ {\em distinguishes} two sets $V_1\subseteq V$ and $V_2\subseteq V$ if there is a set $R\in \pi$ with $|R\cap V_1|\not=|R\cap V_2|$.
This is used often for the case where $V_1=N(u)$ and $V_2=N(v)$ for two vertices $u$ and $v$, to conclude that if $\pi$ is stable, then $u\not\approx_{\pi} v$.
If $V_1=\{x\}$ and $V_2=\{y\}$, then we also say that {\em $\pi$ distinguishes $x$ from $y$}.

\begin{lem} \label{lem:ANDgadgets} 
  Let $G$ be an $\AND_{\ell}$-gadget with in-terminals $B=\{b_0,\ldots,b_{2^{\ell}-1}\}$ and out-terminals $a_0,a_1$. Let $\psi$ be a partition of $B$ into binary blocks, and let $\rho$ be the coarsest stable partition $\rho$ of $V(G)$ that refines $\psi$. Then $\rho$ agrees with $\psi$.
  Furthermore, $\rho$ distinguishes $a_0$ from $a_1$ if and only if $\psi$ distinguishes all in-terminal pairs.  
\end{lem}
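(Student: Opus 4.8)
The plan is to induct on $\ell$. First I would reduce the second assertion to a cleaner form: in a partition $\psi$ of $B$ into binary blocks, any block of level strictly less than $\ell$ contains both ends $b_{2p},b_{2p+1}$ of some in-terminal pair (such a block is an aligned run of at least two consecutive indices starting at an even index), so $\psi$ distinguishes every in-terminal pair if and only if $\psi$ is the discrete partition of $B$. Thus it suffices to show that $\rho$ agrees with $\psi$ and that $\rho$ distinguishes $a_0$ from $a_1$ exactly when $\psi$ is discrete.

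Two structural facts about $\AND_\ell$, each proved by a routine induction on the recursive construction, carry most of the load. (i) The only vertices of $\AND_\ell$ of degree at most $2$ are $a_0,a_1$ (of degree $2$) and the in-terminals in $B$; since $d(u)\neq d(v)$ forces $u\not\approx_\rho v$ and $B$ is $\rho$-closed, the cell of $a_0$ in $\rho$ is contained in $\{a_0,a_1\}$, so ``$\rho$ distinguishes $a_0$ from $a_1$'' just means ``$\{a_0\}$ is a cell of $\rho$''. (ii) For $\ell\ge 3$ write $\AND_\ell=G^*\disjunion G'\disjunion G''$ together with the four connecting edges, where $G^*$ is an $\AND_2$ with in-terminals $\beta_0,\dots,\beta_3$ and $G',G''$ are $\AND_{\ell-1}$-gadgets with in-terminal sets $B',B''$ (the two halves of $B$) and out-terminals $a'_0,a'_1$ resp.\ $a''_0,a''_1$; the only edges leaving $V(G')$ run from $a'_0,a'_1$ to $\beta_0,\beta_1$, so $V(G')$ is exactly the ball of radius $h_{\ell-1}:=\dist_{G'}(B',\{a'_0,a'_1\})$ around $B'$ (this radius being the eccentricity of $B'$ in $G'$, attained at the out-terminals), and likewise for $V(G'')$. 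Whenever $\psi\neq\{B\}$ every cell of $\psi$ lies inside $B'$ or inside $B''$, so $B'$ and $B''$ are $\rho$-closed, and then Proposition~\ref{propo:distances} yields that $V(G')$, $V(G'')$, and hence $V(G^*)$, are $\rho$-closed. Using this together with (i) one further checks that $\rho$ restricted to $\{\beta_0,\dots,\beta_3\}$ is a binary-block partition in which the two in-terminal pairs are separated, and that $\beta_0\approx_\rho\beta_1$ iff $a'_0\approx_\rho a'_1$ (similarly $\beta_2\approx_\rho\beta_3$ iff $a''_0\approx_\rho a''_1$).

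With this set-up the induction proceeds as follows. The base cases $\ell=1,2$ are finite computations: for $\ell=2$ one simply runs colour refinement on the CFI gadget for each of the five binary-block partitions of its four in-terminals. For $\ell\ge 3$: $\rho$ restricted to any component $H\in\{G',G'',G^*\}$ is a stable partition of $H$ refining the partition induced on the terminals of $H$, hence refines the \emph{coarsest} such partition, which by the induction hypothesis distinguishes the out-terminals of $H$ precisely when that induced terminal partition is discrete; chaining $G'\to G^*$ and $G''\to G^*$ gives the implication ``$\psi$ discrete $\Rightarrow$ $\rho$ distinguishes $a_0,a_1$'' directly. For the converse implication and for ``$\rho$ agrees with $\psi$'' it is enough, by Proposition~\ref{propo:RefOpSafe} (the coarsest stable partition refining a partition $\pi$ is refined by every stable partition refining $\pi$), to exhibit one stable partition $\sigma$ refining $\psi\cup\{V\setminus B\}$ with $\sigma[B]=\psi$, and with $a_0\approx_\sigma a_1$ when $\psi$ is not discrete. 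For $\psi\neq\{B\}$ I would take $\sigma$ to be the disjoint union of the coarsest stable partitions of $G'$, $G''$ and $G^*$ with respect to $\psi'$, $\psi''$ and the binary-block partition of $\{\beta_0,\dots,\beta_3\}$ dictated by whether $\psi'$ and $\psi''$ are discrete, and verify stability across the four connecting edges using fact (i) and the induction hypothesis; this $\sigma$ distinguishes $a_0$ from $a_1$ exactly when both $\psi'$ and $\psi''$ are discrete, i.e.\ exactly when $\psi$ is discrete. The single remaining case $\psi=\{B\}$ (the only one with a $\psi$-cell straddling $B'$ and $B''$) is handled either by a \emph{symmetrised} gluing — identify corresponding cells of $G'$ and $G''$ via the automorphism of $\AND_\ell$ swapping them, and let $\{\beta_0,\dots,\beta_3\}$ be a single block of $G^*$ — or by bootstrapping from the already-treated partition $\{B',B''\}$, which refines $\{B\}$.

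The main obstacle I anticipate is precisely the ``clean decomposition'' step of fact (ii) and its use: making rigorous that colour refinement on the full gadget behaves on each component like colour refinement on that component in isolation with the correct induced terminal partition, even though the connecting edges pass information in both directions. The distance argument supplies the $\rho$-closedness of the components, the degree argument prevents the out-terminals from being merged into internal cells, and the fully symmetric input $\psi=\{B\}$ genuinely needs its own (symmetrised) witness; once these points are settled, the rest is bookkeeping with the induction hypothesis.
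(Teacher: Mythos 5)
Your proposal is correct and follows essentially the same route as the paper: induction on $\ell$ with the $\AND_2$ base case done by direct computation, the ``agrees'' part and the ``only if'' direction obtained by exhibiting a glued stable witness built from the coarsest stable partitions of $G'$, $G''$ and $G^*$ (with the merged/symmetrised variant for $\psi=\{B\}$, which the paper handles in its parenthetical remark), and the ``if'' direction obtained by using distance-to-$B$ closure to restrict $\rho$ to the three components and chaining the induction hypothesis. Your explicit auxiliary facts (the degree-$2$ characterisation of the terminals and the eccentricity/ball description of $V(G')$) are exactly the details the paper leaves implicit when asserting stability of the glued partition and $\rho$-closedness of the components, so they are welcome but do not change the argument.
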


\PF
We prove the statement by induction over $\ell$.
For $\ell=1$, the statement is trivial.
Now suppose $\ell=2$.
We only consider partitions of $\{b_0,\ldots,b_3\}$ into binary blocks. Because of the automorphisms of this gadget, it follows that it suffices to consider the following four partitions for $\psi$. For all of them, a corresponding partition $\rho$ is given; it can be verified that $\rho$ is the coarsest stable partition of
$V(AND_\ell)$
that refines $\psi$. 
(The nonterminal vertices are labeled $c_0,\ldots,c_3$, as shown in Figure \ref{fig:gadget2}.)
\begin{align*}
 \psi=\big\{ \{b_0,b_1,b_2,b_3\} \big\} &\Longrightarrow
 \rho=\psi\cup \big\{ \{c_0,c_1,c_2,c_3\},\{a_0,a_1\} \big\}, \\
 \psi=\big\{ \{b_0,b_1\},\{b_2,b_3\} \big\} &\Longrightarrow
 \rho=\psi\cup \big\{ \{c_0,c_1,c_2,c_3\},\{a_0,a_1\} \big\}, \\
 \psi=\big\{ \{b_0\},\{b_1\},\{b_2,b_3\} \big\} &\Longrightarrow
 \rho=\psi\cup \big\{ \{c_0,c_2\},\{c_1,c_3\},\{a_0,a_1\} \big\},\\
 \psi=\big\{ \{b_0\},\{b_1\},\{b_2\},\{b_3\} \big\}
 &\Longrightarrow
 \rho=\psi\cup \big\{ \{c_0\},\{c_1\},\{c_2\},\{c_3\},\{a_0\},\{a_1\} \big\}.
\end{align*}
We see that in all four cases, $\rho$ agrees with $\psi$ on $B$. Furthermore, $\rho$ distinguishes the out-terminals if and only if $\psi$ distinguishes all in-terminal pairs (which is only the case for the last $\psi$).

Now suppose $\ell\ge 3$. Recall that an $\AND_{\ell}$-gadget $H$ is obtained by taking two copies $G'$ and $G''$ of an $\AND_{\ell-1}$-gadget, and informally, putting a copy $G^*$ of an $\AND_2$-gadget on top of those. Any partition $\psi$ of the in-terminal set $B$ of $H$ into binary blocks corresponds to partitions $\psi'$ and $\psi''$ of the in-terminal sets $B'$ and $B''$ of $G'$ and $G''$ respectively, again into binary blocks.
So by induction, we have coarsest stable partitions $\rho'$ and $\rho''$ of $V(G')$ and $V(G'')$ that refine $\psi'$ and $\psi''$ and agree with them on $B'$ and $B''$, respectively. Together, this yields a partition $\pi$ of $V(G')\cup V(G'')$, which is stable for $G'\disjunion G''$, refines $\psi$, and agrees with $\psi$ on $B$. 
(To be precise: if $\psi$ is not the unit partition, then we can simply take $\pi=\rho'\cup \rho''$, because $\psi$ is a partition into binary blocks, and thus distinguishes every single in-terminal of $G'$ from every single in-terminal of $G''$. Otherwise, every set in $\pi$ should be the union of the two corresponding sets in $\rho'$ and $\rho''$.)
Then $\pi$ gives a partition of the out-terminals of $G'$ and $G''$,  which yields a matching partition $\psi^*$ of the in-terminals $B^*$ of $G^*$, again into binary blocks. Applying the induction hypothesis to $G^*$, we obtain a coarsest stable partition $\rho^*$ of $V(G^*)$ that refines and agrees with $\psi^*$.
Combining $\pi$ and $\rho^*$ yields a stable partition $\rho$ of the vertices $V(H)$ of the entire gadget.

Applying the induction hypothesis to $G'$ and $G''$ shows that at least one in-terminal pair of $G^*$ is not distinguished by $\psi^*$ if and only if at least one in-terminal pair of $G'$ or $G''$ is not distinguished by $\psi'$ or $\psi''$ respectively. Applying the induction hypothesis to $G^*$ then shows that $\rho$ does not distinguish the out-terminals of $H$ if $\psi$ does not distinguish at least one in-pair of $H$. This then also holds for the coarsest stable partition of $V(H)$ that refines $\psi$.

Finally, let $\psi$ be a partition into binary blocks of the in-terminals $B$ of $H$ that distinguishes every pair, and let $\rho$ be a coarsest stable partition that refines $\psi$. We prove that $\rho$ also distinguishes $a_0$ from $a_1$. 
By definition,
$\rho$ distinguishes any vertex from $B$ from any vertex not in $B$.
We conclude that for any two vertices $u,v\in V(H)$, if they have different distance to $B$, then $u\not\approx_{\rho} v$ (Proposition~\ref{propo:distances}).
So by Proposition~\ref{propo:PiClosedSubgraphRemainsStable}, 
$\rho$ induces stable partitions $\rho^*$ and $\pi$ for both $G^*$ and $G'\disjunion G''$, respectively. The graphs $G'$ and $G''$ are components of $G'\disjunion G''$, so we conclude that $\rho$ induces stable partitions $\rho'$ and $\rho''$ for both $G'$ and $G''$, respectively. 
By induction, it follows that $\rho'$ and $\rho''$ both distinguish the out-terminals of $G'$ and $G''$, respectively.
(If this holds for the coarsest stable partition, then it holds for any stable partition.)
Then 
$\psi:=\rho[B^*]$ (where $B^*$ denotes again the in-terminal set of $G^*$) distinguishes all in-terminal pairs of $G^*$. So by induction, $\rho$ distinguishes $a_0$ from $a_1$.
\QED

The following Corollary follows immediately from Lemma~\ref{lem:ANDgadgets}.

\begin{corol}
\label{cor:ANDgadget}
Let $\pi$ be a stable partition for an AND-gadget $G$ such that $\psi=\pi[B]$ is a partition of the in-terminals $B$ into binary blocks,
and such that $B$ is $\pi$-closed. 
If $\pi$ does not distinguish the out-terminals, then at least one in-terminal pair is not distinguished.
\end{corol}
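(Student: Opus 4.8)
The plan is to obtain the Corollary as an immediate consequence of Lemma~\ref{lem:ANDgadgets}, via the contrapositive. So I would assume that $\pi$ distinguishes every in-terminal pair $b_{2p},b_{2p+1}$ and show that $\pi$ distinguishes the out-terminals $a_0,a_1$. The first step is a bookkeeping observation: since $B$ is $\pi$-closed, each cell of $\psi=\pi[B]$ is in fact a cell of $\pi$, and $V(G)\setminus B$ is a union of cells of $\pi$; hence $\pi$ refines $\psi\cup\{V(G)\setminus B\}$, i.e.\ $\pi$ \emph{refines} $\psi$ in the sense of the convention introduced just before Lemma~\ref{lem:ANDgadgets}. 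As $\pi$ is also stable by hypothesis, $\pi$ is a stable partition of $V(G)$ that refines $\psi$.

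Next I would bring in the coarsest stable refinement. Let $\rho$ be the coarsest stable partition of $V(G)$ that refines $\psi$, which exists and is unique by Propositions~\ref{propo:RefOpSafe} and~\ref{propo:CoarsestStableUnique}; the same propositions show that every stable partition refining $\psi$ refines $\rho$, so in particular $\rho\refby\pi$. Now, because the in-terminals lie in the $\pi$-closed set $B$, distinguishing $b_{2p}$ from $b_{2p+1}$ by $\pi$ is the same as $b_{2p}\not\approx_\pi b_{2p+1}$, which in turn is equivalent to $b_{2p}\not\approx_{\psi}b_{2p+1}$; thus $\psi$ distinguishes all in-terminal pairs. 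Lemma~\ref{lem:ANDgadgets} then gives $a_0\not\approx_\rho a_1$, and since $\pi$ refines $\rho$ this yields $a_0\not\approx_\pi a_1$, i.e.\ $\pi$ distinguishes the out-terminals. Taking the contrapositive is exactly the statement of the Corollary.

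I do not expect any genuine obstacle here: the content is entirely carried by Lemma~\ref{lem:ANDgadgets}. The only place a careful reader might pause is the interplay of conventions — "refines a partition of a subset", "distinguishes a pair of singletons", and the fact that $B$ being $\pi$-closed lets one translate freely between statements about $\pi$ and statements about $\psi=\pi[B]$ — together with the standard fact, used throughout the paper, that every stable partition refining a given partition also refines the coarsest stable one. Once these are spelled out, the proof is a one-line deduction.
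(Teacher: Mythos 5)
Your proof is correct and follows essentially the same route as the paper: both observe that, since $B$ is $\pi$-closed and $\pi$ is stable, $\pi$ refines the coarsest stable partition $\rho$ refining $\psi=\pi[B]$, and then invoke Lemma~\ref{lem:ANDgadgets}. Writing it in the contrapositive and citing Propositions~\ref{propo:RefOpSafe} and~\ref{propo:CoarsestStableUnique} for the fact that every stable refinement of $\psi$ refines $\rho$ only makes explicit what the paper's three-line proof leaves implicit.
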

\PF 
Since $B$ is $\pi$-closed, $\pi$ refines $\psi=\pi[B]$. 
Since $\pi$ is stable, it refines the coarsest stable partition $\rho$ of $V(G)$ that refines $\psi$. Now apply Lemma~\ref{lem:ANDgadgets}.\QED

\subsection{Cost Lower Bound Proof}

Intuitively, at level $\ell$ of the refinement process, the current partition contains all blocks $\XX^{\ell+1}_q$ of level $\ell+1$ and for all $0\leq q < 2^{\ell}$, either $\YY^{\ell}_q$ or the two blocks $\YY^{\ell+1}_{2q}$ and $\YY^{\ell+1}_{2q+1}$.
In this situation one can split up the blocks $\YY^{\ell}_q$ into blocks $\YY^{\ell+1}_{2q}$ and $\YY^{\ell+1}_{2q+1}$ using either refining operation $(\XX^{\ell+1}_{2q},\YY^{\ell}_q)$ or $(\XX^{\ell+1}_{2q+1},\YY^{\ell}_q)$. 
These operations both have cost $2^{k-(\ell+1)}k^2$, and refining all the $\YY^{\ell}_q$ cells in this way costs $2^{k-1}k^2$. 
Once $\YY$ is partitioned into binary blocks of level $\ell+1$, we can partition $\XX$ into blocks of level $\ell+2$ (using the $\gadget_{\ell}$-gadget), and proceed the same way. 
Since there are $k$ such refinement levels, we can lower bound the total cost of refining the graph by $2^{k-1}k^3 =\Omega(m\log n)$ and are done. What remains to show is that applying the refining operations in this specific way is the only way to obtain a stable partition. 
To formalise this, we introduce a number of partitions of $V(G_k)$ that are stable with respect to the (spanning) subgraph $G'_k=G_k-[\XX,\YY]$, and that partition $\XX$ and $\YY$ into binary blocks. (For disjoint vertex sets $S$, $T$, we denote $[S,T]=\{uv\in E(G) \mid u\in S, v\in T\}$.)
So on $G_k$, these partitions can only be refined using operations $(R,S)$, where $R$ is a binary block of $\XX$ and $S$ is a binary block of $\YY$. 

\begin{defi}
\label{defi:QlStable}
For any $\ell\in \{0,\ldots,k-1\}$, and nonempty set $Q\subseteq \blk_{\ell}$, by $\tau_{Q,\ell}$ we denote the partition of $\XX\cup \YY$ that contains cells 
\begin{itemize}
 \item 
 $\XX^{\ell+1}_q$ for all $q\in \blk_{\ell+1}$, 
 \item
 $\YY^{\ell}_q$ for all $q\in Q$, and 
both  $\YY^{\ell+1}_{2q}$ and $\YY^{\ell+1}_{2q+1}$ for all $q\in \blk_{\ell}\bs Q$.
\end{itemize}
$\pi_{Q,\ell}$ denotes the coarsest stable partition for $G'_k=G_k-[\XX,\YY]$ that refines $\tau_{Q,\ell}$.
\end{defi}

We now show that for every $\ell$ and $Q$, there is also a stable partition of $G'_k$ that partitions $\XX$ and $\YY$ as prescribed by the above definition. In particular, this holds for $\pi_{Q,\ell}$.

\begin{lem}
\label{lem:QlStableExists}
For every $\ell\in \{0,\ldots,k-1\}$ and nonempty set $Q\subseteq \blk_{\ell}$, $\pi_{Q,\ell}$ agrees with $\tau_{Q,\ell}$.
\end{lem}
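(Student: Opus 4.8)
The statement asserts that the coarsest stable partition $\pi_{Q,\ell}$ of the spanning subgraph $G'_k = G_k - [\XX,\YY]$ refining $\tau_{Q,\ell}$ actually induces exactly $\tau_{Q,\ell}$ on $\XX \cup \YY$ — i.e.\ stabilising $\tau_{Q,\ell}$ with respect to $G'_k$ does not further split any of the prescribed binary blocks of $\XX$ or $\YY$. The natural strategy is to \emph{exhibit} a stable partition of $V(G'_k)$ that refines $\tau_{Q,\ell}$ and agrees with $\tau_{Q,\ell}$ on $\XX\cup\YY$; since $\pi_{Q,\ell}$ is the coarsest such, it is coarser than (or equal to) the exhibited one, and being a refinement of $\tau_{Q,\ell}$ it cannot do better than agree with $\tau_{Q,\ell}$ either, so it must agree as well. (Here one uses that a partition refining $\tau_{Q,\ell}$ cannot induce on $\XX\cup\YY$ anything coarser than $\tau_{Q,\ell}$, together with the fact that the exhibited partition \emph{agrees} with it, to pin $\pi_{Q,\ell}[\XX\cup\YY]$ down exactly.)

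**Constructing the witness partition.** I would build the candidate stable partition $\sigma$ of $V(G'_k)$ block by block, following the layered structure of $G_k$. On $\XX$, take the cells $\XX^{\ell+1}_q$ for $q \in \blk_{\ell+1}$; on $\YY$, take $\YY^\ell_q$ for $q \in Q$ and $\YY^{\ell+1}_{2q}, \YY^{\ell+1}_{2q+1}$ for $q \in \blk_\ell \bs Q$ — this is $\tau_{Q,\ell}$ by definition. Propagate outward: the cell $\X^{\ell+1}_q$ (matching $\XX^{\ell+1}_q$ through the $x_i \leftrightarrow x^j_i$ edges), and similarly the appropriate blocks of $\Y$ matching the $\YY$-blocks through the $y_i \leftrightarrow y^j_i$ edges; then the starting-gadget vertices $v_0,v_1,v_2$ (each $v_0,v_1$ adjacent to a half of $\X$, which is consistent since $\ell \geq 0$ gives $\X$ split at least into $\X^1_0,\X^1_1$ — this needs a small check for $\ell=0$ versus $\ell\geq1$). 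For each gadget $\AND_{\ell'}$ with $\ell' \in \{1,\ldots,k-1\}$: the in-terminals $b_i$ are forced into blocks matching $\Y^{\ell'}_i$ (via the $b_i$-to-$\Y^{\ell'}_i$ edges), so on the gadget's in-terminal set we get a partition $\psi$ into binary blocks; by Lemma~\ref{lem:ANDgadgets} applied to each gadget, the coarsest stable partition of that gadget refining $\psi$ agrees with $\psi$, so we can place its interior vertices and out-terminals $a_0,a_1$ accordingly without disturbing the $\Y$- or $\X$-blocks. Finally, one must verify this $\sigma$ is globally stable for $G'_k$: every vertex $u$ and cell $R$ of $\sigma$ satisfies $|N_{G'_k}(u)\cap R| = |N_{G'_k}(v)\cap R|$ for $v$ in the same cell. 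This splits into checking the "bipartite-complete-between-blocks" edges ($\X$–$\XX$, $\Y$–$\YY$, $b_i$–$\Y$, $a_i$–$\X$, $v_0/v_1$–$\X$), where the counts are uniform by block structure and the defining property $\blk^{\ell}_q = \blk^{\ell+1}_{2q}\cup\blk^{\ell+1}_{2q+1}$, and the within-gadget edges, where stability is exactly what Lemma~\ref{lem:ANDgadgets} provides.

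**Main obstacle.** The delicate point is the interaction at the gadget in-terminals when $q \in Q$ (so $\YY^\ell_q$ is \emph{not} split) versus $q \in \blk_\ell\bs Q$ (so it \emph{is} split into two level-$(\ell+1)$ halves): I must check that the induced partition $\psi$ of each gadget's in-terminal set is genuinely a partition into binary blocks in the gadget's own indexing, so that Lemma~\ref{lem:ANDgadgets} is applicable — this requires matching the gadget's recursive in-terminal labelling $b_0,\ldots,b_{2^{\ell'}-1}$ against the $\Y$-block structure and confirming the blocks align. A secondary subtlety is that the stable partition of a gadget interacts with $\X$ only through its out-terminals $a_0,a_1$, and I need that whatever (possibly coarse) partition of $\{a_0,a_1\}$ results is compatible with the $\X^{\ell+1}_{2q+i}$ blocks the $a_i$ are attached to; since $\sigma$ restricted to $\X$ is fixed and the $a_i$ attach to \emph{unions} of $\sigma$-blocks, stability at $a_i$ reduces to the block arithmetic and is fine regardless of whether $a_0,a_1$ land in the same $\sigma$-cell. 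Apart from this, the proof is a careful but routine bookkeeping exercise riding on Lemma~\ref{lem:ANDgadgets} and Proposition~\ref{propo:CoarsestStableUnique}.
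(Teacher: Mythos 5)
Your overall strategy is exactly the paper's: build an explicit partition of $V(G_k)$ that agrees with $\tau_{Q,\ell}$ on $\XX\cup\YY$, fill in each $\AND$-gadget with the coarsest stable partition refining the partition $\psi$ induced on its in-terminals via Lemma~\ref{lem:ANDgadgets}, verify global stability on $G'_k$, and then conclude via coarsest-ness that $\pi_{Q,\ell}[\XX\cup\YY]=\tau_{Q,\ell}$. That frame, and your treatment of $\X$, $\Y$, the starting gadget and the in-terminal side, are fine. The gap is in the one place the paper flags as the only delicate check: stability at the out-terminals. Your justification rests on the claim that the $a_i$ ``attach to unions of $\sigma$-blocks''. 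For an $\AND_p$-gadget the out-terminals attach to the blocks $X^{p+1}_{2q+i}$, and these are unions of the cells $X^{\ell+1}_{q'}$ only when $p\le\ell$. Precisely in the opposite regime $p\ge\ell+1$ --- which is exactly when $a_0,a_1$ can end up in a common cell --- the attachment sets are \emph{strictly finer} than the $X$-cells, so your stated reason fails where it is needed. (Writing $\X^{\ell+1}_{2q+i}$, i.e.\ conflating the gadget level with the lemma's $\ell$, is likely the source: only the level-$\ell$ gadget attaches to whole cells.) The correct argument there is the reverse containment: $X^{p+1}_{2q}\cup X^{p+1}_{2q+1}=X^p_q\subseteq X^{\ell+1}_{q'}$ with the two halves of equal size, so $N(a_0)$ and $N(a_1)$ meet every cell in equal numbers and the cell $\{a_0,a_1\}$ is stable.

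Moreover, ``fine regardless of whether $a_0,a_1$ land in the same $\sigma$-cell'' is not true as a blanket statement, and hides a case analysis that genuinely requires both directions of the ``if and only if'' in Lemma~\ref{lem:ANDgadgets}. If for a gadget of level $p\le\ell$ the out-terminals shared a cell, that cell would be unstable: some cell $X^{\ell+1}_{q'}$ lies entirely inside $N(a_0)\bs N(a_1)$. This bad case is excluded only because for $p\le\ell$ the induced $\psi$ distinguishes all in-terminal pairs, so the lemma forces $a_0\not\approx a_1$. Conversely, if for a gadget of level $p\ge\ell+1$ the out-terminals were \emph{distinguished}, then the $X$-cells $X^{\ell+1}_{q'}$ would be unstable, since they are cut into halves by the attachments to $a_0$ versus $a_1$; this is excluded because $Q\neq\emptyset$ guarantees some undistinguished in-terminal pair, so the lemma forces $a_0\approx a_1$. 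So you must compute which in-terminal pairs $\psi$ distinguishes (as the paper does: a pair of the $\AND_p$-gadget is distinguished iff $\ell\ge p$, or $\ell=p-1$ and $q\notin Q$) and invoke Lemma~\ref{lem:ANDgadgets} in both directions to know exactly when $\{a_0,a_1\}$ is a cell, before the block-arithmetic count finishes the stability proof. With that repaired, your proof coincides with the paper's.
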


\PF
We design a stable partition $\rho$ of $V(G_k)=V(G'_k)$ that is stable on $G'_k$, and agrees with $\tau_{Q,\ell}$. So we start with $\rho=\tau_{Q,\ell}$. For every cell $\XX^{\ell+1}_q$ in $\tau_{Q,\ell}$, we add the cell $X^{\ell+1}_q$ to $\rho$. For every cell $\YY^m_q$ in $\tau_{Q,\ell}$ ($\ell\le m\le \ell+1$), we add the cell $Y^m_q$ to $\rho$. 
Then we add cells $\{v_0\}$, $\{v_1\}$ and $\{v_2\}$.

For every $\AND_p$-gadget $G$ of $G_k$ (with in-terminals adjacent to $Y$ and out-terminals adjacent to $X$), we define a partition $\psi$ of the in-terminals $B$ as follows: for $u,v\in B$, $u\approx_{\psi} v$ if and only if $N(u)\cap Y$ is not distinguished from $N(v)\cap Y$. Note that this yields a partition of $B$ into binary blocks, and that this distinguishes an in-terminal pair $b_{2q}$, $b_{2q+1}$ (which are adjacent to $Y^p_{2q}$ and $Y^p_{2q+1}$, respectively, with union $Y^{p-1}_q$) if and only if $\ell\ge p$ holds, or both $\ell=p-1$ and $q\not\in Q$ hold.
Now we extend $\rho$ by adding all cells of the coarsest stable partition of the $\AND_p$-gadget $G$ that refines $\psi$.
By Lemma~\ref{lem:ANDgadgets}, this partition distinguishes the out-terminals of $G$ if and only if $\ell\ge p$ (since $Q$ is nonempty). Extending $\rho$ this way for every AND-gadget yields the final partition $\rho$ of $V(G_k)$.  By definition, $\rho$ agrees with $\tau_{Q,\ell}$. From the construction, the stability condition is easily verified for almost all cells of $\rho$. Only cells $\{a_0,a_1\}\in \rho$ consisting of out-terminals of $\AND_p$-gadgets need to be considered in more detail. As noted before, such cells only occur when $p\ge \ell+1$. Then we have for every integer $q$ that $X^{p+1}_{2q}\cup X^{p+1}_{2q+1}=X^p_q\subseteq X^{\ell+1}_{q'}\in \rho$ (for some value $q'$). Since $a_0$ is adjacent to every $X^{p+1}_{2q}$ and $a_1$ is adjacent to every $X^{p+1}_{2q+1}$, it follows that $N(a_0)$ and $N(a_1)$ are not distinguished by $\rho$. Therefore, $\rho$ is stable for $G'_k$. Then the coarsest stable partition $\pi_{Q,\ell}$ that refines $\tau_{Q,\ell}$ also agrees with $\tau_{Q,\ell}$. 
\QED

Since $\pi_{Q,\ell}$ is stable on $G'_k$, any effective refining operation (with respect to $G_k$) should involve the edges between $\XX$ and $\YY$. 
Since $\pi_{Q,\ell}$ partitions $\XX$ and $\YY$ as prescribed by $\tau_{Q,\ell}$, we conclude that 
any effective elementary refining operation has the form described in the following corollary. Recall that a refining operation $(R,S)$ for a partition $\pi$ is \emph{elementary} if both $R$ and $S$ are classes of $\pi$, and that by Proposition~\ref{propo:ElementaryIrrelevant} it suffices to consider elementary refining operations.

\begin{corol}
\label{cor:OnlySpecificOperations}
Let $(R,S)$ be an effective elementary refining operation on $\pi_{Q,\ell}$. Then for some $q\in Q$, $R=\XX^{\ell+1}_{2q}$ or $R=\XX^{\ell+1}_{2q+1}$, and $S=\YY^{\ell}_q$. The cost of this operation is $k^2 2^{k-(\ell+1)}$.
\end{corol}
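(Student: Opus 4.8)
The central observation is that $\pi_{Q,\ell}$ is, by definition, stable for the spanning subgraph $G'_k=G_k-[\XX,\YY]$; hence any \emph{effective} refining operation $(R,S)$ on $\pi_{Q,\ell}$ with respect to $G_k$ must be ``caused'' by the edges in $[\XX,\YY]$. Since those edges all run between $\XX$ and $\YY$, and Lemma~\ref{lem:QlStableExists} tells us exactly which cells of $\pi_{Q,\ell}$ lie inside $\XX\cup\YY$, the plan is to run a short case analysis on the possible single cells $R$ and $S$, using that two binary blocks of levels at most $\ell+1$ are either nested or disjoint; this pins down the form of $(R,S)$, after which the cost is a direct count.

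First I would record the relevant cells: because $\pi_{Q,\ell}$ refines $\tau_{Q,\ell}\cup\{V(G_k)\setminus(\XX\cup\YY)\}$, the set $\XX\cup\YY$ is $\pi_{Q,\ell}$-closed, and by Lemma~\ref{lem:QlStableExists} the cells of $\pi_{Q,\ell}$ contained in $\XX\cup\YY$ are precisely the cells of $\tau_{Q,\ell}$, namely the blocks $\XX^{\ell+1}_{q}$ for $q\in\blk_{\ell+1}$, the blocks $\YY^{\ell}_{q}$ for $q\in Q$, and the pairs $\YY^{\ell+1}_{2q},\YY^{\ell+1}_{2q+1}$ for $q\in\blk_\ell\setminus Q$. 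Since the operation is elementary, $R$ and $S$ are both single cells of $\pi_{Q,\ell}$. As $\pi_{Q,\ell}$ is stable for $G'_k$ and $R$ is one of its cells, $|N_{G'_k}(u)\cap R|=|N_{G'_k}(v)\cap R|$ whenever $u\approx_{\pi_{Q,\ell}}v$; so if $(R,S)$ splits $S$, there are $u,v\in S$ whose difference $|N_{G_k}(u)\cap R|\neq|N_{G_k}(v)\cap R|$ is accounted for entirely by $[\XX,\YY]$-edges. I would also note that in $[\XX,\YY]$ the vertex $x^{j_1}_i$ is adjacent to exactly the vertices $y^{j_2}_i$, over all $j_1,j_2\in\{1,\ldots,k\}$.

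Next I would carry out the case analysis on $S$. If $S\subseteq V(G_k)\setminus(\XX\cup\YY)$ then no vertex of $S$ touches a $[\XX,\YY]$-edge, so $(R,S)$ cannot be effective. If $S$ meets $\XX$, then $S=\XX^{\ell+1}_{q'}$ and an $[\XX,\YY]$-contribution forces $R$ to meet $\YY$, so $R\in\{\YY^\ell_q,\YY^{\ell+1}_{2q},\YY^{\ell+1}_{2q+1}\}$; but then for $u=x^j_i\in S$ we have $|N_{G_k}(u)\cap R|=k$ if $i$ lies in the (level-$\ell$ or level-$(\ell+1)$) index block of $R$ and $0$ otherwise, and since that block is nested in or disjoint from $\blk^{\ell+1}_{q'}$, this value is constant on $S$, so $(R,S)$ is not effective. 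Hence $S$ meets $\YY$, so $S$ is $\YY^{\ell}_q$ (with $q\in Q$) or $\YY^{\ell+1}_{2q}$ or $\YY^{\ell+1}_{2q+1}$ (with $q\in\blk_\ell\setminus Q$), and an $[\XX,\YY]$-contribution now forces $R=\XX^{\ell+1}_{q''}$ for some $q''$. For $u=y^j_i\in S$ one gets $|N_{G_k}(u)\cap R|=k$ if $i\in\blk^{\ell+1}_{q''}$ and $0$ otherwise; if the index set of $S$ has level $\ell+1$ this is again constant on $S$, so $S=\YY^{\ell}_q$ with $q\in Q$, and for the value to be non-constant on $\blk^\ell_q=\blk^{\ell+1}_{2q}\cup\blk^{\ell+1}_{2q+1}$ we need $q''\in\{2q,2q+1\}$, i.e.\ $R=\XX^{\ell+1}_{2q}$ or $R=\XX^{\ell+1}_{2q+1}$, as claimed.

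Finally, for the cost I would use that $R\subseteq\XX$ and $S\subseteq\YY$ are disjoint and the only edges of $G_k$ joining $\XX$ and $\YY$ are the $[\XX,\YY]$-edges, so $\cost(R,S)$ just counts them: for each of the $2^{k-(\ell+1)}$ indices $i$ in the level-$(\ell+1)$ block of $R$ — all lying in $\blk^{\ell}_q$ since $q''\in\{2q,2q+1\}$ — all $k^2$ edges $x^{j_1}_i y^{j_2}_i$ run between $R$ and $S$, whence $\cost(R,S)=k^2\,2^{k-(\ell+1)}$. I expect the only (modest) difficulty to be the careful book-keeping of which cells $R$ and $S$ can be and of the nesting/disjointness of binary blocks at different levels; there is no real conceptual obstacle, since the substantive work has already been done by Lemma~\ref{lem:QlStableExists} together with the fact that $\pi_{Q,\ell}$ is stable on $G'_k$.
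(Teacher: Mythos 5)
Your proposal is correct and follows essentially the same route as the paper, which derives the corollary directly from the facts that $\pi_{Q,\ell}$ is stable on $G'_k=G_k-[\XX,\YY]$ (so any effective operation must be driven by $[\XX,\YY]$-edges) and that, by Lemma~\ref{lem:QlStableExists}, the cells inside $\XX\cup\YY$ are exactly those of $\tau_{Q,\ell}$. Your explicit case analysis on which cell $S$ can be, using the nesting/disjointness of binary blocks and the edge pattern $x^{j_1}_i y^{j_2}_i$, together with the cost count $k^2\,2^{k-(\ell+1)}$, simply fills in the details the paper leaves implicit.
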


This motivates the following definition: for $q\in Q$, by $r_q(\pi_{Q,\ell})$ we denote the partition of $V(G_k)$ that results from the above refining operation. (Both choices of $R$ yield the same result.)

\begin{lem}
\label{lem:PartialOrder_Qlpartitions} 
For every $\ell\in \{0,\ldots,k-1\}$, nonempty $Q\subseteq \blk_\ell$ and $q\in Q$:
\begin{itemize}
 \item $r_q(\pi_{Q,\ell})\refby \pi_{\blk_{\ell+1},\ell+1}$, and
 \item if $Q'=Q\bs \{q\}$ is nonempty, then $r_q(\pi_{Q,\ell})\refby \pi_{Q',\ell}$.
\end{itemize}
\end{lem}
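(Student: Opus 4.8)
The plan is to prove each of the two containments by exhibiting a suitable stable partition of $G'_k$ that is refined by $r_q(\pi_{Q,\ell})$, and then appealing to the fact that $\pi_{\blk_{\ell+1},\ell+1}$ (resp. $\pi_{Q',\ell}$) is the \emph{coarsest} such partition; by Proposition~\ref{propo:RefOpSafe} it suffices to check that $r_q(\pi_{Q,\ell})$ refines the relevant $\tau$-partition and is followed by safe refining operations, but the cleanest route is: show $r_q(\pi_{Q,\ell})$ refines $\tau_{\blk_{\ell+1},\ell+1}$ (resp. $\tau_{Q',\ell}$), hence refines its coarsest stable refinement on $G'_k$. First I would record what $r_q(\pi_{Q,\ell})$ actually looks like: by Corollary~\ref{cor:OnlySpecificOperations} it is obtained from $\pi_{Q,\ell}$ by splitting the single cell $\YY^{\ell}_q$ into $\YY^{\ell+1}_{2q}$ and $\YY^{\ell+1}_{2q+1}$ (using one half of $\XX^{\ell+1}_q$ as the refining set), leaving every other cell of $\pi_{Q,\ell}$ untouched. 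So on $\XX\cup\YY$, $r_q(\pi_{Q,\ell})$ agrees with the partition $\tau'$ that has cells $\XX^{\ell+1}_{q'}$ for all $q'\in\blk_{\ell+1}$, cells $\YY^{\ell}_{q'}$ for $q'\in Q\setminus\{q\}$, and cells $\YY^{\ell+1}_{2q'},\YY^{\ell+1}_{2q'+1}$ for $q'\in(\blk_\ell\setminus Q)\cup\{q\}$ — which is exactly $\tau_{Q',\ell}$ with $Q'=Q\setminus\{q\}$. That settles the second bullet once we know the operation produces a partition stable on $G'_k$ above $\pi_{Q',\ell}$: since $r_q(\pi_{Q,\ell})$ agrees with $\tau_{Q',\ell}$ and $\pi_{Q,\ell}\refby r_q(\pi_{Q,\ell})$ with $\pi_{Q,\ell}$ stable on $G'_k$ refining $\tau_{Q,\ell}\refby\tau_{Q',\ell}$ (as partitions restricted to $\XX\cup\YY$), Proposition~\ref{propo:RefOpSafe} applied to $G'_k$ gives $r_q(\pi_{Q,\ell})\refby \pi_{Q',\ell}$. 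The only subtlety is that the refining operation was performed on $G_k$, not $G'_k$; but the operation only splits one $\YY$-cell, and on $G'_k$ the $\XX$-$\YY$ edges are absent, so $r_q(\pi_{Q,\ell})$ is a refinement of $\pi_{Q,\ell}$ that still refines $\tau_{Q',\ell}$, and since $\pi_{Q',\ell}$ is the coarsest stable refinement of $\tau_{Q',\ell}$ on $G'_k$ while $r_q(\pi_{Q,\ell})$ need not yet be stable on $G'_k$, one should instead argue: $\pi_{Q,\ell}\refby r_q(\pi_{Q,\ell})$, and $\pi_{Q,\ell}$ already refines $\tau_{Q',\ell}$ (check this directly from Definition~\ref{defi:QlStable}), so $\pi_{Q,\ell}\refby \pi_{Q',\ell}$ fails in general — so the argument must be: $r_q(\pi_{Q,\ell})$ agrees with $\tau_{Q',\ell}$ on $\XX\cup\YY$ and extends a partition stable on $G'_k$, hence the coarsest stable refinement of $\tau_{Q',\ell}$ on $G'_k$, which is $\pi_{Q',\ell}$, is refined by $r_q(\pi_{Q,\ell})$ precisely when $r_q(\pi_{Q,\ell})$ is itself stable on $G'_k$ and refines $\tau_{Q',\ell}$.

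For the first bullet the key new ingredient is the $\AND$-gadget analysis. After performing $r_q(\pi_{Q,\ell})$, the cell $\YY^{\ell}_q$ has been split, but to reach $\pi_{\blk_{\ell+1},\ell+1}$ we need \emph{all} cells $\YY^{\ell}_{q'}$, $q'\in Q$, to be split into level-$(\ell+1)$ blocks, and moreover $\XX$ to be refined down to level-$(\ell+2)$ blocks and the gadget interiors to follow suit. My plan: first I would verify $r_q(\pi_{Q,\ell})$ refines $\tau_{\blk_{\ell+1},\ell+1}$ — but this is \emph{false} unless $Q=\{q\}$, since $r_q$ only splits one $\YY$-cell. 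So the correct statement being claimed is the weaker ``$r_q(\pi_{Q,\ell})\refby\pi_{\blk_{\ell+1},\ell+1}$'', meaning the coarsest stable refinement of $r_q(\pi_{Q,\ell})$ on $G_k$ is a refinement of $\pi_{\blk_{\ell+1},\ell+1}$, equivalently $\pi_{\blk_{\ell+1},\ell+1}$ lies above $r_q(\pi_{Q,\ell})$ in $\refby$, i.e. $r_q(\pi_{Q,\ell})\refby\pi_{\blk_{\ell+1},\ell+1}$ literally as partitions, which forces every cell of $\pi_{\blk_{\ell+1},\ell+1}$ to be a union of cells of $r_q(\pi_{Q,\ell})$. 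Re-reading: $\pi_{\blk_{\ell+1},\ell+1}$ partitions $\YY$ into level-$(\ell+2)$ blocks (with $Q=\blk_{\ell+1}$ the cells are $\YY^{\ell+1}_{q'}$ for all $q'\in\blk_{\ell+1}$, these being ``$\YY^{\ell'}_{q'}$ for $q'\in Q$'' with $\ell'=\ell+1$), and $\XX$ into level-$(\ell+2)$ blocks; meanwhile $r_q(\pi_{Q,\ell})$ partitions $\XX$ into level-$(\ell+1)$ blocks, so it is \emph{not} refined by $\pi_{\blk_{\ell+1},\ell+1}$. Hence $\refby$ here must be read as ``$\pi_{\blk_{\ell+1},\ell+1}$ refines the coarsest stable refinement of $r_q(\pi_{Q,\ell})$'' — no; given the notation $\pi\refby\rho$ means $\rho$ refines $\pi$, the claim $r_q(\pi_{Q,\ell})\refby\pi_{\blk_{\ell+1},\ell+1}$ says $\pi_{\blk_{\ell+1},\ell+1}$ refines $r_q(\pi_{Q,\ell})$, i.e. every cell of $r_q(\pi_{Q,\ell})$ is a union of cells of $\pi_{\blk_{\ell+1},\ell+1}$: cells $\XX^{\ell+1}_{q'}=\XX^{\ell+2}_{2q'}\cup\XX^{\ell+2}_{2q'+1}$, cells $\YY^{\ell}_{q'}=\YY^{\ell+2}_{4q'}\cup\dots\cup\YY^{\ell+2}_{4q'+3}$ for $q'\in Q\setminus\{q\}$, cells $\YY^{\ell+1}_{2q},\YY^{\ell+1}_{2q+1}$ splitting into level-$(\ell+2)$ halves, and gadget-interior cells — all of which hold because level-$(\ell+2)$ blocks refine level-$(\ell+1)$ and level-$\ell$ blocks, and the gadget interiors of $\pi_{\blk_{\ell+1},\ell+1}$ (being a finer stable partition) refine those of $r_q(\pi_{Q,\ell})$. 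So the first bullet is essentially: level-$j$ binary blocks refine level-$j'$ blocks for $j\ge j'$, plus monotonicity of the gadget partitions in the in-terminal partition (Lemma~\ref{lem:ANDgadgets}: a finer binary-block partition of $B$ yields a finer coarsest stable refinement).

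Concretely, the key steps in order are: (1) use Corollary~\ref{cor:OnlySpecificOperations} and Lemma~\ref{lem:QlStableExists} to describe the cells of $r_q(\pi_{Q,\ell})$ explicitly, both on $\XX\cup\YY$ and inside the $\AND$-gadgets and starting gadget; (2) unfold Definitions~\ref{defi:QlStable} and the $\tau$-cell structure for $(Q',\ell)$ and for $(\blk_{\ell+1},\ell+1)$; (3) for the first bullet, check cell-by-cell that each cell of $r_q(\pi_{Q,\ell})$ is $\pi_{\blk_{\ell+1},\ell+1}$-closed, using the binary-block identity $\blk^\ell_q=\blk^{\ell+1}_{2q}\cup\blk^{\ell+1}_{2q+1}$ for the $\XX,\YY$-cells and, for the gadget interiors, the fact that the in-terminal partition induced by $r_q(\pi_{Q,\ell})$ on each $\AND_p$-gadget is coarsened by (or equal to) the one induced by $\pi_{\blk_{\ell+1},\ell+1}$, so Lemma~\ref{lem:ANDgadgets}'s coarsest stable partitions are correspondingly ordered; (4) for the second bullet, observe $r_q(\pi_{Q,\ell})$ agrees with $\tau_{Q',\ell}$ on $\XX\cup\YY$, verify it is stable on $G'_k$ (the only nontrivial check is again the out-terminal cells $\{a_0,a_1\}$ of high-level gadgets, handled exactly as in the proof of Lemma~\ref{lem:QlStableExists}: $X^{p+1}_{2q''}\cup X^{p+1}_{2q''+1}$ sits inside one $\XX$-cell), and conclude by coarsest-ness of $\pi_{Q',\ell}$. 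The main obstacle is step (3)/(4)'s gadget bookkeeping: one must track, for each $\AND_p$-gadget, which in-terminal pairs are distinguished by $r_q(\pi_{Q,\ell})$ versus by the target partition, and invoke the monotonicity direction of Lemma~\ref{lem:ANDgadgets}; the splitting of the single cell $\YY^\ell_q$ distinguishes exactly the in-terminal pair of the $\AND_\ell$-gadget attached to $Y^{\ell-1}_{\lfloor q/2\rfloor}$... wait, it distinguishes the pair $b_{2q},b_{2q+1}$ of the $\AND_\ell$-gadget, newly — and one must confirm this is consistent with, and implied by, the in-terminal partition arising in $\pi_{\blk_{\ell+1},\ell+1}$, which distinguishes \emph{all} level-$\ell$ and higher in-terminal pairs. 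Once this monotonicity is in hand everything else is routine verification against the definitions.
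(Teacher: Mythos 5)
There is a genuine gap, and it sits exactly at the step you talked yourself out of. The paper's proof is two lines: since $\tau_{Q,\ell}\refby\tau_{Q',\ell'}$ (for either target, $(Q',\ell')=(\blk_{\ell+1},\ell+1)$ or $(Q\bs\{q\},\ell)$), the partition $\pi_{Q',\ell'}$ is a stable partition of $G'_k$ that refines $\tau_{Q,\ell}$, so by coarsest-ness of $\pi_{Q,\ell}$ we get $\pi_{Q,\ell}\refby\pi_{Q',\ell'}$; and the single split performed by $r_q$ only separates pairs from $\YY^{\ell+1}_{2q}\times\YY^{\ell+1}_{2q+1}$, which $\pi_{Q',\ell'}$ separates as well (it refines $\tau_{Q',\ell'}$), whence $r_q(\pi_{Q,\ell})\refby\pi_{Q',\ell'}$. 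You explicitly assert that ``$\pi_{Q,\ell}\refby\pi_{Q',\ell}$ fails in general'' --- it does not fail, and it is the crux; along the way you also assert ``$\pi_{Q,\ell}$ already refines $\tau_{Q',\ell}$'', which is false, since $\YY^{\ell}_q$ is a cell of $\pi_{Q,\ell}$ that is contained in no cell of $\tau_{Q',\ell}$. (Note the general fact being used: every stable partition refining $\tau$ refines the coarsest stable partition refining $\tau$; this is implicit in Propositions~\ref{propo:RefOpSafe} and~\ref{propo:CoarsestStableUnique}.)

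Your replacement argument for the second bullet then fails on two counts. First, $r_q(\pi_{Q,\ell})$ is \emph{not} stable on $G'_k$: the refining operation splits only the cell $\YY^{\ell}_q$, so the cell $Y^{\ell}_q$ survives, yet a vertex $y_i$ with $i\in\blk^{\ell+1}_{2q}$ now has $k$ neighbours in the cell $\YY^{\ell+1}_{2q}$ and none in $\YY^{\ell+1}_{2q+1}$, while $y_{i'}$ with $i'\in\blk^{\ell+1}_{2q+1}$ has the opposite counts; the in-terminal and gadget-interior cells are likewise not yet split to match. Your ``only nontrivial check is the out-terminal cells'' misses this. Second, even if stability held, coarsest-ness of $\pi_{Q',\ell}$ applied to a stable refinement of $\tau_{Q',\ell}$ would yield $\pi_{Q',\ell}\refby r_q(\pi_{Q,\ell})$, i.e.\ the \emph{reverse} of the relation the lemma asserts. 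For the first bullet, your cell-by-cell verification is fine on $\XX\cup\YY$ (the binary-block containments), but for the cells on $X$, $Y$, the starting gadget and the $\AND$-gadget interiors you rely on an explicit description of $\pi_{Q,\ell}$ and $\pi_{\blk_{\ell+1},\ell+1}$ there (as coarsest stable gadget partitions over the induced in-terminal partitions) that Lemma~\ref{lem:QlStableExists} does not provide --- it only pins these partitions down on $\XX\cup\YY$, giving an upper bound via a constructed stable partition, not equality. The clean repair of both bullets is precisely the global coarsest-ness step you rejected, after which the ``newly separated pairs are already separated by the target'' observation finishes the proof.
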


\PF  
Choose $Q'$ and $\ell'$ satisfying one of the conditions (i.e. $Q'=\blk_{\ell+1}$ and $\ell'=\ell+1$, or $Q'=Q\bs \{q\}$ and $\ell'=\ell$).
Then $\tau_{Q,\ell}\refby \tau_{Q',\ell'}$, so also $\pi_{Q,\ell}\refby \pi_{Q',\ell'}$ (since $\pi_{Q',\ell'}$ is also a stable partition that refines $\tau_{Q,\ell}$).
If we now obtain a partition $\rho$ from $\pi_{Q,\ell}$ by splitting up one cell such that the only vertex pairs $u,v$ with $u\approx_{\pi_{Q,\ell}} v$ but $u\not\approx_{\rho} v$ are vertex pairs with $u\not\approx_{\pi_{Q',\ell'}} v$, then clearly still $\rho\refby \pi_{Q',\ell'}$ holds. This is exactly how $r_q(\pi_{Q,\ell})$ is obtained.
\QED

\begin{lem}
\label{lem:StableDiscrete}
Let $\omega$ be the coarsest stable partition for $G_k$.
For all $\ell\in \{0,\ldots,k-1\}$ and nonempty $Q\subseteq \blk_{\ell}$: $\pi_{Q,\ell}\refby \omega$.
\end{lem}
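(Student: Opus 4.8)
The plan is to prove the statement by downward induction on $\ell$, from $\ell=k-1$ down to $\ell=0$, and within each level by induction on $|Q|$, showing that $\pi_{Q,\ell}\refby\omega$. The base case and the inductive steps will both rely on Lemma~\ref{lem:PartialOrder_Qlpartitions} together with Proposition~\ref{propo:RefOpSafe} (applying a refining operation cannot ``overshoot'' a stable partition) and Proposition~\ref{propo:ElementaryIrrelevant}/Corollary~\ref{cor:OnlySpecificOperations} (which pin down exactly which elementary refining operations are available on $\pi_{Q,\ell}$).

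For the \emph{base case} $\ell=k-1$: here $\blk_{k-1}$ has $2^{k-1}$ elements, and for $\ell=k-1$ the blocks $\XX^{\ell+1}_q$ and $\YY^{\ell+1}_q$ are already singletons, so $\pi_{\blk_k,k}$ should be interpreted as the fully refined (discrete on $\XX\cup\YY$) partition; one checks that $\pi_{\blk_{k-1},k-1}$ becomes stable for all of $G_k$ once every $\YY^{k-1}_q$ is split, i.e.\ once $Q=\emptyset$. More precisely, I would argue: starting from $\pi_{Q,k-1}$ with $Q$ nonempty, repeatedly apply the operations $r_q$ of Corollary~\ref{cor:OnlySpecificOperations} to remove elements of $Q$ one at a time; by Lemma~\ref{lem:PartialOrder_Qlpartitions} this stays below each successive $\pi_{Q',k-1}$, and when $Q$ shrinks to a single element the first bullet of that lemma gives $r_q(\pi_{\{q\},k-1})\refby\pi_{\blk_k,k}$, which is the discrete-on-$\XX\cup\YY$ partition — and one verifies this last partition is stable for $G_k$, hence equals (a refinement of) $\omega$. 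Actually it is cleaner to treat $\pi_{\blk_\ell,\ell}$ uniformly for all $\ell$ and make the statement ``$\pi_{Q,\ell}\refby\omega$'' the thing proved by the double induction, using as the innermost anchor the fact that $\pi_{Q,\ell}$ with $|Q|=1$ refines $\pi_{\blk_{\ell+1},\ell+1}$ (first bullet), which by the outer induction hypothesis refines $\omega$.

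For the \emph{inductive step}: fix $\ell<k-1$ and a nonempty $Q\subseteq\blk_\ell$, and assume (outer IH) that $\pi_{\blk_{\ell+1},\ell+1}\refby\omega$, and (inner IH) that $\pi_{Q'',\ell}\refby\omega$ for all nonempty $Q''$ with $|Q''|<|Q|$. Pick $q\in Q$. If $|Q|=1$, then $r_q(\pi_{Q,\ell})\refby\pi_{\blk_{\ell+1},\ell+1}\refby\omega$ by the first bullet of Lemma~\ref{lem:PartialOrder_Qlpartitions} and the outer IH. If $|Q|\ge2$, let $Q'=Q\bs\{q\}$, which is nonempty, and then $r_q(\pi_{Q,\ell})\refby\pi_{Q',\ell}\refby\omega$ by the second bullet and the inner IH. In either case $r_q(\pi_{Q,\ell})\refby\omega$. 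Now $r_q(\pi_{Q,\ell})$ is obtained from $\pi_{Q,\ell}$ by the single refining operation $(\XX^{\ell+1}_{2q},\YY^\ell_q)$ (Corollary~\ref{cor:OnlySpecificOperations}), so $\pi_{Q,\ell}\refby r_q(\pi_{Q,\ell})$; since $\omega$ is a stable partition refining $\pi_{Q,\ell}$ and the operation is safe, Proposition~\ref{propo:RefOpSafe} gives $\pi_{Q,\ell}\refby r_q(\pi_{Q,\ell})\refby\omega$, completing the step. Finally, Lemma~\ref{lem:QlStableExists} guarantees $\pi_{Q,\ell}$ actually exists (agrees with $\tau_{Q,\ell}$), so all objects in the argument are well defined.

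The main obstacle I anticipate is the base case and the interplay of the two inductions: I must be careful that ``$\pi_{\blk_{\ell+1},\ell+1}$'' makes sense up to and including $\ell+1=k$ (where the $\XX,\YY$ blocks are singletons), and that at the top level this partition is genuinely stable for the \emph{full} graph $G_k$ — this is where one cashes in the $\AND$-gadget analysis (Lemma~\ref{lem:ANDgadgets}/Corollary~\ref{cor:ANDgadget}) via Lemma~\ref{lem:QlStableExists}, to know that once $\XX$ is partitioned into level-$(k)$ (singleton) blocks and $\YY$ likewise, the gadgets and the sets $X,Y$ are also fully resolved, so no further refinement is possible. Apart from that, everything reduces to chaining the two bullets of Lemma~\ref{lem:PartialOrder_Qlpartitions} with Proposition~\ref{propo:RefOpSafe}, which is routine.
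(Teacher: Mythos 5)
There is a genuine gap, and it sits exactly at the anchor of your double induction. Your step cases are fine (indeed, once you know $r_q(\pi_{Q,\ell})\refby\omega$ from the bullets of Lemma~\ref{lem:PartialOrder_Qlpartitions} and the induction hypotheses, plain transitivity of $\refby$ with $\pi_{Q,\ell}\refby r_q(\pi_{Q,\ell})$ finishes the step -- your appeal to Proposition~\ref{propo:RefOpSafe} ``since $\omega$ is a stable partition refining $\pi_{Q,\ell}$'' presupposes the goal, but that is harmless). The fatal problem is the base case: you anchor everything on the claim that the top-level partition (your ``$\pi_{\blk_k,k}$'', which by the way is not defined in Definition~\ref{defi:QlStable} and is not discrete on $\XX\cup\YY$ -- its level-$k$ blocks have $k$ vertices each) ``is stable for $G_k$, hence equals (a refinement of) $\omega$''. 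Stability of that partition gives you only the \emph{wrong} refinement direction: by iterating Proposition~\ref{propo:RefOpSafe} from the unit partition, \emph{every} stable partition of $G_k$ refines $\omega$, so stability yields $\omega\refby\sigma$ for your top partition $\sigma$, not $\sigma\refby\omega$. What your induction needs is $\sigma\refby\omega$, i.e.\ that $\omega$ is at least as fine as $\sigma$ -- equivalently, that no \emph{coarser} stable partition than $\sigma$ exists on $G_k$. That is precisely the hard content of Lemma~\ref{lem:StableDiscrete}, and nothing in Lemma~\ref{lem:PartialOrder_Qlpartitions}, Corollary~\ref{cor:OnlySpecificOperations} or Lemma~\ref{lem:QlStableExists} (which only asserts \emph{existence} of suitably coarse stable partitions of $G'_k$) supplies it; asserting your anchor amounts to asserting $\sigma=\omega$, i.e.\ assuming the conclusion at the finest level.

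The paper closes this gap by a direct structural argument about $\omega$ rather than by induction over the $\pi_{Q,\ell}$: it first isolates $X,\XX,\YY,\Y$, the starting gadget and the gadget vertices as $\omega$-closed sets (degrees and Proposition~\ref{propo:distances}), then argues by a minimal-counterexample over gadget levels that $\omega$ must distinguish the out-terminals of every $\AND_\ell$-gadget: if some gadget's out-terminals were not distinguished, Corollary~\ref{cor:ANDgadget} produces an undistinguished in-terminal pair, stability propagates this to $y_i\approx_\omega y_j$ and then to $x_i\approx_\omega x_j$ in adjacent binary blocks, and the previous-level gadget (or, at level $1$, the starting gadget with its cells $\{v_0\},\{v_1\}$) yields a contradiction. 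From this $\omega[X]$ is discrete, hence all blocks $\XX^k_q,\YY^k_q$ are $\omega$-closed, so $\omega$ refines every $\tau_{Q,\ell}$ and consequently every $\pi_{Q,\ell}$. Any repair of your proposal would have to import essentially this argument to establish the anchor, at which point the double induction becomes unnecessary.
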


\PF  
First, we note that by considering the various vertex degrees and using Proposition~\ref{propo:distances}, one can verify that $\omega$ refines $\{X,\XX,\YY,\Y,\{v_0\},\{v_1\},\{v_2\},V_G\}$, where $V_G$ denotes all vertices in AND-gadgets. 
In particular, $V_G$ is $\omega$-closed, so $\omega$ induces a stable partition on $G[V_G]$ (Proposition~\ref{propo:PiClosedSubgraphRemainsStable}), and therefore it does so on every AND-gadget of $G_k$ (which are components of $G[V_G]$).
Note that for any two different $\AND_{\ell}$-gadgets $H_1$ and $H_2$ of $G_k$, there exists an integer $d$ such that $H_1$ contains a vertex at distance exactly $d$ from the $\omega$-closed set $X\cup Y$, but $H_2$ does not.
This observation can be combined with Proposition~\ref{propo:distances} to show that if $u$ and $v$ are part of different AND-gadgets, then $u\not\approx_{\omega} v$. Subsequently this yields that for any AND-gadget of $G_k$ with output terminals $a_0,a_1$, the set $\{a_0,a_1\}$ is $\omega$-closed,
and the set of input terminals $B$ of this gadget is $\omega$-closed.

We now prove that $\omega[X]$ is discrete.
Suppose that there is an AND-gadget in $G_k$ for which the out-terminals are not distinguished by $\omega$. Then let $\ell$ be the minimum value such that this holds for the $\AND_{\ell}$-gadget $G$ of $G_k$.
As observed above, we may apply Corollary~\ref{cor:ANDgadget} to $G$, 
which shows that there is at least one pair of in-terminals $b_{2q}$ and $b_{2q+1}$ that is not distinguished by $\omega$.
By stability, and since $Y$ is $\omega$-closed, 
this shows that there are vertices $y_i\in Y^{\ell}_{2q}$ and $y_j\in Y^{\ell}_{2q+1}$ in the adjacent binary blocks such that $y_i\approx_{\omega} y_j$.
Then, considering the $\omega$-closed subgraph $G_k[X\cup \XX\cup \YY\cup Y]$, it easily follows that $x_i\approx_{\omega} x_j$.
If $\ell\ge 2$, then $x_i\in X^{\ell}_{2q}$ is adjacent to the out-terminal $a_0$ of the $\AND_{\ell-1}$ gadget of $G_k$, whereas $x_j\in X^{\ell}_{2q+1}$ is adjacent to the other out-terminal $a_1$ of this gadget. By choice of $\ell$, $a_0\not\approx_{\omega} a_1$, so since $\{a_0,a_1\}$ is $\omega$-closed, this gives a contradiction with stability. If $\ell=1$, then we consider the starting gadget: $x_i\in X^{\ell}_{0}$ is adjacent to $v_0$, and $x_j\in X^{\ell}_{1}$ is adjacent to $v_1$, but $\{v_0\}$ and $\{v_1\}$ are distinct cells of $\omega$, a contradiction with stability.

We conclude that for every AND-gadget, $\omega$ distinguishes the out-terminals. Since every vertex $x_i\in X$ is adjacent to a unique set of such out-terminals, it follows that $\omega[X]$ is discrete. Therefore, for every $q$, $\XX^k_q$ and $\YY^k_q$ are $\omega$-closed.
Hence $\omega$ refines $\tau_{Q,\ell}$ for every $Q$ and $\ell$, and thus it refines $\pi_{Q,\ell}$ for every $Q$ and $\ell$.
\QED

\PFof Theorem~\ref{thm:lowerbound}: 
Let $G_k$ be the graph described in Section~\ref{ssec:construction}, and $\pi_{Q,\ell}$ be the partitions of $V(G_k)$ from Definition~\ref{defi:QlStable}. 
By Lemma~\ref{lem:StableDiscrete}, the coarsest stable partition $\omega$ of $G$ refines all partitions $\pi_{Q,\ell}$.
For ease of notation, we define $\pi_{\emptyset,\ell}:=\pi_{\blk_{\ell+1},\ell+1}$ for all $\ell<k-1$.
By Corollary~\ref{cor:OnlySpecificOperations}, any effective elementary refining operation on a partition $\pi_{Q,\ell}$ has cost $2^{k-(\ell+1)}k^2$, and results in $r_q(\pi_{Q,\ell})$ for some $q\in Q$. Denote $Q'=Q\bs \{q\}$.
By Lemma~\ref{lem:PartialOrder_Qlpartitions}, $r_q(\pi_{Q,\ell})\refby \pi_{Q',\ell}$.
By Proposition~\ref{propo:ElementaryIrrelevant}, to compute the $\cost(\pi_{Q,\ell})$, it suffices to consider only partitions that can be obtained by {\em elementary} refining operations. 
So we may now apply Proposition~\ref{propo:CostBoundKey} to conclude that 
\[
\cost(\pi_{Q,\ell})\geq 2^{k-(\ell+1)}k^2 + \min_{q\in Q}\cost(\pi_{Q\bs\{q\},\ell}).
\]
By induction on $|Q|$ it then follows that $\cost(\pi_{\blk_{\ell},\ell})\geq 2^{k-1}k^2+\cost(\pi_{\blk_{\ell+1},\ell+1})$ for all $0\leq \ell\le k-1$. Hence, by induction on $\ell$, $\cost(\pi_{\blk_{0},0})\geq 2^{k-1}k^3$, which lower bounds $\cost(\alpha)$.
By Proposition~\ref{propo:Size}, $n\in O(2^kk)$ and $m\in O(2^kk^2)$,
so $\log n\in O(k)$. This shows that $\cost(\alpha)\in \Omega((m+n)\log n)$.
\QED

\subsection{Related lower bounds}\label{sec:bisim}

In this section, we sketch how our construction also yields lower bounds for two other partitioning problems.

\subsubsection{Bisimilarity}
Bisimilarity is a key concept in concurrency theory and automated verification.
A bisimulation is a binary relation defined on the states of a transition system (or between two transition systems). Intuitively, two states are bisimilar if the processes starting in these states look the same. Formally, a \emph{transition system} is a vertex-labelled directed graph. Let $S=(V,E,\lambda)$, where $(V,E)$ is a directed graph and $\lambda$ a function that assigns a set of properties to each state $v\in V$. A \emph{bisimulation} on $S$ is a relation $\sim$ on $V$ satisfying the following three properties for all $v,w\in V$ such that $v\sim w$:
\begin{enumerate}[label=(\roman*),leftmargin=2.5em]
  \item $\lambda(v)=\lambda(w)$;
  \item for all $v'\in N^+(v)$ there is a $w'\in N^+(w)$ such that $v'\sim w'$;
  \item for all $w'\in N^+(w)$ there is a $v'\in N^+(v)$ such that $v'\sim w'$.
\end{enumerate}
Not every bisimulation is an equivalence relation, but the reflexive
symmetric transitive closure of a bisimulation is still a
bisimulation.
For convenience, in the following we assume that all
bisimulations are equivalence relations. This is justified by the fact that the partition refinement algorithms (see below) that are commonly used to compute bisimulations, and that we study here, represent the relations using partitions of the vertex set and hence implicitly assume that the relations they represent are equivalence relations.

It is not hard to see that on each transition system $S$ there is a
unique coarsest bisimulation, which we call the \emph{bisimilarity relation} on $S$. The bisimilarity relation can be defined by letting $v$ be \emph{bisimilar} to $w$ if there is a bisimulation $\sim$ such that $v\sim w$; it is then straightforward to verify that bisimilarity is a bisumlation and that all other bisimulations refine it.
We remark that the bisimilarity relation on a transition system is precisely what Paige and Tarjan~\cite{paitar87} call the \emph{coarsest relational partition} of the initial partition given by the labelling. Thus the problem of computing the bisimilarity relation of a given transition system is equivalent to the problem of computing the 
coarsest relational partition considered in \cite{paitar87}.

Note the similarity between a bisimulation and a stable colouring of a
vertex-coloured digraph, which we may view as a transition system with
a labelling $\lambda$ that maps each vertex to its colour. Condition
(i) just says that a bisimulation refines the original colouring, as a
stable colouring is supposed to do as well. Conditions (ii) and (iii),
which are equivalent under the assumption that a bisimulation be an
equivalence relation and hence symmetric, says that if two vertices
$v,w$ are in the same class $C$ then for every other class $D$, either
both $v$ and $w$ have an out-neighbour in $D$ or neither of them has.
Thus instead of refining by the degree in $D$, we just refine by the
Boolean value ``degree at least $1$''. This immediately implies that
the coarsest stable colouring of $S$ refines the coarsest
bisimulation, that is, the bisimilarity
relation, on $S$.

It should be clear from these considerations that the bisimilarity
relation on a transition system $S$ with $n$ vertices and $m$ edges
can be computed in time $O((n+m)\log n)$ by a slight modification of the
partitioning algorithm for computing the coarsest stable colouring
(assuming, of course, that the labels can be computed and compared in
constant time)  \cite{paitar87}.

As for the coarsest stable colouring, we may ask if the bisimilarity
relation can be computed in linear time. It turns out that our lower
bound for colour refinement implies a lower bound for bisimilarity.
Again, we consider the class of \emph{partition refinement algorithms}. As the partition refinement algorithms for colour refinement, partition refinement algorithms for bisimilarity maintain a partition of the set of vertices of the given transition system, and they iteratively refine it using refining operations until a bisimulation is reached. In each {\em refining operation}, such an algorithm chooses a union of current partition cells as {\em refining set} $R$, and chooses another (possibly overlapping) union of partition cells $S$. Cells in $S$ are split up according to the out-neighbourhoods of the vertices in the cells in $R$. That is, two vertices $v,w$ currently in the same cell in $S$ remain in the same cell after the refinement step if and only if for all cells 
$C$ of the partition, with $C\subseteq R$, it holds that
\[
N^+(v)\cap C\neq\emptyset\iff N^+(v')\cap C\neq\emptyset.
\]
Recall that $N^+(v)$ denotes the set of out-neighbours of a vertex $v$ in a directed graph (or transitition system).
The \emph{cost} $\bcost(R,S)$ of such a refinement relation $(R,S)$ is the number of edges from $S$ to $R$. Again, the sum of the costs of all refinement operations is a reasonable lower bound for the running time of a partition refinement algorithm. The \emph{cost} $\bcost(\alpha)$ of a partition $\alpha$ of  the vertex set is then defined as the minimum cost of a sequence of refinement operations that transforms $\alpha$ to the coarsest bisimulation refining it (see Definition~\ref{defi:cost}).

\begin{thm}\label{thm:bisimilarity} For every integer $k\ge 2$, there
  is a transition system $S_k$ with $n\in O(2^kk)$ vertices and
  $m \in O(2^kk^2)$ edges and constant labelling function, such that 
such that $\bcost(\alpha)\in \Omega((m+n)\log n)$, where $\alpha$ is the unit partition of $V(S_k)$.
\end{thm}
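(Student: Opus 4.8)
The approach is to obtain the transition system $S_k$ from the graph $G_k$ of Theorem~\ref{thm:lowerbound} essentially by orienting its edges, and then to rerun the whole argument of Section~\ref{sec:lowerbound} with ``bisimulation'' in place of ``stable partition'' and $\bcost$ in place of $\cost$. The key observation that makes this work is that, although bisimulation refinement only tests whether a vertex has at least one out-neighbour in a cell rather than counting them, in $G_k$ this loses nothing: every split occurring in the analysis of Section~\ref{sec:lowerbound} separates vertices with $0$ neighbours in some cell from vertices with $k$ (or $1$) neighbours there. This is true both for the propagation along the bipartite cascade $\X$--$\XX$--$\YY$--$\Y$ and for the Cai--F\"urer--Immerman-style $\AND_\ell$-gadgets, all of whose relevant multiplicities are $0$ or $1$; hence Lemma~\ref{lem:ANDgadgets} and Corollary~\ref{cor:ANDgadget} go through verbatim for bisimulation refinement once the gadget edges are oriented consistently (out-terminals $\to$ internal vertices $\to$ in-terminals, and in the recursive step the in-terminals of the top $\AND_2$-copy pointing down to the out-terminals of the two $\AND_{\ell-1}$-copies).

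Concretely, I would orient every edge of $G_k$ along the cyclic dependency chain
\[
\Y \longrightarrow \YY \longrightarrow \XX \longrightarrow \X \longrightarrow (\text{out-terminals}) \longrightarrow \cdots \longrightarrow (\text{in-terminals}) \longrightarrow \Y,
\]
that is, $y_i\to y^j_i$, $y^{j_2}_i\to x^{j_1}_i$, $x^j_i\to x_i$, $x_i\to$ its out-terminal in each $\AND_\ell$-gadget, then through each gadget as above, and $b_i\to$ its block in $\Y$; the starting gadget is oriented $x_i\to v_0$ for $i\in\blk^1_0$, $x_i\to v_1$ for $i\in\blk^1_1$, and $v_1\to v_2$. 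With this orientation $v_0$ and $v_2$ are the only sinks, so the first refining operation $(V,V)$ already separates $\{v_0,v_2\}$, after which $v_1$, then $\X^1_0$ and $\X^1_1$, then everything else get separated and the refinement cascades around the chain; this is what replaces the degree-based first step of colour refinement. The point of orienting the $[\XX,\YY]$-edges as $\YY\to\XX$ and the $\Y$--$\YY$ edges as $\Y\to\YY$ is decisive: a vertex of $\YY$ then has out-edges only into $\XX$, so the only way to split a block of $\YY$ is a refining operation whose refining set is a block of $\XX$; and since (just as in Section~\ref{sec:lowerbound}) removing $[\XX,\YY]$ leaves the analogue of $\pi_{Q,\ell}$ stable, this is again the expensive step, with $\bcost(\XX^{\ell+1}_{2q},\YY^\ell_q)$ --- the number of directed edges from $\YY^\ell_q$ into $\XX^{\ell+1}_{2q}$ --- equal to $2^{k-(\ell+1)}k^2$, exactly matching Corollary~\ref{cor:OnlySpecificOperations}.

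With $S_k$ fixed I would redevelop Section~\ref{sec:lowerbound} for bisimulation refinement on $S_k$: define $\tau_{Q,\ell}$ as in Definition~\ref{defi:QlStable}, let $\pi_{Q,\ell}$ be the coarsest bisimulation of $S_k-[\XX,\YY]$ refining $\tau_{Q,\ell}$, and prove in turn the bisimulation analogues of Propositions~\ref{propo:RefOpSafe},~\ref{propo:CoarsestStableUnique},~\ref{propo:PiClosedSubgraphRemainsStable},~\ref{propo:distances},~\ref{propo:CostBoundKey},~\ref{propo:ElementaryIrrelevant} (same proofs, with the Boolean ``positive/zero'' condition replacing ``equal/unequal count'' and directed out-distances replacing distances), of Lemma~\ref{lem:QlStableExists} (the same explicit construction, using the bisimulation version of Lemma~\ref{lem:ANDgadgets}), of Corollary~\ref{cor:OnlySpecificOperations} (every effective elementary refining operation on $\pi_{Q,\ell}$ is $(\XX^{\ell+1}_{2q},\YY^\ell_q)$ or $(\XX^{\ell+1}_{2q+1},\YY^\ell_q)$ for some $q\in Q$, since $S_k-[\XX,\YY]$ leaves $\pi_{Q,\ell}$ stable and the orientation allows no other refining set to touch $\YY$), and of Lemmas~\ref{lem:PartialOrder_Qlpartitions} and~\ref{lem:StableDiscrete} (the coarsest bisimulation of $S_k$ is discrete on $\X$ and therefore refines every $\pi_{Q,\ell}$; the directed-distance and gadget arguments supply what the degree arguments supplied before). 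The telescoping induction from the proof of Theorem~\ref{thm:lowerbound} then yields $\bcost(\alpha)\ge\bcost(\pi_{\blk_0,0})\ge 2^{k-1}k^3$, and since $n\in O(2^kk)$, $m\in O(2^kk^2)$ and hence $\log n\in O(k)$ (Proposition~\ref{propo:Size}), this is $\Omega((m+n)\log n)$.

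The step I expect to be the main obstacle is choosing the orientation so that all three of the following hold at once: (i) bisimulation refinement does not stop immediately --- a transition system in which every vertex has an out-neighbour has the unit partition as a bisimulation, so honest sinks are needed, which is why the starting gadget is oriented to make $v_0,v_2$ sinks; (ii) there is no cheap alternative for splitting the blocks of $\YY$ --- in particular the $\Y$--$\YY$ edges must not be usable to refine $\YY$, which forces them to point $\Y\to\YY$; and (iii) the dependency chain $\YY\to\XX\to\X\to(\text{gadgets})\to\Y\to\YY$ is genuinely forced, so that the bisimulation refinement of $S_k$ is driven through the same $k$ halving stages as the colour refinement of $G_k$. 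Once a consistent orientation with these properties is pinned down, and one verifies that $S_k-[\XX,\YY]$ still makes each $\pi_{Q,\ell}$ a bisimulation, the remainder is a routine transcription of Section~\ref{sec:lowerbound}.
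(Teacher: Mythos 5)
Your proposal is correct and takes essentially the same route as the paper: the paper's $S_k$ is exactly the directed version of $G_k$ with the orientation you describe ($\Y\to\YY\to\XX\to\X\to$ out-terminals $\to$ through the gadgets $\to$ in-terminals $\to\Y$, starting gadget oriented so that $v_0$ and $v_2$ are sinks), with a constant labelling, and the lower-bound machinery of Section~\ref{sec:lowerbound} is carried over unchanged. Your write-up is in fact more detailed than the paper's sketch, in particular the observation that Boolean out-neighbour tests lose nothing here because every relevant split is of the form ``zero versus positive''.
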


\begin{proof}[sketch]
The proof is essentially the same as the proof of
Theorem~\ref{thm:lowerbound}. The transition system $S_k$ is a
directed version of the the graph $G_k$. Figure~\ref{fig:S3}
illustrates the direction of the edges. All vertices get the same
label.

It is not hard to show that the bisimilarity classes of $S_k$ are
exactly the same as the colour classes of $G_k$ in the coarsest stable
colouring and that essentially the refinement steps do the same on
$G_k$ and $S_k$. Thus the lower-bound proof carries
over.
\qed
\end{proof}

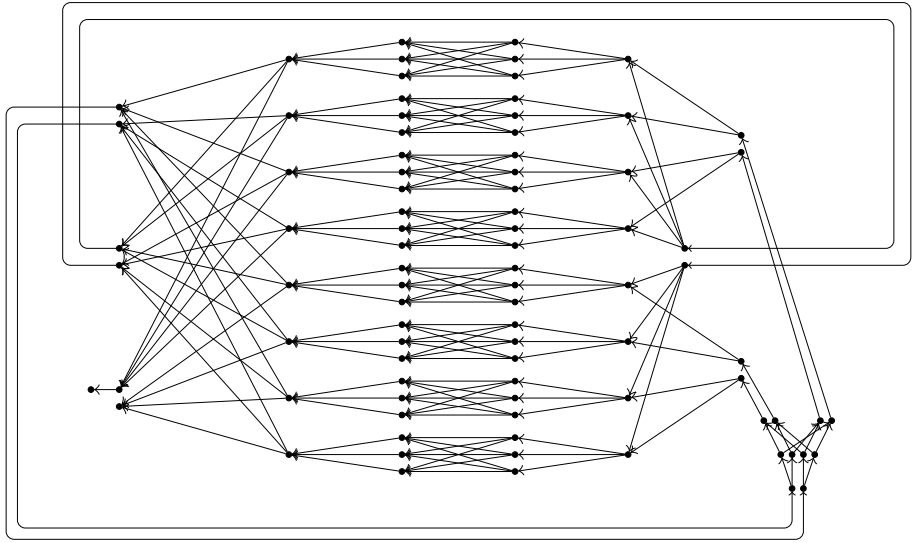
\begin{figure}
\resizebox{\textwidth}{!}{
\begin{tikzpicture}
[knoten/.style={circle,draw=black,fill=black,
	inner  sep=0pt,minimum  size=1mm},
      decoration={
        markings,
        mark=at position 1 with {\arrow[scale=1.5]{>}};
      },
]
\foreach \xmax in {8} {

\foreach \x in {1,...,\xmax} \node[knoten] (x\x) at (2,\x) {};
\foreach \x in {1,...,\xmax} \foreach \y in {1,2,3} 
	\node[knoten] (x\x\y) at (4,\x+\y*0.3-2*0.3) {};
\foreach \x in {1,...,\xmax} \node[knoten] (y\x) at (8,\x) {};
\foreach \x in {1,...,\xmax} \foreach \y in {1,2,3} 
	\node[knoten] (y\x\y) at (6,\x+\y*0.3-2*0.3) {};
	
\foreach \x in {1,...,\xmax} \foreach \y in {1,2,3} {
	\draw[postaction={decorate}](x\x\y)--(x\x);
	\draw[postaction={decorate}] (y\x)--(y\x\y);
	\foreach \z in {1,2,3} \draw[postaction={decorate}](y\x\z)--(x\x\y);
};

\node[knoten] (bO1) at (-1,2-0.15) {};
\node[knoten] (bI1) at (-1,2+0.15) {};
\node[knoten] (bO2) at (-1,4.5-0.15) {};
\node[knoten] (bI2) at (-1,4.5+0.15) {};
\node[knoten] (bO3) at (-1,7-0.15) {};
\node[knoten] (bI3) at (-1,7+0.15) {};

\node[knoten] (start) at  (-1.5,2+0.15) {};

\draw[postaction={decorate}] (bI1) -- (start);

\draw[postaction={decorate}](x1)--(bO1);
\draw[postaction={decorate}](x2)--(bO1);
\draw[postaction={decorate}](x3)--(bO1);
\draw[postaction={decorate}](x4)--(bO1);
\draw[postaction={decorate}](x5)--(bI1);
\draw[postaction={decorate}](x6)--(bI1);
\draw[postaction={decorate}](x7)--(bI1);
\draw[postaction={decorate}](x8)--(bI1);

\draw[postaction={decorate}](x1)--(bO2);
\draw[postaction={decorate}](x2)--(bO2);
\draw[postaction={decorate}](x5)--(bO2);
\draw[postaction={decorate}](x6)--(bO2);
\draw[postaction={decorate}](x3)--(bI2);
\draw[postaction={decorate}](x4)--(bI2);
\draw[postaction={decorate}](x7)--(bI2);
\draw[postaction={decorate}](x8)--(bI2);

\draw[postaction={decorate}](x1)--(bO3);
\draw[postaction={decorate}](x3)--(bO3);
\draw[postaction={decorate}](x5)--(bO3);
\draw[postaction={decorate}](x7)--(bO3);
\draw[postaction={decorate}](x2)--(bI3);
\draw[postaction={decorate}](x4)--(bI3);
\draw[postaction={decorate}](x6)--(bI3);
\draw[postaction={decorate}](x8)--(bI3);

\foreach \s in {1,2} {
	\node[knoten] (a1O\s) at (10,4*\s-1.5-0.15) {};
	\node[knoten] (a1I\s) at (10,4*\s-1.5+0.15) {};
	}
\node[knoten] (a2O) at (9,4.5-0.15) {};
\node[knoten] (a2I) at (9,4.5+0.15) {};

\draw[postaction={decorate}] (a1O1) -- (y1);
\draw[postaction={decorate}] (a1O1) -- (y2);
\draw[postaction={decorate}] (a1I1) -- (y3);
\draw[postaction={decorate}] (a1I1) -- (y4);
\draw[postaction={decorate}] (a1O2) -- (y5);
\draw[postaction={decorate}] (a1O2) -- (y6);
\draw[postaction={decorate}] (a1I2) -- (y7);
\draw[postaction={decorate}] (a1I2) -- (y8);

\foreach  \s in {1,...,4} \draw[postaction={decorate}] (a2O) -- (y\s);
\foreach  \s in {5,...,8} \draw[postaction={decorate}] (a2I) -- (y\s);

\node[knoten] (andO1) at (10.5-0.1,1.6) {};
\node[knoten] (andI1) at (10.5+0.1,1.6){};
\node[knoten] (andO2) at (11.5-0.1,1.6){};
\node[knoten] (andI2) at (11.5+0.1,1.6){};
\node[knoten] (andO3) at (11-0.1,0.4){};
\node[knoten] (andI3) at (11+0.1,0.4){};
\node[knoten] (CFI1) at (11-0.3,1){};
\node[knoten] (CFI2) at (11-0.1,1){};
\node[knoten] (CFI3) at (11+0.1,1){};
\node[knoten] (CFI4) at (11+0.3,1){};
\draw[postaction={decorate}] (andO3) -- (CFI1);
\draw[postaction={decorate}] (andO3) -- (CFI2);
\draw[postaction={decorate}] (andI3) -- (CFI3);
\draw[postaction={decorate}] (andI3) -- (CFI4);
\draw[postaction={decorate}](CFI1)--(andO1);
\draw[postaction={decorate}](CFI3)--(andO1);
\draw[postaction={decorate}](CFI2)--(andI1);
\draw[postaction={decorate}](CFI4)--(andI1);
\draw[postaction={decorate}](CFI2)--(andO2);
\draw[postaction={decorate}](CFI3)--(andO2);
\draw[postaction={decorate}](CFI1)--(andI2);
\draw[postaction={decorate}](CFI4)--(andI2);

\draw[postaction={decorate}] (andO1) -- (a1O1);
\draw[postaction={decorate}] (andI1) -- (a1I1);
\draw[postaction={decorate}] (andO2) -- (a1O2);
\draw[postaction={decorate}] (andI2) -- (a1I2);

\foreach \hoch/\tief in {9/-.5} { 
\draw[<-,rounded corners] (a2O) -- (13,4.5-0.15) -- (13,\hoch) -- (-2,\hoch) -- (-2,4.5-0.15) -- (bO2);
\draw[<-,rounded corners] (a2I) -- (13-0.3,4.5+0.15) -- (13-0.3,\hoch-0.3) -- (-2+0.3,\hoch-0.3) -- (-2+0.3,4.5+0.15) -- (bI2);

\draw[<-,rounded corners] (andI3) -- (11.1,\tief) -- (-3,\tief) -- (-3,7.15) -- (bI3);
\draw[<-,rounded corners] (andO3) -- (11.1-0.2,\tief+0.2) -- (-3+0.2,\tief+0.2) -- (-3+0.2,7.15-0.3) -- (bO3);
}

}
\end{tikzpicture}
}
\caption{The transitions system $S_3$ corresponding to the graph $G_3$
of Figure~\ref{fig:G3}}
\label{fig:S3}
\end{figure}

\subsubsection{Equivalence in 2-Variable Logic}

It is a well-known fact (due to Immerman and Lander~\cite{immlan90}) that colour refinement assigns the same colour to two vertices of a graph if and only if the vertices satisfy the same formulas of the logic $\LC^2$, \emph{two-variable first-order logic with counting}. 

\emph{Two variable first-order logic} $\LL^2$ is the fragment of first order logic consisting of all formulas built with just two variables. For example, the following $\LL^2$-formula $\phi(x)$ in the language of directed graphs says that from vertex $x$ one can reach a sink (a vertex of out-degree $0$) in four steps:
\[
\phi(x):=\exists y(E(x,y)\wedge\exists x(E(y,x)\wedge\exists y(E(x,y)\wedge\exists x(E(y,x)\wedge\forall y\,\neg E(x,y))))).
\]
\emph{Two variable first-order logic with counting} $\LC^2$ is the extension of $\LL^2$ by \emph{counting quantifiers} of the form $\exists^{\ge i}x$, for all $i\ge 1$. For example, the following $\LC^2$-formula $\psi(x)$ in the language of directed graphs says that from vertex $x$ one can reach a vertex of out-degree at least $10$ in four steps:
\[
\psi(x):=\exists y(E(x,y)\wedge\exists x(E(y,x)\wedge\exists y(E(x,y)\wedge\exists x(E(y,x)\wedge\exists^{\ge10}y E(x,y))))).
\]
This formula is not equivalent to any formula of $\LL^2$. Two-variable logics, and more generally finite variable logics, have been studied extensively in finite model theory (see, for example, \cite{ebbflu99,imm99,lib04,grakollib+07}).

We call two vertices of a graph \emph{$\LL^2$-equivalent} (\emph{$\LC^2$-equivalent}) if they satisfy the same formulas of the logic $\LL^2$ ($\LC^2$, respectively).
Now Immerman and Lander's theorem states that for all graphs $G$ (possible coloured and/or directed) and all vertices $v,w\in V(G)$, the vertices $v$ and $w$ have the same colour in the coarsest bi-stable colouring of $G$ if and only if they are $\LC^2$-equivalent. 
(Recall that bi-stable was defined in Section~\ref{ssec:extensions}.)
In particular, this implies that the $\LC^2$-equivalence classes of a graph can be computed in time $O((n+m)\log n)$, but not better (by a partition-refinement algorithm).

On plain undirected graphs, the logic $\LL^2$ is extremely weak. However, on coloured and/or directed graphs, the logic is quite interesting. The $\LL^2$-equivalence relation refines the bisimilarity relation. It is well known that the $\LL^2$-equivalence relation can be computed in time  $O((n+m)\log n)$ by a variant of the colour refinement algorithm. Our lower bounds can be extended to show that it cannot be computed faster by a partition-refinement algorithm.
 
\subsubsection{An Open Problem}

The key idea of the $O((n+m)\log n)$ partitioning algorithms is Hopcroft's idea of processing the smaller half. Hopcroft originally proposed this idea for the minimisation of deterministic finite automata. The algorithm proceeds by identifying equivalent states and then collapsing each equivalence class to a single new state. The partitioning problem (computing classes of equivalent states) is actually just the bisimilarity problem for finite automata, which may be viewed as edge-labelled transition systems. 

However, for DFA-minimisation we only need to compute the bisimilarity relation for \emph{deterministic} finite automata, that is, transition systems where each state has exactly one outgoing edge of each edge label. The systems in our lower bound proof are highly nondeterministic. Thus our lower bounds do not apply.

It remains a very interesting open problem whether similar lower bounds can be proved for DFA-minimisation, or whether DFA-minimisation is possible in linear time. Paige, Tarjan, and Bonic~\cite{paitarbon85} proved that this is possible for DFAs with a single-letter alphabets. To the best of our knowledge, the only known result in this direction is a family of examples due to Berstel and Carton~\cite{bercar04} (also see~\cite{casressci09,berboacar09}) showing that the $O(n\log n)$ bound for Hopcroft's original algorithm is tight.

\subsubsection*{Acknowledgements}
We thank Christof L\"oding for discussions and references regarding the lower bounds for bisimilarity and DFA-minimisation.

\bibliography{cr}

\end{document}